\newif\ifsubmission
\newif\ifnotes
\newcommand{\authnote}[3]{\textcolor{#3}{[{\footnotesize {\bf #1:} { {#2}}}]}}
\newcommand{\andrea}[1]{\ifnotes \authnote{A}{#1}{magenta} \fi}
\newcommand{\gray}[1]{\textcolor{gray}{\mathsf{#1}}}
\newtheorem{axiom}[theorem]{Axiom}
\newtheorem{physicsaxiom}[theorem]{Physics Axiom}
\newtheorem{importedtheorem}[theorem]{Imported Theorem}
\newtheorem{informaltheorem}[theorem]{Informal Theorem}
\newtheorem{physicstheorem}[theorem]{Physical Theorem}
\newtheorem{fact}[theorem]{Fact}
\newtheorem{construction}[theorem]{Construction}
\Crefname{importedtheorem}{Imported Theorem}{Imported Theorems}
\Crefname{theorem}{Theorem}{Theorems}
\Crefname{proposition}{Proposition}{Propositions}
\Crefname{claim}{Claim}{Claims}
\Crefname{lemma}{Lemma}{Lemmas}
\Crefname{conjecture}{Conjecture}{Conjectures}
\Crefname{corollary}{Corollary}{Corollaries}
\Crefname{construction}{Construction}{Constructions}
\Crefname{property}{Property}{Properties}
\theoremstyle{definition}
\newtheorem{assumption}[theorem]{Assumption}
\newtheorem{notation}[theorem]{Notation}
\Crefname{definition}{Definition}{Definitions}
\Crefname{assumption}{Assumption}{Assumptions}
\Crefname{notation}{Notation}{Notations}
\theoremstyle{remark}
\newtheorem{comment}[theorem]{Comment}
\Crefname{question}{Question}{Questions}
\Crefname{remark}{Remark}{Remarks}
\Crefname{comment}{Comment}{Comments}
\Crefname{fact}{Fact}{Facts}
\Crefname{step}{Step}{Steps}
\newtheorem{theorem}{Theorem}[section]
\newtheorem{claim}[theorem]{Claim}
\newtheorem{lemma}[theorem]{Lemma}
\newtheorem{corollary}[theorem]{Corollary}
\Crefname{importedtheorem}{Imported Theorem}{Imported Theorems}
\Crefname{theorem}{Theorem}{Theorems}
\Crefname{proposition}{Proposition}{Propositions}
\Crefname{claim}{Claim}{Claims}
\Crefname{lemma}{Lemma}{Lemmas}
\Crefname{conjecture}{Conjecture}{Conjectures}
\Crefname{corollary}{Corollary}{Corollaries}
\Crefname{construction}{Construction}{Constructions}
\Crefname{property}{Property}{Properties}
\theoremstyle{definition}
\newtheorem{definition}[theorem]{Definition}
\Crefname{definition}{Definition}{Definitions}
\Crefname{assumption}{Assumption}{Assumptions}
\Crefname{notation}{Notation}{Notations}
\theoremstyle{remark}
\Crefname{question}{Question}{Questions}
\Crefname{remark}{Remark}{Remarks}
\Crefname{comment}{Comment}{Comments}
\Crefname{fact}{Fact}{Facts}
\Crefname{step}{Step}{Steps}
\newcommand{\secp}{\lambda}
\def\cA{{\cal A}}
\def\cC{{\cal C}}
\def\cD{{\cal D}}
\def\cE{{\cal E}}
\def\cF{{\cal F}}
\def\cQ{{\cal Q}}
\def\cR{{\cal R}}
\def\cS{{\cal S}}
\def\bbN{{\mathbb N}}
\newcommand{\ba}{\mathbf{a}}
\newcommand{\bm}{\mathbf{m}}
\newcommand{\bx}{\mathbf{x}}
\newcommand{\bv}{\mathbf{v}}
\newcommand{\bw}{\mathbf{w}}
\newcommand{\by}{\mathbf{y}}
\newcommand{\bz}{\mathbf{z}}
\newcommand{\brho}{\boldsymbol{\rho}}
\newcommand{\bsigma}{\boldsymbol{\sigma}}
\newcommand{\bpsi}{\boldsymbol{\psi}}
\def\poly{{\rm poly}}
\def\negl{{\rm negl}}
\newcommand{\com}{\mathsf{com}}
\newcommand{\proref}[1]{Protocol~\protect\ref{#1}}
\newenvironment{boxfig}[2]{\begin{figure}[#1]\fbox{
    \begin{minipage}{\linewidth}
    \vspace{0.2em}\makebox[0.025\linewidth]{}    \begin{minipage}{0.95\linewidth}{{#2 }}
    \end{minipage}\vspace{0.2em}\end{minipage}}}{\end{figure}}
\newcommand{\pprotocol}[4]{
\begin{boxfig}{h!}{
\begin{center}
\textbf{#1}
\end{center}
    #4
\vspace{0.2em} } \caption{\label{#3} #2}
\end{boxfig}
}
\newcommand{\protocol}[4]{
\pprotocol{#1}{#2}{#3}{#4} }
\newcommand{\st}{\mathsf{st}}
\newcommand{\out}{\mathsf{out}}
\newcommand{\zo}{\{0,1\}}
\newcommand{\view}{\mathsf{View}}
\newcommand{\fot}{\mathcal{F}_{\mathsf{OT}}}
\newcommand{\simu}{\mathsf{Sim}}
\newcommand{\commit}{\mathsf{Commit}}
\newcommand{\equivsim}{\mathcal{Q}}
\newcommand{\open}{\mathsf{open}}
\newcommand{\extcommit}{\mathsf{Ext.Commit}}
\newcommand{\extdecommit}{\mathsf{Ext.Decommit}}
\newcommand{\final}{\mathsf{final}}
\renewcommand{\partial}{\mathsf{partial}}
\newcommand{\abort}{\mathsf{abort}}
\newcommand{\eecommit}{\mathsf{EECommit}}
\newcommand{\eedecommit}{\mathsf{EEDecommit}}
\newcommand{\hyb}{\ensuremath{\mathsf{Hybrid}}}
\begin{document}
\title{One-Way Functions Imply Secure Computation\\ in a Quantum World}
\author{James Bartusek\thanks{UC Berkeley. Email: \texttt{bartusek.james@gmail.com}} \and Andrea Coladangelo\thanks{UC Berkeley. Email: \texttt{andrea.coladangelo@gmail.com}} \and Dakshita Khurana\thanks{UIUC. Email: \texttt{dakshita@illinois.edu}} \and Fermi Ma\thanks{Princeton University and NTT Research. Email: \texttt{fermima@alum.mit.edu}}}
\date{}
\maketitle


\begin{abstract}

We prove that quantum-hard one-way functions imply \emph{simulation-secure} quantum oblivious transfer (QOT), which is known to suffice for secure computation of arbitrary quantum functionalities. Furthermore, our construction only makes {\em black-box} use of the quantum-hard one-way function.

Our primary technical contribution is a construction of \emph{extractable and equivocal} quantum bit commitments based on the black-box use of quantum-hard one-way functions in the standard model. 
Instantiating the Crépeau-Kilian (FOCS 1988) framework with these commitments yields simulation-secure QOT. 

\end{abstract}

\ifsubmission
\else
\newpage
{
  \hypersetup{linkcolor=Violet}
  \setcounter{tocdepth}{2}
  \tableofcontents
}
\newpage
\fi


\section{Introduction}
\label{sec:intro}

The complexity of cryptographic primitives is central to the study of cryptography. Much of the work in the field focuses on establishing \emph{reductions} between different primitives, typically building more sophisticated primitives from simpler ones. Reductions imply relative measures of complexity among different functionalities, and over the years have resulted in an expansive hierarchy of assumptions and primitives, as well as separations between them.

\ifsubmission
One-way functions (OWFs) lie at the center of cryptographic complexity: their existence is the minimal assumption necessary for nearly all classical cryptography~\cite{STOC:LubRac86,FOCS:ImpLub89,STOC:ImpLevLub89}. One-way functions are equivalent to so-called ``minicrypt'' primitives like pseudorandom generators, pseudorandom functions and symmetric encryption; 
but provably cannot imply key exchange when used in a black-box way~\cite{C:ImpRud88,C:BarMah09}.
Thus, the existence of key exchange is believed to be a stronger assumption than the existence of one-way functions. Oblivious transfer (OT) 
is believed to be \emph{even stronger}: it implies key exchange, but cannot be obtained from black-box use of a key exchange protocol~\cite{TCC:MahMajPra14}.
\else
One-way functions (OWFs) lie at the center of cryptographic complexity: their existence is the minimal assumption necessary for nearly all classical cryptography~\cite{STOC:LubRac86,FOCS:ImpLub89,STOC:ImpLevLub89}. One-way functions are equivalent to so-called ``minicrypt'' primitives like pseudorandom generators, pseudorandom functions and symmetric encryption; 
but provably cannot imply key exchange when used in a black-box way~\cite{C:ImpRud88,C:BarMah09}.\footnote{In particular,~\cite{C:ImpRud88,C:BarMah09} showed that there cannot exist a key exchange protocol that only has oracle access to the input/output behavior of a one-way function, and makes no additional assumptions. Then~\cite{TCC:Dachman-Soled16} ruled out the possibility of certain types of key exchange protocols that also make use of the {\em code} of a one-way function.  Constructions of key exchange from one-way functions have eluded researchers for decades. This, combined with the aforementioned negative results, is considered to be evidence that key exchange is a qualitatively stronger primitive than one-way functions in the classical regime. In fact, Impagliazzo~\cite{Impagliazzo} stipulates that we live in one of five possible worlds, of which {\em Minicrypt} is one where classical one-way functions exist but classical public-key cryptography does not.}
Thus, the existence of key exchange is believed to be a stronger assumption than the existence of one-way functions. Oblivious transfer (OT) 
is believed to be \emph{even stronger}: it implies key exchange, but cannot be obtained from black-box use of a key exchange protocol~\cite{TCC:MahMajPra14}.\fi

The importance of OT stems from the fact that it can be used to achieve secure computation, which is a central cryptographic primitive with widespread applications.
In a nutshell, secure computation allows mutually distrusting participants to compute any public function over their joint private inputs while revealing no private information beyond the output of the computation.



\paragraph{The Quantum Landscape.}
The landscape of cryptographic possibilities changes significantly when participants have quantum computation and communication capabilities. 
For one, {\em unconditionally} secure key distribution --- commonly known as \emph{quantum key distribution} (QKD) --- becomes possible \cite{BenBra84}. 
Moreover, 
\emph{quantum} oblivious transfer (QOT) is known to be achievable from special types of commitments, as we discuss next.

Crépeau and Kilian~\cite{FOCS:CreKil88} first proposed a protocol for QOT using quantum bit commitments.
The central idea in these QKD and QOT protocols is the use of (what are now known as) ``BB84 states''. These are single qubit states encoding either $0$ or $1$ in either the computational or Hadamard basis. Crucially, measuring (or essentially attempting to copy the encoded bit) in the wrong basis completely destroys information about the encoded bit. Then~\cite{C:BBCS91} presented a transmission-error resistant version of the~\cite{FOCS:CreKil88} protocol. These protocols did not come with a proof of security, but subsequently 
Mayers and Salvail~\cite{MayersSalvail94} proved that the~\cite{FOCS:CreKil88} protocol is secure against a restricted class of attackers that only perform single-qubit measurements. This was later improved by Yao~\cite{STOC:Yao95}, who extended the~\cite{MayersSalvail94} result to handle general quantum adversaries.

By an unfortunate historical accident, the aforementioned security proofs claimed the \cite{FOCS:CreKil88} QOT could be \emph{information-theoretically secure}, since at the time it was believed that information-theoretic quantum bit commitment was possible~\cite{FOCS:BCJL93}. Several years later, Mayers~\cite{Mayers97} and Lo and Chau~\cite{LoChau97} independently proved the impossibility of information-theoretic quantum bit commitment, and as a consequence, the precise security of~\cite{FOCS:CreKil88} QOT was once again unclear. This state of affairs remained largely unchanged until 2009, when Damgard, Fehr, Lunemann, Salvail, and Schaffner \cite{C:DFLSS09} proved that bit commitment schemes satisfying certain additional properties, namely \emph{extraction and equivocation}, suffice to instantiate~\cite{FOCS:CreKil88} QOT.~\cite{C:DFLSS09} called their commitments {\em dual-mode} commitments, and provided a construction based on the quantum hardness of the learning with errors (QLWE) assumption. We remark that assumptions about the hardness of specific problems like QLWE are qualitatively even worse than general assumptions like QOWFs and QOT. Thus, the following basic question remains:
\begin{center}
        {\em Do quantum-hard one-way functions suffice for quantum oblivious transfer?}
\end{center}

\paragraph{Quantum OT: The Basis of Secure Quantum Computation.}
There is a natural extension of secure computation to the quantum world, where Alice and Bob wish to compute a \emph{quantum} circuit on (possibly entangled) \emph{quantum} input states.
This setting,
usually referred to as secure \emph{quantum} computation, has been previously studied and in fact has a strong tradition in the quantum cryptography literature. 


\cite{STOC:CreGotSmi02,FOCS:BCGHS06} constructed unconditional maliciously-secure \emph{multi-party} quantum computation with honest majority. The setting where half (or more) of the players are malicious requires computational assumptions due to the impossibility of unconditionally secure quantum bit commitment~\cite{Mayers97,LoChau97}. 

In this computational setting,~\cite{C:DupNieSal10,DNS12} showed the feasibility of two-party quantum computation (2PQC) assuming post-quantum OT.
More recently,~\cite{EC:DGJMS20} constructed maliciously-secure general multi-party quantum computation (MPQC) secure against a \emph{dishonest} majority from any maliciously-secure post-quantum multi-party computation (MPC) protocol for classical functionalities, which can itself be obtained from post-quantum OT~\cite{ABGKM20}. 

Nevertheless, the following natural question has remained unanswered:
\begin{center}
{\em Can secure (quantum) computation be obtained from quantum-hard one-way functions?}
\end{center}
%

\subsection{Our Results}
\label{subsec: our results}
Our main result is the following:
\begin{center}
{\em 
Quantum oblivious transfer can be based on the 
assumption that quantum-hard one-way functions exist. }
\end{center}

In fact, we prove a stronger result: 
we show that quantum oblivious transfer can be based on the {\em black-box use of any statistically binding, quantum computationally hiding commitment}. Such commitments can be based on the black-box use of quantum-hard one-way functions.
This in turn implies secure two-party computation of classical functionalities, in the presence of quantum computation and communication capabilities, from (black-box use of) quantum-hard one-way functions~\cite{STOC:Kilian88}. The latter can then be used to obtain secure two-party \emph{quantum} computation, by relying on the work of~\cite{DNS12}.
Quantum OT can also be used to obtain {\em multi-party} secure computation of all classical functionalities, in the presence of quantum computation and communication capabilities, and additionally assuming the existence of authenticated channels. This follows from the techniques in~\cite{STOC:Kilian88,C:CreVanTap95,C:DamIsh05,C:IshPraSah08} which obtain classical MPC based on black-box use of any OT protocol.
By relying on~\cite{EC:DGJMS20}, this also implies multi-party secure {\em quantum} computation. 

In summary, our main result implies that: (1) 2PQC can be obtained from (black-box use of) quantum-hard OWFs and  (2) assuming the existence of authenticated channels, MPQC can be obtained from (black-box use of) quantum-hard OWFs.

This gives a potential separation between the complexity of cryptographic primitives in the classical and quantum worlds. In the former, (two-party) secure computation provably cannot be based on black-box use of quantum-hard one-way functions. It is only known from special types of enhanced public-key encryption schemes or from the hardness of specific problems, both of which are believed to be much stronger assumptions than one-way functions. But in the quantum world, prior to our work, (two-party) secure computation was only known from the special commitments required in the protocol of \cite{C:DFLSS09}, which can be based on QLWE following~\cite{C:DFLSS09}, or post-quantum OT  (implicit in~\cite{C:HalSmiSon11,STOC:BS20,ABGKM20}) --- but were not known to be achievable from quantum-hard one-way functions.


\paragraph{On the Significance of the Black-Box use of Cryptography in the Quantum Setting.}
Making black-box use of a cryptographic primitive refers to only having oracle access to its input/output behavior, without having the ability to examine the actual code (i.e., representation as a sequence of symbols) of the primitive. For instance, proving in zero-knowledge that a committed value satisfies some given predicate often requires explicit knowledge of the commitment algorithm; classifying the resulting proof as making ``non-black-box'' use of the one-way function. 
In the literature, constructions that make black-box use of cryptographic primitives are often  preferred over those that make non-black-box use of cryptography. Besides their conceptual simplicity and elegance, black-box constructions are also of practical interest since they avoid expensive NP reductions involving circuits of primitives.
Perhaps most importantly, in the case of black-box constructions, one can instantiate the underlying primitive with an arbitrary implementation, including physical implementations via secure hardware or those involving quantum communication.

In many quantum protocols, which involve quantum states being transferred between two or more players, black-box constructions are not only significantly preferable but often become {\em a necessity}. Let us illustrate this with an example. The GMW protocol~\cite{C:GolMicWig86} first showed that secure multi-party computation can be based on any oblivious transfer protocol; however the protocol involved zero-knowledge proofs involving the description of the (classical) oblivious transfer protocol. Due to the GMW~\cite{C:GolMicWig86} protocol being non-black-box in the underlying OT, our OT protocols cannot be used with GMW to obtain multi-party computation of classical functionalities. We instead need to rely on compilers like~\cite{STOC:Kilian88,C:CreVanTap95,C:DamIsh05,C:IshPraSah08} that only make {\em black-box use} of the underlying OT protocol. As discussed above, the black-box nature of these compilers makes them applicable irrespective of whether they are instantiated with classical or QOT.

In a similar vein, we believe that our {\em black-box} use of any statistically binding, quantum computationally hiding commitment in our QOT protocol is of particular significance. 
For instance, one can substitute our statistically binding, quantum computationally hiding commitment with an unconditionally secure one in the quantum random oracle model~\cite{C:Unruh14}, resulting in {\em unconditional quantum OT in the quantum random oracle model}.
Moreover if in the future, new constructions of statistically binding, quantum computationally hiding commitments involving quantum communication are discovered based on assumptions weaker than quantum-hard one-way functions, it would be possible to plug those into our protocol compilers to obtain QOT. 
These applications would not have been possible had we required non-black-box use of the underlying commitment.

\paragraph{Primary Tool: Stand-alone Extractable and Equivocal Commitments.}
As discussed earlier, \cite{C:DFLSS09} show that simulation-secure QOT can be obtained from commitments satisfying certain properties, namely \emph{extraction} and \emph{equivocation}. 
\begin{itemize}
    \item At a high level, extraction requires that there exist an efficient quantum ``extractor'' that is able to extract a committed message from any quantum committer.  
    \item Equivocality requires that there exist an efficient quantum ``equivocator'' capable of simulating an interaction with any quantum receiver such that it can later open the commitment to any message of its choice.
\end{itemize}

These two properties are crucial for proving simulation security of the \cite{FOCS:CreKil88} OT protocol: extraction implies receiver security and equivocality implies sender security\footnote{It is important to note that extraction and equivocation are only made possible in an ideal world where a simulator has access to the adversary's state. Participants in the real protocol cannot access each others' state, which prevents them from extracting or equivocating.}. Our key technical contribution is the following:
\begin{center}
    {\em Extractable and equivocal commitments can be based on the black-box use of\\ quantum-hard one-way functions.}
\end{center}
We obtain this result via the following transformations, each of which only makes \emph{black-box} use of the underlying primitives.
\begin{itemize}
    \item {\em Step 1: Quantum Equivocal Commitments from Quantum-Hard One-Way Functions.} We describe a generic unconditional compiler to turn any commitment into an equivocal commitment in the plain model. 
    By applying our compiler to Naor's statistically binding commitment~\cite{JC:Naor91} --- which can be based on quantum-hard one-way functions --- we obtain a statistically binding, equivocal commitment.

    \item {\em Step 2: Quantum Extractable Commitments from Quantum Equivocal Commitments.}
    We show that  the~\cite{FOCS:CreKil88,C:DFLSS09,C:BouFeh10} framework can be used to obtain an extractable commitment that leverages quantum communication, and can be based on the existence of any quantum equivocal commitment.
    This combined with the previous step implies the existence of quantum extractable commitments based on the existence of quantum-hard one-way functions. 
    
    This is in contrast to existing approaches (eg.,~\cite{C:HalSmiSon11}) that require classical communication but rely on qualitatively stronger assumptions like classical OT with post-quantum security.
    \item {\em Step 3: From Extractable Commitments to Extractable and Equivocal Commitments.} We apply the black-box equivocality compiler from the first step to the quantum extractable commitment obtained above, to produce an extractable and equivocal commitment. 
    
    We point out that it is generally straightforward to make a classical commitment equivocal using zero-knowledge proofs, but this approach does not apply to quantum commitment protocols. We therefore devise our own equivocality compiler capable of handling quantum commitments and use it in both Step 1 and Step 3.

\end{itemize}

Plugging our quantum extractable and equivocal commitments into the~\cite{FOCS:CreKil88} framework yields a final QOT protocol with an interaction pattern that readers familiar with~\cite{BenBra84,FOCS:CreKil88} may find interesting: the sender sends the receiver several BB84 states, after which the receiver proves to the sender that it has honestly measured the sender's BB84 states by  \emph{generating more BB84 states of their own and asking the sender to prove that they have measured the receiver's BB84 states.}
An intriguing open question is whether obtaining QOT from one-way functions \emph{requires} this type of two-way quantum communication or, alternatively, quantum memory.\footnote{Naive approaches to removing one direction of quantum communication appear to require the honest parties to be entangled and subsequently perform quantum teleportation.}

\subsection{Related Work}
\label{subsec:related work}

For some readers, it may appear that the central claim of this work --- that quantum-hard one-way functions suffice for oblivious transfer --- has already been established~\cite{C:DFLSS09,C:BouFeh10}. Indeed, prior work~\cite{C:DFLSS09,C:BouFeh10} showed that statistically binding and computational hiding commitments (which are weaker than extractable and equivocal commitments), known to exist from one-way functions, can be plugged into the~\cite{FOCS:CreKil88} template to achieve an oblivious transfer protocol satisfying \emph{indistinguishability-based} security. 

However, the indistinguishability-based security definition for oblivious transfer is \emph{not} standard in the cryptographic literature. When cryptographers refer to ``oblivious transfer'', they almost always mean the standard \emph{simulation-based} security notion. Indeed, the fundamental importance of oblivious transfer in modern cryptography is due to the fact that it is necessary and sufficient for secure computation, but this is only true for the simulation-based notion.

\ifsubmission
\else
Therefore, the goal of this work is to build simulation-secure oblivious transfer, and thereby establish that quantum-hard one-way functions suffice for secure quantum computation. We elaborate further on the details of simulation security in~\cref{sec:techniques}.

We also note that in the setting of Universal Composability (UC), Unruh~\cite{C:Unruh07} obtained quantum oblivious transfer from quantum UC-secure commitments in the common reference string (CRS) model. By contrast, in this work, we focus on the standalone setting in the plain/standard model, and we do not assume the existence of trusted setup or a common reference string. In this setting, we obtain quantum oblivious transfer from quantum-hard one-way functions. 
\fi

\subsection{Concurrent and Independent Work}
In a concurrent and independent work, Grilo, Lin, Song, and Vaikuntanathan~\cite{GLSV} also construct simulation-secure quantum oblivious transfer from quantum-hard one way functions via the intermediate primitive of extractable and equivocal commitments. However, the two works take entirely different approaches to constructing these commitments. We briefly summarize these strategies below.\\

\noindent\underline{This work:}
\begin{enumerate}
\item Construct equivocal commitments from statistically binding commitments via a new ``equivocality compiler'' based on Watrous~\cite{STOC:Watrous06} rewinding.
\item Construct extractable commitments from equivocal commitments via a new “extractability compiler” based on the~\cite{FOCS:CreKil88} template.
\item Construct extractable and equivocal commitments from extractable commitments via the same compiler from Step 1.
\end{enumerate}

\noindent\underline{\cite{GLSV}}:
\begin{enumerate}
\item Construct selective opening secure commitments with inefficient simulation against malicious committers from statistically binding commitments and zero-knowledge proofs.
\item Construct QOT with inefficient simulation against malicious receivers from selective opening secure commitments with inefficient simulation against malicious committers, following the~\cite{FOCS:CreKil88} QOT template.\footnote{\cite{GLSV} point out that the conclusions of Steps 1 and 2 together had also been established in prior works of~\cite{C:Unruh07,C:DFLSS09,C:BouFeh10}.}
\item Construct parallel QOT with inefficient simulation against malicious receivers from (stand-alone) QOT with inefficient simulation against malicious receivers via a new lemma for parallel repetition of protocols.
\item Construct verifiable conditional disclosure of secrets, a new primitive introduced in~\cite{GLSV}, from parallel QOT with inefficient simulation against malicious receivers, statistically binding commitments, Yao’s garbled circuits, and zero-knowledge proofs.
\item Construct extractable commitments from verifiable conditional disclosure of secrets, statistically binding commitments, and zero-knowledge proofs.
\item Construct extractable and equivocal commitments from extractable commitments and zero-knowledge proofs.
\end{enumerate}

We believe that our result is easier to understand and conceptually simpler, as we do not need to define additional primitives beyond extractable and/or equivocal commitments. Aside from differences in approach, there are several other places where the results differ:
\begin{itemize}
\item \textbf{This Work: Black-Box Use of One-Way Functions.} A significant advantage of our work over~\cite{GLSV} is that we construct quantum OT from {\em black-box use} of statistically binding commitments or one-way functions. The OT in~\cite{GLSV} makes non-black-box use of the underlying one-way function.
As discussed above, making black-box use of underlying cryptographic primitives is particularly useful in the quantum setting. Due to the extensive use of zero-knowledge proofs and garbled circuits in~\cite{GLSV}, it appears difficult to modify their approach to be black-box in the underlying one-way function.
\item \textbf{This Work: One-Sided Statistical Security.} Additionally, our oblivious transfer protocol offers {\em one-sided statistical security}. As written, our quantum OT protocol satisfies statistical security against malicious senders (and computational security against malicious receivers). Moreover, this OT can be reversed following the techniques in eg.,~\cite{EC:WolWul06} to obtain a quantum OT protocol that satisfies statistical security against malicious receivers (and computational security against malicious senders). 
On the other hand, the quantum OT protocol in~\cite{GLSV} appears to achieve computational security against both malicious senders and malicious receivers.
\item \textbf{\cite{GLSV}: Verifiable Conditional Disclosure of Secrets.} Towards achieving their main result, \cite{GLSV} introduce and construct verifiable conditional disclosure of secrets (vCDS). This primitive may be of independent interest.
\item \textbf{\cite{GLSV}: Constant Rounds in the CRS Model.} While both works construct $\poly(\secp)$-round protocols in the plain model, \cite{GLSV} additionally construct a constant round OT protocol in the CRS model based on (non-black-box use of) quantum-hard one-way functions. 

\ifsubmission
\emph{In an earlier version of this work, we did not consider the CRS model. After both works were posted to the Cryptology ePrint Archive, 
we realized that our techniques could be straightforwardly adapted to achieve constant round complexity in the CRS model, while still making black-box use of one-way functions. However, unlike \cite{GLSV}, our CRS is non-reusable. For the interested reader, we sketch how this can be achieved in the full version. 
}
\else
\emph{In an earlier version of this work, we did not consider the CRS model. After both works were posted to the Cryptology ePrint Archive, 
we realized that our techniques could be adapted straightforwardly to achieve constant round complexity in the CRS model, while still making black-box use of one-way functions. We elaborate further on this below.}
\fi
\end{itemize}

\ifsubmission
\else
\paragraph{Constant Rounds in the CRS Model via Our Techniques.} We now briefly sketch how the protocols in this work can be modified to achieve constant rounds in the CRS model. The following two paragraphs are the only component of this work that was not developed concurrently and independently of \cite{GLSV}.

In a nutshell, the reason why our protocols require super-constant rounds in the plain model is because they perform sequential repetitions of specific $\Sigma$-protocols to enable rewinding.
In the presence of a CRS, it is possible to repeat our protocols in parallel and simulate by means of a trapdoor embedded in the CRS. This avoids the need for rewinding (and therefore sequential repetition), and would bring down the number of rounds from $O(\lambda)$ to $O(1)$ in the CRS model.

In fact, constant round QOT can also be based on the {\em black-box use} of quantum-hard one-way functions in the presence of a CRS by slightly modifying our template, as follows.
Recall that the first step towards getting a constant round protocol is to replace (all of) our sequential repetitions with parallel repetitions.
The challenge for the parallel repetitions can be determined as the outcome of a (one-sided simulatable) coin toss. This coin toss can be carefully designed so that a simulator that programs the CRS can program the output of the coin toss, which would allow the simulator to program the challenge message for the parallel repetition. 
In the CRS model, such a coin toss can be obtained by having one player commit to a random set of coins via Naor's commitment~\cite{JC:Naor91} (where the first message of the commitment is the CRS). Next, the second player sends their random coins, in the clear, to the first player. Finally, the first player decommits their initial commitment string, and the players set the output of the coin toss to be the XOR of the random coins chosen by each player. A simulator can program the output of the coin toss by programming the CRS to equivocate Naor's commitment. This programming would allow one to simulate the parallel repeated versions of our protocols. 
We note that this protocol has the drawback of the CRS not being reusable, unlike the CRS-based protocol in~\cite{GLSV}.
We stress that our focus is on the plain model, and we leave formal explorations/optimizations of this and related  approaches to building black-box OT in the CRS model to future work.

\fi

\section{Technical Overview}
\label{sec:techniques}

This work establishes that (1) black-box use of post-quantum one-way functions suffices for post-quantum \emph{extractable and equivocal} commitment schemes and moreover, that (2)~\cite{FOCS:CreKil88} quantum oblivious transfer instantiated with such commitments is a standard \emph{simulation-secure} oblivious transfer. Crucially, the standard notion of simulation-secure (quantum) oblivious transfer that we achieve is sequentially composable and suffices to achieve general-purpose secure quantum computation.
Before explaining our technical approach, we provide a complete review of the original~\cite{FOCS:CreKil88} protocol.

\subsection{Recap: Quantum Oblivious Transfer from Commitments}
\label{subsec:bbcs}

In quantum oblivious transfer (QOT), a quantum sender holding two classical messages $m_0,m_1$ engages in an interactive protocol over a quantum channel with a quantum receiver holding a classical choice bit $b$. Correctness requires the receiver to learn $m_b$ by the end of the protocol.
Informally, security demands that a malicious receiver only learn information about one of $m_0,m_1$, and that a malicious sender learn nothing about $b$.
Somewhat more formally, as discussed earlier, our focus is on the standard simulation-based notion of security. This stipulates the existence of an efficient quantum simulator that generates the view of an adversary (sender/receiver) when given access to an ideal OT functionality. In particular, when simulating the view of a malicious sender, this simulator must extract the sender's inputs $(m_0, m_1)$ without knowledge of the receiver's input $b$. And when simulating the view of a malicious receiver, the simulator must extract the receiver's input $b$, and then simulate the receiver's view given just $m_b$.

We recall the construction of quantum oblivious transfer due to~\cite{FOCS:CreKil88} (henceforth CK88), which combines the information theoretic quantum key distribution protocol of~\cite{BenBra84} (henceforth BB84) with cryptographic bit commitments.

\paragraph{CK88 First Message.} The first message of the CK88 protocol exactly follows the beginning of the BB84 protocol. For classical bits $y,z$, let $\ket{y}_z$ denote $\ket{y}$ if $z = 0$, and $(\ket{0} + (-1)^y\ket{1})/\sqrt{2}$ if $z = 1$, i.e. the choice of $z$ specifies whether to interpret $y$ as a computational or Hadamard basis vector. Let $\lambda$ denote the security parameter. The sender samples two random $2\lambda$-bit strings $x$ and $\theta$, and constructs ``BB84 states'' $\ket{x_i}_{\theta_i}$ for $i \in [2\lambda]$. The sender forwards these $2\lambda$ BB84 states $(\ket{x_i}_{\theta_i})_{i \in [2\lambda]}$ to the receiver. Next, the receiver samples a $2\lambda$-bit string $\hat{\theta}$, measures each $\ket{x_i}_{\theta_i}$ in the basis specified by $\hat{\theta}_i$, and obtains a $2\lambda$-bit measurement result string $\hat{x}$.

\paragraph{CK88 Measurement-Check Subprotocol.} At this point, the CK88 and BB84 protocols diverge. Since the BB84 protocol is an interaction between two \emph{honest} parties, it assumes the parties comply with the protocol instructions. However, in the CK88 protocol, a malicious receiver who does not measure these BB84 states will be able to compromise sender privacy later in the protocol. Therefore, the next phase of CK88 is a measurement-check subprotocol designed to catch a malicious receiver who skips the specified measurements. This subprotocol requires the use of a quantum-secure classical commitment scheme; for the purposes of this recap, one should imagine a commitment with idealized hiding and binding properties. The subprotocol proceeds as follows:
\begin{itemize}
\item For each $i \in [2\lambda]$, the receiver commits to $(\hat{\theta}_i,\hat{x}_i)$.
\item Next, the sender picks a random set $T$ of $\lambda$ indices from $[2\lambda]$, and challenges the receiver to open the corresponding commitments.
\item The receiver sends $(\hat{\theta}_i,\hat{x}_i)$ along with the corresponding opening for each $i \in T$. 
\item The sender verifies each commitment opening, and furthermore checks that $\hat{x}_i = x_i$ for each $i \in T$ where $\hat{\theta}_i = \theta_i$. If any of these checks fail, the sender aborts.
\end{itemize}

The rough intuition for the subprotocol is simple: from the receiver's point of view, the BB84 states are maximally mixed and therefore completely hide $x_i$ and $\theta_i$. For any index $i$ that the receiver does not measure, it must guess $\hat{x}_i$. From the receiver's perspective, the sender checks $\hat{x}_i$ against $x_i$ if two $1/2$-probability events occur: (1) $i$ is included in $T$, and (2) $\hat{\theta}_i = \theta_i$. This means a malicious receiver who skips a significant number of measurements will be caught with overwhelming probability.

\paragraph{CK88 Privacy Amplification.} If all the subprotocol checks pass, the sender continues to the final stage of the CK88 protocol. For convenience, relabel the $\lambda$ indices in $[2\lambda] \setminus T$ from $1$ to $\lambda$; all indices corresponding to opened commitments are discarded for the remainder of the protocol.

For each $i \in [\lambda]$, the sender reveals the correct measurement basis $\theta_i$. The receiver then constructs the index set $I_b$ --- where $b$ is its choice bit for the oblivious transfer --- as the set of all $i \in [\lambda]$ where $\theta_i = \hat{\theta}_i$. It sets $I_{1-b}$ to be the remaining indices, and sends $(I_0, I_1)$ to the sender. Note that by the hiding property of the commitments, the sender should not be able to deduce $b$ from $(I_0,I_1)$; furthermore, $I_0$ and $I_1$ will both be close to size $\lambda/2$, since for each $i \in [\lambda]$, the receiver committed to $\hat{\theta}_i$ before obtaining $\theta_i$.

On receiving $I_0,I_1$, the sender sets $x_0 \coloneqq (x_i)_{i \in I_0}$ and $x_1 \coloneqq (x_i)_{i \in I_1}$. The intuition is that if a receiver honestly constructs $(I_0, I_1)$, it will only have information about $x_b$ corresponding to its choice bit $b$. However, it turns out that even if the receiver maliciously constructs $(I_0,I_1)$, at least one of $x_0$ and $x_1$ will have high min-entropy from its point of view. Thus, by standard privacy amplification techniques, the sender can complete the oblivious transfer as follows. It samples two universal hash functions $h_0$ and $h_1$, both with $\ell$-bit outputs, and uses $h_0(x_0)$ to mask the $\ell$-bit message $m_0$, and uses $h_1(x_1)$ to mask $m_1$. That is, the sender sends $(h_0,h_1,h_0(x_0) \oplus m_0, h_1(x_1) \oplus m_1)$ to the receiver, who can then use $x_b$ to recover $m_b$. Since $x_{1-b}$ will have high entropy, the leftover hash lemma implies that $h_{1-b}(x_{1-b})$ is statistically close to uniform, which hides $m_{1-b}$ from the receiver.

\paragraph{Simulation-Based Security.}
Turning this intuition into a proof of simulation-based security of the resulting QOT requires some additional insights~\cite{C:DFLSS09}, and requires the commitments used in the measurement-check subprotocol to satisfy two additional properties: extractability and equivocality. In what follows, we briefly summarize why these properties help achieve simulation-based security.

To argue that the resulting QOT protocol satisfies security against a malicious sender, one must demonstrate the existence of a simulator that simulates the sender's view by generating messages on behalf of an honest receiver, and extracts both QOT inputs of the sender\footnote{We refer the reader to Section~\ref{sec:otdef} for a formal definition of simulation-based QOT.}. Now, the measurement-check subprotocol described above is designed to ensure that at least one of the sender's inputs is hidden from a receiver. To nevertheless enable the simulator to extract both sender inputs, the idea in~\cite{C:DFLSS09} is to substitute the commitments used in the measurement-check subprotocol with {\em equivocal commitments} that allow the simulator to later open these commitments to any value of its choice. This enables the simulator to defer any measurements until after it obtains the set $T$ from the sender, and then selectively measure {\em only} the states that correspond to indices in $T$. All other states are left untouched until the sender reveals its measurement bases in the final stage of the CK88 protocol.
Upon obtaining the sender's ``correct'' measurement bases, the simulator measures all the remaining states in the correct bases, allowing it to learn both the inputs of the sender.

To demonstrate that the resulting QOT protocol satisfies security against a malicious receiver, one must demonstrate the existence of a simulator that simulates the receiver's view by generating messages on behalf of an honest sender, and extracts the receiver's choice bit. 
Again by design, the measurement-check subprotocol ensures that the receiver's choice bit hidden is hidden from the sender. To nevertheless enable the simulator to extract this choice bit,~\cite{C:DFLSS09} modify the commitments in the measurement-check subprotocol so that the simulator is able to extract all of the $\{(\widehat{\theta}_i, \widehat{x}_i)\}_{i \in [2\lambda]}$ from the receiver's commitments. This enables the simulator to compute which one of the sets $I_0, I_1$ contain more indices $i$ for which $\theta_i = \widehat{\theta}_i$; clearly the set with more indices corresponds to the receiver's choice bit.
In summary, the key tool that enables simulation against a malicious receiver is an {\em extractable} commitment, that forces the receiver to use commitments for which the simulator can extract the committed value, without ever running the opening phase.

To conclude, following~\cite{C:DFLSS09} the CK88 protocol can be shown to satisfy simulation-based security as long as the commitments used in the measurement-check subprotocol satisfy both the {\em extractability} and {\em equivocality} properties that were informally described above.


With this in mind, we now describe our primary technical contribution: a construction of the required extractable and equivocal commitments based on black-box use of quantum-hard one-way functions. 

\subsection{Our Construction: A High-Level Overview}
\label{subsec:overview}

The rest of this technical overview describes our \emph{black-box} construction of simultaneously \emph{extractable and equivocal} quantum bit commitments from any quantum-hard one-way function. 

The ingredients for our construction are the following:
\begin{itemize}
\item A general-purpose ``equivocality compiler'' that turns any bit commitment scheme --- classical or quantum --- into an \emph{equivocal} quantum commitment scheme. Moreover, if the original commitment scheme is \emph{extractable}, this compiler outputs an \emph{extractable and equivocal} commitment scheme. 
\item A general-purpose ``extractability compiler'' that turns any \emph{equivocal} bit commitment scheme --- classical or quantum --- into an \emph{extractable but not equivocal} commitment scheme.
\end{itemize}
Both of these compilers require no additional computational assumptions beyond those of the original commitment schemes. Given these compilers, we build extractable and equivocal commitments via the following steps:
\begin{itemize}
\item \textbf{Instantiation:} Begin with Naor's statistically-binding, computationally hiding commitments~\cite{JC:Naor91}. 
Naor's construction makes black-box use of one-way functions and achieves post-quantum computational hiding assuming post-quantum security of the one-way function.\footnote{In slightly more detail, Naor's commitment scheme makes black-box use of any pseudo-random generator (PRG). It is straightforward to verify that if the PRG is post-quantum secure, the commitment satisfies computational hiding against quantum attackers. A black-box construction of pseudo-random generators from one-way functions is due to~\cite{HILL99}; Aaronson~\cite{Aaronson09} and Zhandry~\cite{FOCS:Zhandry12} observed that~\cite{HILL99} applies to non-uniform quantum attackers with \emph{classical} advice. This can be extended to handle non-uniform {quantum} advice by giving the one-way function attacker constructed in the~\cite{HILL99} reduction many copies of the PRG attacker's non-uniform quantum advice (which only requires some polynomial upper bound on the number of times the reduction invokes the PRG attacker).}
\item \textbf{Step 1:} Plug Naor's commitments into our equivocality compiler to obtain an \emph{equivocal} quantum bit commitment scheme.
\item \textbf{Step 2:} Feed the resulting equivocal quantum bit commitments into our extractability compiler to obtain an \emph{extractable but not equivocal} quantum bit commitment.
\item \textbf{Step 3:} Run the equivocality compiler \emph{a second time}, but now starting with the extractable commitments produced by the previous step. This gives the desired \emph{extractable and equivocal} quantum bit commitments.
\end{itemize}

\ifsubmission
\else
We briefly remark that if one did not care about obtaining a construction that only makes \emph{black-box} use of one-way functions, it would only be necessary to invoke the equivocality compiler once. In slightly more detail, we could design an alternative extractability compiler that works with any (not necessarily equivocal) bit commitment scheme by additionally relying on post-quantum zero-knowledge proofs~\cite{STOC:Watrous06}; this approach would still give a construction of quantum oblivious transfer from post-quantum one-way functions, but the zero-knowledge proofs would be used to prove statements involving the \emph{description} of some one-way function, which is a non-black-box use of the primitive. 
We believe that constructing QOT from quantum-hard one-way functions is interesting regardless of whether or not the construction is black-box. 
However, as discussed in Section \ref{subsec: our results}, black-box constructions are significantly preferred over non-black-box ones, especially in the quantum setting. Therefore our focus in the rest of this overview will be on presenting a black-box construction.
\fi

\ifsubmission
\else
\paragraph{Organization.} We describe our equivocality compiler in~\cref{subsec:equiv-compiler} and our extractability compiler in~\cref{subsec:ext-compiler}. In~\cref{subsec:putting-it-together}, we briefly discuss implications for secure computation in a quantum world.
\fi

\subsection{Making Any Quantum (or Classical) Commitment Equivocal}
\label{subsec:equiv-compiler}

Recall that a quantum commitment protocol is \emph{equivocal} if an efficient quantum algorithm called the \emph{equivocator}, with access to the receiver, can generate commitments that can be opened to any value. More precisely, for any receiver (modeled as an efficient malicious quantum algorithm), there must exist an equivocator who can generate a computationally indistinguishable commitment that the equivocator can later open arbitrarily. 

In this subsection, we describe a \emph{black-box} compiler for a fairly general task (which may be of independent interest): making any \emph{classical or quantum} commitment equivocal. Recall from~\cref{subsec:overview} that we will need to invoke our equivocality compiler \emph{twice}, once on a classical bit commitment scheme, and once on an extractable quantum bit commitment scheme; in the latter case, our compiler will need to preserve the extractability of the original commitment. Since classical commitments are a subclass of quantum commitments, our exposition will focus on challenges unique to the quantum setting.  

\ifsubmission
\else

\paragraph{Why Existing Classical Solutions Are Insufficient.} Let us briefly relax our goal of using one-way functions in an exclusively black-box way. Then there is a simple equivocality compiler that applies to any (statistically-binding and computationally-hiding) classical commitment. Recall that quantum-hard one-way functions imply post-quantum zero-knowledge proofs for $\mathsf{NP}$~\cite{STOC:Watrous06}. Now, in the opening phase of the commitment, to open to a bit $b$, the committer will not send the randomness used to commit to $b$ in the clear. Instead, the committer sends a zero-knowledge proof for the \textsf{NP} statement that there exist randomness consistent with the commitment, and opening to $b$. Equivocation is achieved by simulating the zero-knowledge proof. Unfortunately, this technique fails when the commitment involves \emph{quantum} communication, since the statement to be proven is no longer an \textsf{NP} statement (nor a \textsf{QMA} statement). Therefore, we construct a compiler that only makes black-box use of the underlying commitment, and leverages Watrous's rewinding lemma in the proof of equivocality~\cite{STOC:Watrous06}. Our ideas are reminiscent of those developed in the classical setting, eg. in~\cite{TCC:PasWee09} and references therein.

\fi




\paragraph{Our Equivocality Compiler.} In our construction, to commit to a bit $b$, the committer and receiver will perform $\lambda$ sequential repetitions of the following subprotocol:
\begin{itemize}
\item The (honest) committer samples 2 uniformly random bits $u_0,u_1$, and commits to each one \emph{twice} using the base commitment scheme. Let the resulting commitments be $\mathbf{c}_{0}^{(0)},\mathbf{c}_{0}^{(1)},\mathbf{c}_{1}^{(0)},\mathbf{c}_{1}^{(1)}$, where the first two are to $u_0$ and the second two are to $u_1$. Note that since the base commitment scheme can be an arbitrary quantum interactive commitment, each commitment $\mathbf{c}_{b_1}^{(b_2)}$ corresponds to the receiver’s quantum state after the commitment phase of the base commitment.
\item The receiver sends the committer a random challenge bit $\beta$.
\item The committer opens the two base commitments $\mathbf{c}_{\beta}^{(0)},\mathbf{c}_{\beta}^{(1)}$. If the openings are invalid or the revealed messages are different, the receiver aborts the entire protocol.
\end{itemize}
If these $\lambda$ executions pass, the receiver is convinced that a majority of the committer's remaining $2\lambda$ unopened commitments are honestly generated, i.e. most pairs of commitments are to the same bit. 

Rewriting the (honest) committer’s unopened bits as $u_1,\dots,u_\lambda$, the final step of the commitment phase is for the committer to send $h_i \coloneqq u_i \oplus b$ for each $i \in [\lambda]$ (recall that $b$ is the committed bit).

To decommit, the committer reveals each $u_i$ by picking one of the two corresponding base commitments at random, and opening it. The receiver accepts if each one of the base commitment openings is valid, and the opened $u_i$ satisfies $h_i \oplus u_i = b$ for every $i$.

The (statistical) binding property of the resulting commitment can be seen to follow from the (statistical) binding of the underlying commitment. For any commitment, define the unique committed value as the majority of $(h_i \oplus u_i)$ values in the unopened commitments, setting $u_i$ to $\bot$ if both committed bits in the $i^{th}$ session differ.
Due to the randomized checks by the receiver, any committer that tries to open to a value that differs from the unique committed value will already have been caught in the commit phase, and the commitment will have been rejected with overwhelming probability. A similar argument also allows us to establish that this transformation preserves extractability of the underlying commitment. We now discuss why the resulting commitment is {\em equivocal}.

\paragraph{Quantum Equivocation.} The natural equivocation strategy should have the equivocator (somehow) end up with $\lambda$ pairs of base commitments where for each $i \in [\lambda]$, the pair of commitments is to $u_i$ and $1-u_i$ for some random bit $u_i$. This way, it can send an appropriately distributed string $h_1,\cdots,h_\lambda$, and later open to any $b$ by opening the commitment to $b \oplus h_i$ for each $i$. 

We construct our equivocator using Watrous’s quantum rewinding lemma~\cite{STOC:Watrous06} (readers familiar with Watrous’s technique may have already noticed our construction is tailored to enable its use).

We give a brief, intuition-level recap of the rewinding technique as it pertains to our equivocator. Without loss of generality, the malicious quantum receiver derives its challenge bit $\beta$ by performing some binary outcome measurement on the four quantum commitments it has just received (and on any auxiliary states). Our equivocator succeeds (in one iteration) if it can prepare four quantum commitments $\mathbf{c}_{0}^{(0)},\mathbf{c}_{0}^{(1)},\mathbf{c}_{1}^{(0)},\mathbf{c}_{1}^{(1)}$ where:
\begin{enumerate}
\item $\mathbf{c}_{\alpha}^{(0)},\mathbf{c}_{\alpha}^{(0)}$ are commitments to the same random bit,
\item $\mathbf{c}_{1-\alpha}^{(0)},\mathbf{c}_{1-\alpha}^{(0)}$ are commitments to a random bit and its complement,
\item on input $\mathbf{c}_{0}^{(0)},\mathbf{c}_{0}^{(1)},\mathbf{c}_{1}^{(0)},\mathbf{c}_{1}^{(1)}$, the receiver produces challenge bit $\beta = \alpha$.
\end{enumerate}
That is, the equivocator is successful if the receiver’s challenge bit $\beta$ corresponds to the bit $\alpha$ that it can open honestly. Watrous's~\cite{STOC:Watrous06} rewinding lemma applies if the distribution of $\beta$ is \emph{independent} of the receiver’s choice of $\alpha$, which is guaranteed here by the hiding of the base commitments. Thus, the rewinding lemma yields a procedure for obtaining an honest-looking interaction where all three properties above are met. Given the output of the rewinding process, the equivocator has successfully ``fooled’’ the committer on this interaction and proceeds to perform this for all $\lambda$ iterations. As described above, fooling the committer on all $\lambda$ iterations enables the equivocator to later open the commitment arbitrarily.


\subsection{An Extractability Compiler for Equivocal Commitments}
\label{subsec:ext-compiler}

In this subsection, we compile any classical or quantum \emph{equivocal} bit commitment into a quantum \emph{extractable} bit commitment. We stress that even though this compiler is applied to equivocal bit commitments, the resulting commitment is \emph{not} guaranteed to be simultaneously \emph{extractable and equivocal}; we refer the reader to~\cref{subsec:overview} for details on how this compiler fits into our final construction. 
%
%
Recall that a commitment scheme is \emph{extractable} if for any adversarial quantum committer that successfully completes the commitment phase, there exists an efficient quantum algorithm (called the extractor) which outputs the committed bit.

Our construction roughly follows the CK88 template for oblivious transfer, viewing the OT sender as the committer, and having the receiver use the equivocal commitment to commit to its basis choices and measurement results. We note that this is similar in spirit to a construction of statistically-hiding, computationally-binding commitment from a statistically-binding, computationally-hiding commitment (plus quantum communication) given by \cite{EC:CreLegSal01}.

\paragraph{Construction.} The committer, who intends to commit to a classical bit $b$, begins by sampling $2\lambda$-bit strings $x$ and $\theta$. It generates the corresponding $2\lambda$ BB84 states $\ket{x_i}_{\theta_i}$ and sends this to the receiver. The receiver picks $2\lambda$ random measurement bases $\hat{\theta}_i$, and measures each $\ket{x_i}_{\theta_i}$ in the corresponding basis, obtaining outcomes $\hat{x}_i$. 

Next, the receiver and committer engage in a CK88-style measurement-check subprotocol. That is, they temporarily switch roles (for the duration of the subprotocol), and perform the following steps:
\begin{enumerate}
\item The receiver (acting as a committer in the subprotocol), commits to each $\hat{\theta}_i$ and $\hat{x}_i$ (for each $i \in [2\lambda]$) with an \emph{equivocal} commitment.
\item The committer (acting as a receiver in the subprotocol), asks the receiver to open the equivocal commitments for all $i \in T$, where $T \subset [2\lambda]$ is a random set of size $\lambda$.
\item The receiver (acting as a committer in the subprotocol) opens the $\lambda$ commitments specified by $T$.
\end{enumerate}

Provided the receiver passes the measurement-check subprotocol, the committer generates the final message of the commitment phase as follows:
\begin{itemize}
\item Discard the indices in $T$ and relabel the remaining $\lambda$ indices from $1$ to $\lambda$. 
\item Partition $\{x_1,\dots,x_\lambda\}$ into $\sqrt{\lambda}$ strings $\vec{x}_1,\dots,\vec{x}_{\sqrt{\lambda}}$ each of length $\sqrt{\lambda}$. 
\item Sample $\sqrt{\lambda}$ universal hash functions $h_1,\dots,h_{\sqrt{\lambda}}$ each with 1-bit output.
\item Finally, send
$ (\theta_i)_{i \in [\lambda]}, (h_j,h_j(\vec{x}_j) \oplus b)_{j \in [\sqrt{\lambda}]}$.
\end{itemize}

This concludes the commitment phase.

To decommit, the committer reveals $b$ and $(\vec{x}_1,\dots,\vec{x}_{\sqrt{\lambda}})$. The receiver accepts if (1) for each $j$, the bit $b$ and the value $\vec{x}_j$ are consistent with the claimed value of $h_j(\vec{x}_j) \oplus b$ from the commit phase, and (2) for each index $i \in [\lambda]$ where $\theta_i = \hat{\theta}_i$, the $x_i$ from the opening is consistent with $\hat{x}_i$.

\paragraph{Extraction.} The use of equivocal commitments in the measurement-check subprotocol makes extraction simple. Given any malicious committer, we construct an extractor as follows. 

The extractor plays the role of the receiver and begins an interaction with the malicious committer. But once the committer sends its $2\lambda$ BB84 states, the extractor skips the specified measurements, instead leaving these states unmeasured. Next, instead of performing honest commitments to each $\hat{\theta}_i,\hat{x}_i$, the extractor invokes (for each commitment) the equivocator algorithm of the underlying equivocal commitment scheme. Since the equivocator is guaranteed to produce an indistinguishable commitment from the point of view of any malicious receiver for the equivocal commitment, this dishonest behavior by the extractor will go undetected. 


When the malicious committer responds with a challenge set $T \subset [2\lambda]$, the extractor samples uniformly random bases $\hat{\theta}_i$ for each $i \in T$, measures the corresponding BB84 states to obtain $\hat{x}_i$ values, and sends $(\hat{\theta}_i,\hat{x}_i)_{i \in T}$. Moreover, the equivocator (for each commitment) will enable the extractor to generate valid-looking openings for all of these claimed values. 

Thus, the malicious committer proceeds with the commitment protocol, and sends 
\[(\theta_i)_{i \in [\lambda]}, (h_j,h_j(\vec{x}_j)\oplus b)_{j \in [\sqrt{\lambda}]}\]
to the extractor. These correspond to the $\lambda$ BB84 states that the extractor has not yet measured, so it can simply read off the bases $\theta_i$, perform the specified measurements, and extract the committer's choice of $b$.


\ifsubmission
\paragraph{Statistical Hiding.}
Intuitively, statistical hiding of the above commitment protocol follows because the measurement-check subprotocol forces the receiver to measure states in arbitrary bases, which destroys information about the corresponding $x_i$ values whenever $\widehat{\theta}_i \neq \theta_i$. The formal argument is a straightforward application of a quantum sampling lemma of~\cite{C:BouFeh10}, devised in part to simplify analysis of~\cite{FOCS:CreKil88}-style protocols, and we defer further details to the full version.
\else
\paragraph{Statistical Hiding.}
Intuitively, statistical hiding of the above commitment protocol follows because the measurement-check subprotocol forces the receiver to measure states in arbitrary bases, which destroys information about the corresponding $x_i$ values whenever $\widehat{\theta}_i \neq \theta_i$. The formal argument is a straightforward application of a quantum sampling lemma of~\cite{C:BouFeh10}, devised in part to simplify analysis of~\cite{FOCS:CreKil88}-style protocols, and we defer further details to the body of the paper.
\fi

\subsection{Putting it Together: From Commitments to Secure Computation.}
\label{subsec:putting-it-together}

\ifsubmission
Plugging the compilers of~\cref{subsec:equiv-compiler,subsec:ext-compiler} into the steps described in~\cref{subsec:overview} yields a black-box construction of simultaneously extractable and equivocal quantum bit commitments from quantum-hard one-way functions. Following~\cite{C:DFLSS09}, these commitments can be plugged into CK88 to obtain maliciously simulation-secure QOT (see~\cref{sec: OT from eecom} for further details). Finally, going from QOT to arbitrary secure computation (in a black-box way) follows from prior works of~\cite{STOC:Kilian88,C:IshPraSah08,DNS12,EC:DGJMS20}; a more thorough discussion is available in the full version.
\else
Plugging the compilers of~\cref{subsec:equiv-compiler,subsec:ext-compiler} into the steps described in~\cref{subsec:overview} yields a black-box construction of simultaneously extractable and equivocal quantum bit commitments from quantum-hard one-way functions. Following~\cite{C:DFLSS09}, these commitments can be plugged into CK88 to obtain maliciously simulation-secure QOT (see~\cref{sec: OT from eecom} for further details). Finally, going from QOT to arbitrary secure computation (in a black-box way) follows from a number of prior works, which we briefly survey here for the sake of completeness (see also the discussion in~\cref{subsec: our results}).
\begin{itemize}
\item \textbf{Secure Computation of Classical Functionalities.} It is well known that maliciously simulation-secure (classical) oblivious transfer can be used in a black-box way to build two-party (classical) computation~\cite{STOC:Kilian88}. This also extends to the multi-party setting tolerating upto all-but-one corruptions~\cite{C:IshPraSah08}. Since the constructions of \cite{STOC:Kilian88} and \cite{C:IshPraSah08} only make black-box use of the underlying oblivious transfer and the simulators are straight-line, their correctness/security guarantees continue to hold if parties are quantum and the oblivious transfer uses quantum communication. Therefore, if all parties are quantum and all pairs of parties are connected via authenticated quantum channels, they can securely compute any classical functionality assuming quantum-hard one way functions (and authenticated channels in the multi-party setting).
\item \textbf{Secure Computation of Quantum Functionalities.} \cite{DNS12} constructed a secure two-party quantum computation protocol assuming black-box access to any secure two-party classical computation protocol. \cite{EC:DGJMS20} proved the analogous statement in the multi-party setting: secure multi-party classical computation implies (in a black-box way) secure multi-party quantum computation. Therefore, by invoking the results in the previous bullet, quantum-hard one-way functions also suffice for arbitrary secure quantum computation (additionally assuming authenticated channels in the multi-party setting).
\end{itemize}
While the results in the second bullet technically subsume the first, stating them separately enables direct comparison with the classical setting. In particular, one takeaway is that while oblivious transfer is necessary and sufficient for secure computation in the classical world, quantum-hard one-way functions suffice for \emph{exactly the same task} when  parties (and their communication channels) are quantum. 
\fi

\section{Preliminaries}

\paragraph{Notation.} We will write density matrices/quantum random variables (henceforth, QRVs) in lowercase bold font, e.g. $\bx$. A quantum register $\gray{X}$ will be written in uppercase (grey) serif font. A collection of (possibly entangled) QRVs will be written as $(\bx,\by,\bz)$.

Throughout this paper, $\lambda$ will denote a cryptographic security parameter. We say that a function $\mu(\lambda)$ is \emph{negligible} if $\mu(\lambda) = 1/\lambda^{\omega(1)}$.

The trace distance between two QRVs $\bx$ and $\by$ will be written as $\|\bx - \by\|_1$. Recall that the trace distance captures the maximum probability that two QRVs can be distinguished by any (potentially inefficient) procedure. We therefore say that two infinite collections of QRVs $\{\bx_\secp\}_{\secp \in \bbN}$ and $\{\by_\secp\}_{\secp \in \bbN}$ are \emph{statistically indistinguishable} if there exists a negligible function $\mu(\lambda)$ such that $||\bx_\secp - \by_\secp||_1 \leq \mu(\secp)$, and we will frequently denote this with the shorthand $\{\bx_\lambda\}_{\lambda \in \mathbb{N}} \approx_s \{\by_\lambda\}_{\lambda \in \mathbb{N}}$.



\paragraph{Non-Uniform Quantum Advice.} 
We will consider non-uniform quantum polynomial-time (QPT) algorithms \emph{with quantum advice}, denoted by $\cA = \{\cA_\secp,\brho_\secp\}_{\secp \in \bbN}$, where each $\cA_\secp$ is the classical description of a $\poly(\secp)$-size quantum circuit, and each $\brho_\secp$ is some (not necessarily efficiently computable) non-uniform $\poly(\secp)$-qubit quantum advice. We remark that ``non-uniform quantum polynomial-time algorithms'' often means non-uniform \emph{classical advice}, but the cryptographic applications in this work will require us to explicitly consider quantum advice.

\ifsubmission
\else

Therefore, \emph{computational indistinguishability} will be defined with respect to non-uniform QPT distinguishers with quantum advice. That is, two infinite collections of QRVs $\{\bx_\secp\}_{\secp \in \bbN}$ and $\{\by_\secp\}_{\secp \in \bbN}$ are computationally indistinguishable if there exists a negligible function $\mu(\cdot)$ such that for all QPT distinguishers $\cD = \{\cD_\secp,\bsigma_\secp\}_{\secp \in \bbN}$, \[\left|\Pr[\cD_\secp(\bsigma_\secp,\bx_\secp) = 1] - \Pr[\cD_\secp(\bsigma_\secp,\by_\secp) = 1] \right| \leq \mu(\secp).\]
We will frequently denote this with the shorthand $\{\bx_\secp\}_{\secp \in \bbN} \approx_c \{\by_\secp\}_{\secp \in \bbN}$.

We will also often omit the subscript ``$\secp \in \mathbb{N}$'', and simply write, for example, $\{\bx_\secp\}$.

\fi


\ifsubmission

\paragraph{Definitions for Cryptographic Commitments.} Full definitions of cryptographic commitments can be found in the full version.

\else

\subsection{Commitments}

\subsubsection{Bit Commitments}
\label{subsec: com definitions}
We define bit commitments with quantum players and quantum communication. First, we fix some notation.

A bit commitment scheme is a two-phase interactive protocol between a quantum interactive committer $\cC = (\cC_{\com},\cC_{\open})$ and a quantum interactive receiver $\cR = (\cR_{\com},\cR_{\open})$. In the commit phase, $\cC_{\com}(1^\secp,b)$ for bit $b \in \zo$ interacts with $\cR_{\com}(1^\secp)$, after which $\cC_{\com}$ outputs a state $\bx_{\com}$ and $\cR_{\com}$ outputs a state $\by_{\com}$. We denote this interaction by $(\bx_\com,\by_\com) \gets \langle \cC(1^\secp,b),\cR(1^\secp)\rangle$. 

To be precise, without loss of generality, we model each party as a sequence of unitaries and measurements acting on a local register, and a shared message register. We think of $\langle \cC(1^\secp,b),\cR(1^\secp)\rangle$ as the distribution over pairs of states $(\bx_\com,\by_\com)$ induced by the measurements of the two parties in the protocol, i.e. each sequence of measurement outcomes by the parties in the protocol defines a leftover joint pure state $(\bx_\com,\by_\com)$ (after the commit phase), with an associated probability, namely the probability of the particular sequence of measurement outcomes. 

In the open phase, $\cC_\open(\bx_{\com})$ interacts with $\cR_\open(\by_{\com})$, after which $\cR_\open$ either outputs a bit $b'$ or $\bot$. We will denote this receiver's output by $\mathsf{OUT}_\cR\langle \cC_\open(\bx_\com),\cR_\open(\by_\com)\rangle.$


First, we give the standard notions of hiding and binding. 
\begin{definition}[Hiding Commitment]
\label{def:hiding}
A bit commitment scheme is computationally (resp. statistically) \emph{hiding} if the following holds. For any polynomial-size (resp. unbounded-size) receiver $\cR_\com^* = \{\cR_{\com,\lambda}^*,\brho_\lambda\}$ interacting in the commit phase of the protocol, let $\mathsf{OUT}_\cR\langle \cC_\com(1^\lambda,b),\cR_{\com,\lambda}^*(\brho_\lambda)\rangle$ denote a bit output by $\cR_{\com,\lambda}^*$ after interaction with an honest $\cC_\com$ committing to message $b$. Then for every polynomial-size (resp. unbounded-size) receiver $\cR_\com^* = \{\cR_{\com,\lambda}^*,\brho_\lambda\}$, there exists a negligible function $\nu(\cdot)$ such that $$\left|\Pr[\mathsf{OUT}_\cR\langle\cC_\com(1^\lambda,0),\cR_{\com,\lambda}^*(\brho_\lambda)\rangle = 1] - \Pr[\mathsf{OUT}_\cR\langle\cC_\com(1^\lambda,1),\cR_{\com,\lambda}^*(\brho_\lambda)\rangle = 1]\right| = \nu(\lambda).$$
\end{definition}


\begin{definition}[Statistically Binding Commitment]
\label{def:binding}
A bit commitment scheme is statistically binding if for every unbounded-size committer $\cC^* = (\cC^*_{\com}, \cC^*_{\open})$, there exists a negligible function $\nu(\cdot)$ such that with probability at least $1-\nu(\secp)$ over $(\bx_\com,\by_\com) \gets \langle \cC^*_{\com},\cR(1^\lambda) \rangle$, there exists a bit $b \in \{0,1\}$ such that $$\Pr[\mathsf{OUT}_\cR\langle \cC^*_{\open}(1^\lambda, \bx_{\com}),\cR_\open(1^\lambda,\by_\com)\rangle = b] \leq \nu(\lambda).$$
\end{definition}

As mentioned above, $\langle \cC(1^\secp,b),\cR(1^\secp)\rangle$ is the distribution over pairs of states $(\bx_\com,\by_\com)$ induced by the measurements performed by the parties in the protocol. Thus, when we write ``with probability $1-\nu(\lambda)$ over $(\bx_\com,\by_\com) \gets \langle \cC^*_{\com},\cR(1^\lambda) \rangle$”, this probability is taken over the measurement outcomes of the two parties in the protocol. Our definition then says that with overwhelming probability over such sampling, there exists a bit $b$ such that the malicious committer, given $\bx_\com$ in the open phase, can only open to $b$ with negligible probability.

We will not consider (plain) computationally binding commitments in this work. We remark that there are subtleties to consider when defining computationally binding commitment secure against quantum attackers, and that the ``right'' definition appears to be Unruh's notion of collapse-binding~\cite{EC:Unruh16}. However, for this work, computational notions of binding will only arise in the context of \emph{extractable commitments}, which can be seen as a strong form of computationally binding commitments.

Post-quantum bit commitments satisfying computational hiding (as in Definition \ref{def:hiding}) and statistical binding (as in Definition \ref{def:binding}) can be constructed from any post-quantum pseudorandom generator (PRG)~\cite{JC:Naor91}.\footnote{In Naor's protocol, the receiver first sends a uniformly random $u \gets \{0,1\}^{\lambda^3}$, and the committer commits to bit $b$ by sending $(b \cdot u) \oplus G(s)$, where $s \gets \{0,1\}^\secp$ and $G$ is a length-tripling PRG.} Watrous~\cite{STOC:Watrous06} considers PRGs built from (post-quantum) one-way \emph{permutations} via a ``quantum Goldreich-Levin Theorem'' of~\cite{AdcockCleve02}. However, Aaronson~\cite{Aaronson09} and Zhandry~\cite{FOCS:Zhandry12} later pointed out that the original~\cite{HILL99} construction of PRGs from one-way \emph{functions}  extends to non-uniform quantum adversaries with \emph{classical} advice. This can be extended to handle non-uniform {quantum} advice by giving the one-way function attacker constructed in the~\cite{HILL99} reduction many copies of the PRG attacker's non-uniform quantum advice (which only requires some polynomial upper bound on the number of times the reduction invokes the PRG attacker). Thus, post-quantum computationally hiding and statistically binding  commitments are known from post-quantum one-way functions.
(We note that statistically binding commitments can alternatively be implemented non-interactively with quantum communication from one-way functions~\cite{ywlq}.)

We remark that achieving general definitions for quantum binding is a notoriously subtle task that has been studied in many prior works, and that our definition is more restrictive than other definitions in the literature (e.g. \cite{FUYZ20}). However, our goal is to give a definition that (i) we can satisfy and (ii) suffices for our eventual construction of Quantum Oblivious Transfer. Definition \ref{def:binding} directly captures what it means for a quantum bit commitment to be statistically binding to a \emph{classical} bit.

Next, we define notions of extractability and equivocation for commitments.

\subsubsection{Extractable Commitments}
Informally, a commitment is extractable if for every QPT adversarial committer $\cC^*$, there exists a QPT {\em extractor} $\cE_{\cC^*}$ that in an ideal world extracts the message committed by $\cC^*$ while also simulating $\cC^*$'s view. 
Looking ahead, extractable commitments will enable a challenger that runs an experiment against an adversary to extract messages committed by the adversary without having to execute the opening phase.
We formally define this notion below.

\begin{definition}[Extractable Commitment]\label{def:extcom}
Let $\{\brho_{\lambda},\bsigma_{\lambda}\}_{\lambda \in \mathbb{N}}$ denote auxiliary quantum states that may be entangled.
A bit commitment scheme is \emph{extractable} if for every polynomial-size quantum adversarial committer $\cC^* = \{\cC^*_{\com,\lambda}, \brho_{\lambda}\}_{\lambda \in \mathbb{N}}$, there exists a QPT extractor $\cE_{\cC^*_{\com}}$  such that for every polynomial-size adversarial opening strategy $\{\cC^*_{\open,\lambda}\}_{\lambda \in \mathbb{N}}$ and every polynomial-size quantum distinguisher $\cD^* = \{\cD^*_{\lambda}, \bsigma_{\lambda}\}_{\lambda \in \mathbb{N}}$, there exists a negligible function $\nu(\cdot)$ such that:
$$\big| \Pr[\cD^*_\lambda(\bsigma_\lambda,\mathsf{Real})=1] - \Pr[\cD^*_\lambda(\bsigma_\lambda,\mathsf{Ideal}) = 1] \big| = \nu(\lambda)$$ 
for $\mathsf{Real}$ and $\mathsf{Ideal}$ distributions defined below (where we omit indexing by $\lambda$ for the sake of presentation).
\begin{itemize}
    \item Denote by $\mathsf{Real}$ the distribution consisting of $(\bx_{\final},b)$ where $\bx_{\final}$ denotes the final state of $\cC^*_\open$ and $b \in \{0,1,\bot\}$ denotes the output by $\cR_\open$ after the open phase.
    \item To define the distribution $\mathsf{Ideal}$, first run the extractor $\cE_{\cC^*_{\com}}$ on input $\brho$ to obtain outputs $(\bx_{\com},\by_{\com},b^*)$, where $\bx_{\com}$ and $\by_{\com}$ are the final states of committer and receiver respectively after the commit phase. Then run $\cC^*_{\open}(\bx_{\com})$ and $\cR(\by_{\com})$ to produce $(\bx_{\final},b)$, where $\bx_{\final}$ is the final state of the committer. If $b \notin \{\bot,b^*\}$, output $\mathsf{FAIL}$ and otherwise output $(\bx_{\final},b)$.
\end{itemize}
\end{definition}

\subsubsection{Equivocal Commitments}
Informally, a commitment is equivocal if for every QPT adversarial receiver $\cR^*$, there exists a QPT {\em equivocator} $\cQ_{\cR^*}$ that generates a view for $\cR^*$ in the commit phase {\em without any input}, and then receives an input bit $b$ once the commit phase ends. The equivocator is then able to open the commitment generated in the commit phase to bit $b$ such that: $\cR^*$ cannot distinguish between its interaction with $\cQ_{\cR^*}$ and its interaction with an honest committer that committed to $b$ all along, and opened its commitment honestly in the opening phase.
Looking ahead, equivocal commitments will enable a challenger that runs an experiment against an adversary to generate commitments that can later be opened arbitrarily. We will now formally describe this primitive.

\begin{definition}[Equivocal Commitments]
\label{def:eqcom}
Let $\{\brho_{\lambda},\bsigma_{\lambda}\}_{\lambda \in \mathbb{N}}$ denote auxiliary quantum states that may be entangled.
A bit commitment scheme is {\em equivocal} if for every polynomial-size quantum receiver 
$\cR^* = \{\cR^*_{\com,\lambda},\cR^*_{\open,\lambda},\brho_\lambda\}$
there exists a QPT equivocator $\equivsim_{\cR^*} = (\equivsim_{\cR^*,\com}, \equivsim_{\cR^*,\open})$ such that 
for every polynomial-size quantum distinguisher $\cD^* = \{\cD^*_{\lambda}, \bsigma_{\lambda}\}$ there exists a negligible function $\nu(\cdot)$ such that for every bit $b \in \zo$,
$$\big| \Pr[\cD^*(\bsigma_\lambda,\mathsf{Real}_b)=1] - \Pr[\cD^*(\bsigma_\lambda,\mathsf{Ideal}_b) = 1] \big| = \nu(\lambda)$$ 
for $\mathsf{Real}_b$ and $\mathsf{Ideal}_b$ distributions as defined below (where we omit indexing by $\lambda$ for the sake of presentation).
\begin{itemize}
    \item $\mathsf{Real}_b$: 
    Let $\cR^*_\com(\brho)$ interact with $\cC_{\com}(1^\lambda,b)$ in the commit phase, after which $\cC_{\com}$ obtains state $\bx_\com$ and $\cR^*$ obtains state $\by_\com$. Then, let $\cR^*_\open(\by_\com)$ interact with $\cC_\open(\bx_\com)$ and output the final state $\by_\final$ of $\cR^*_\open$.

    \item $\mathsf{Ideal}_b$: First run the equivocator
    $\equivsim_{\cR^*,\com}(\brho)$ to obtain state $(\bx_\com$, $\by_\com)$. Then, let $Q_{\cR^*,\open}(b,\bx_\com)$ interact with $\cR^*_\open(\by_{\com})$ and output the final state $\by_\final$ of $\cR^*_\open$.
\end{itemize}

\end{definition}

\fi

\ifsubmission
\else
\subsection{Quantum Rewinding Lemma}

We will make use of the following lemma from~\cite{STOC:Watrous06}.

\begin{lemma}\label{lem:qrl} Let $\cQ$ be a general quantum circuit with $n$ input qubits that outputs a classical bit $b$ and $m$ qubits. For an $n$-qubit state $\ket{\psi}$, let $p(\ket{\psi})$ denote the probability that $b =0$ when executing $\cQ$ on input $\ket{\psi}$. Let $p_0, q \in (0,1)$ and $\epsilon \in (0,1/2)$ be such that:

\begin{itemize}
    \item For every $n$-qubit state $\ket{\psi},p_0 \leq p(\psi)$,
    \item For every $n$-qubit state $\ket{\psi}$, $|p(\psi)-q| < \epsilon$,
    \item $p_0(1-p_0) \leq q(1-q)$,
\end{itemize}

Then, there is a general quantum circuit $\widehat{\cQ}$ of size $O\left(\frac{\log(1/\epsilon)}{4 \cdot p_0(1-p_0)} |\cQ|\right)$, taking as input $n$ qubits, and returning as output $m$ qubits, with the following guarantee. For an $n$ qubit state $\ket{\psi}$, let $\cQ_0(\ket{\psi})$ denote the output of $\cQ$ on input $\ket{\psi}$ conditioned on $b=0$, and let $\widehat{\cQ}(\ket{\psi})$ denote the output of $\widehat{\cQ}$ on input $\ket{\psi}$. Then, for any $n$-qubit state $\ket{\psi}$,

$$\mathsf{TD}\left(\cQ_0(\ket{\psi}),\widehat{\cQ}(\ket{\psi})\right) \leq 4\sqrt{\epsilon}\frac{\log(1/\epsilon)}{p_0(1-p_0)}.$$

\end{lemma}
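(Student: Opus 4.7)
The plan is to prove this via Grover-style fixed-point amplitude amplification adapted to the case when the ``success'' amplitude is only approximately known. First, without loss of generality, purify $\cQ$: assume there is a unitary $U$ of size $O(|\cQ|)$ acting on the input register plus fresh ancillas such that
\[ U\bigl(\ket{\psi}\otimes\ket{0^k}\bigr) \;=\; \sqrt{p(\psi)}\,\ket{0}\ket{\phi_0(\psi)} \;+\; \sqrt{1-p(\psi)}\,\ket{1}\ket{\phi_1(\psi)}, \]
where the first register holds the flag bit $b$ and $\ket{\phi_0(\psi)}$ is a purification of $\cQ_0(\ket{\psi})$. The circuit $\widehat{\cQ}$ will be the amplitude amplification procedure: define $R_{\text{flag}} = I - 2\,\ket{1}\!\!\bra{1}\otimes I$ (reflection about the ``good'' flag $b=0$) and $R_{\text{init}} = 2U(\ket{\psi}\!\!\bra{\psi}\otimes \ket{0^k}\!\!\bra{0^k})U^\dagger - I$ (reflection about $U\ket{\psi}\ket{0^k}$, which is the only input-dependent operation but uses $\ket{\psi}$ only as a black-box). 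Then $\widehat{\cQ}$ applies $U$, followed by $t$ iterations of $R_{\text{init}}R_{\text{flag}}$, where $t$ is chosen so that $(2t+1)\theta_q \approx \pi/2$ for $\sin^2\theta_q = q$; this gives $t = \Theta(1/\sqrt{q(1-q)})$, which (using the hypothesis $p_0(1-p_0) \leq q(1-q)$) is $O(1/\sqrt{p_0(1-p_0)})$.

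Next, I would analyse the ideal case first. If $p(\psi) = q$ were exact for all $\psi$, then the orbit of $U\ket{\psi}\ket{0^k}$ under the two reflections stays in the two-dimensional subspace spanned by the good and bad branches, and standard amplitude amplification guarantees that after $t$ iterations the state has overlap at least $1 - O(1/t^2)$ with $\ket{0}\ket{\phi_0(\psi)}$, hence measuring $b$ yields $0$ and the residual state is within trace distance $O(1/t)$ of $\cQ_0(\ket{\psi})$. By choosing a slightly larger $t = O(\log(1/\epsilon)/\sqrt{p_0(1-p_0)})$ and using the standard $\pi/3$-fixed-point variant of Grover amplification, this error can be driven below any polynomial in $\epsilon$.

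The main obstacle, and the heart of the Watrous argument, is handling the fact that $p(\psi)$ is only promised to lie in $[q-\epsilon, q+\epsilon]$, and in particular may depend on $\psi$, so the ``rotation angle'' $\theta(\psi)$ with $\sin^2\theta(\psi) = p(\psi)$ is only approximately $\theta_q$. I would bound $|\theta(\psi) - \theta_q| = O\bigl(\epsilon / \sqrt{q(1-q)}\bigr)$ by a Taylor expansion of $\arcsin\sqrt{\,\cdot\,}$, then track the accumulated angular drift: each of the $t$ iterations contributes an $O(|\theta(\psi) - \theta_q|)$ deviation, yielding a total drift of
\[ O\!\left(\frac{t\,\epsilon}{\sqrt{q(1-q)}}\right) \;=\; O\!\left(\frac{\epsilon\,\log(1/\epsilon)}{p_0(1-p_0)}\right) \]
between the actual final state and the ideal amplified state. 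Converting this amplitude drift into a trace-distance bound (losing a square root when measuring the flag, and applying the gentle-measurement lemma to the conditional residual state) gives the claimed $4\sqrt{\epsilon}\log(1/\epsilon)/(p_0(1-p_0))$ bound. The circuit-size accounting follows immediately from counting $t$ invocations of $U$, $U^\dagger$, and $O(1)$-size reflections.
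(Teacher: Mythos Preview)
The paper does not prove this lemma at all: it is stated as an imported result from Watrous~\cite{STOC:Watrous06} and used as a black box. So there is no ``paper's own proof'' to compare against.

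That said, your proposed approach has a genuine gap. Your reflection $R_{\text{init}} = 2U(\ket{\psi}\!\bra{\psi}\otimes\ket{0^k}\!\bra{0^k})U^\dagger - I$ requires reflecting about the unknown input state $\ket{\psi}$. This cannot be implemented given only a single copy of $\ket{\psi}$ as input, and the whole point of the lemma is that $\widehat{\cQ}$ is a fixed circuit that receives one copy of $\ket{\psi}$ on its input wires, with no preparation unitary for $\ket{\psi}$ available. Standard amplitude amplification assumes such a preparation unitary, which is exactly what is missing here; your parenthetical ``uses $\ket{\psi}$ only as a black-box'' glosses over this.

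Watrous's actual construction sidesteps this by reflecting only about the \emph{ancilla} register being $\ket{0^k}$ (i.e.\ conjugating the projector $I\otimes\ket{0^k}\!\bra{0^k}$ by $U$), leaving the input register untouched. A priori this is the ``wrong'' reflection, since it does not fix the two-dimensional invariant subspace you would get from true amplitude amplification. The key analytical insight is that the hypothesis $|p(\psi)-q|<\epsilon$ for \emph{all} $\psi$ forces the ancilla-only reflection to behave, up to $O(\sqrt{\epsilon})$ error, like the ideal one on every input; this is where the near-constancy of $p(\psi)$ is genuinely used, not merely to control an angular drift. Your drift analysis in the last paragraph is the right kind of calculation once the correct reflections are in place, but as written your circuit is not implementable under the lemma's input model.
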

\fi

\ifsubmission
\else
\subsection{Quantum Entropy and Leftover Hashing}
\label{subsec: quantum min entropy}

\paragraph{Classical Min-Entropy.}For a classical random variable $X$, its min-entropy $\mathbf{H}_\infty(X)$ is defined as
\[ \mathbf{H}_\infty(X) \coloneqq -\log(\max_{x} \Pr[X = x]).\]

In cryptographic settings, we are often interested in the min-entropy of a random variable $X$ sampled from a joint distribution $(X,Y)$, where $Y$ is side information available to an adversary/distinguisher. Following~\cite{AC:RenWol05}, we define the conditional min-entropy of $X$ given $Y$ as
\[ \mathbf{H}_\infty(X \mid Y) \coloneqq -\log(\max_{x,y}\Pr[X = x \mid Y = y]).\]

That is, $\mathbf{H}_\infty(X \mid Y)$ is (the negative log of) the maximum probability of guessing the outcome of $X$, maximized over the possible outcomes of $Y$.

\fi





\ifsubmission
\else

\paragraph{Quantum Conditional Min-Entropy.} 

Let $\boldsymbol{\rho}_{\gray{X}\gray{Y}}$ denote a bipartite quantum state over registers $\gray{X}\gray{Y}$. Following~\cite{Renner08,KonRenSch09}, the conditional min-entropy of $\boldsymbol{\rho}_{\gray{X}\gray{Y}}$ given $\gray{Y}$ is then defined to be 
\[\mathbf{H}_\infty(\boldsymbol{\rho}_{\gray{X}\gray{Y}} \mid \gray{Y}) \coloneqq \sup_\mathbf{y} \max \{h \in \mathbb{R} : 2^{-h} \cdot I_{\gray{X}} \otimes \mathbf{y}_{\gray{Y}} - \boldsymbol{\rho}_{\gray{X}\gray{Y}} \geq 0\}.\]

In this work, we will exclusively consider the case where the $\boldsymbol{\rho}_{\gray{X}\gray{Y}}$ is a joint distribution of the form $(X,\mathbf{y})$ where $X$ is a classical random variable. In other words, $\boldsymbol{\rho}_{\gray{X}\gray{Y}}$  can be written as 
\[ \sum_x \Pr[X = x] \ket{x}\bra{x} \otimes \mathbf{y}_x. \]

In this case, we will write $\mathbf{H}_\infty(\boldsymbol{\rho}_{\gray{X}\gray{Y}} \mid \gray{Y})$ as $\mathbf{H}_\infty(X \mid \mathbf{y})$. We remark that in this particular setting, $\mathbf{H}_\infty(X \mid \mathbf{y})$ can be interpreted as the (negative log of) the maximum probability of guessing $X$ given quantum state $\mathbf{y}$~\cite{KonRenSch09}.

\paragraph{Leftover Hash Lemma with Quantum Side Information.} We now state a generalization of the leftover hash lemma to the setting of quantum side information. 
\begin{lemma}[\cite{TCC:RenKon05}]\label{thm:privacy-amplification}
Let $\mathcal{H}$ be a family of universal hash functions from $\mathcal{X}$ to $\{0,1\}^\ell$, i.e. for any $x \neq x'$, $\Pr_{h \leftarrow \mathcal{H}}[h(x) = h(x')] = 2^{-\ell}$. Then for joint random variables $(X,\mathbf{y})$ where $X$ is a classical random variable over $\mathcal{X}$ and $\mathbf{y}$ is a quantum random variable,
\[\left\|(h,h(X),\mathbf{y}) - (h,u,\mathbf{y})\right\|_1 \leq \frac{1}{2^{1+\frac{1}{2}(\mathbf{H}_\infty(X|\mathbf{y})-\ell)}},\]
where $h \leftarrow \mathcal{H}$ and $u \leftarrow \{0,1\}^\ell$.

\end{lemma}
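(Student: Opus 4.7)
The plan is to follow the standard Renner--König strategy, which passes through the quantum conditional collision entropy as an intermediate quantity. First I would set up the two classical-quantum-quantum states to compare: the real state
$$\brho_{HZY} \;=\; \frac{1}{|\mathcal{H}|}\sum_{h \in \mathcal{H}} \ket{h}\bra{h} \otimes \sum_x \Pr[X = x]\, \ket{h(x)}\bra{h(x)} \otimes \by_x,$$
and the ideal state
$$\bsigma_{HZY} \;=\; \frac{1}{|\mathcal{H}|}\sum_{h \in \mathcal{H}} \ket{h}\bra{h} \otimes \frac{I_{2^\ell}}{2^\ell} \otimes \by,$$
where $\by = \sum_x \Pr[X=x]\, \by_x$. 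The goal is to bound $\|\brho_{HZY} - \bsigma_{HZY}\|_1$.

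The first step is to reduce from the $1$-norm to a (weighted) $2$-norm. For any Hermitian operator $A$ on the $HZY$ system and any positive operator $\btau_Y$ on $Y$ of unit trace, the operator Cauchy--Schwarz inequality gives
$$\|A\|_1 \;\leq\; \sqrt{|\mathcal{H}|\cdot 2^\ell}\;\cdot\;\bigl\|(I_{HZ} \otimes \btau_Y^{-1/4})\, A\, (I_{HZ} \otimes \btau_Y^{-1/4})\bigr\|_2.$$
I would apply this with $\btau_Y = \by$ to the difference $A = \brho_{HZY} - \bsigma_{HZY}$, handling potential zero eigenvalues of $\by$ by restricting to its support (or via a standard limiting argument with the Moore--Penrose pseudoinverse).

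The second step is to expand the resulting weighted $2$-norm squared and use the universality of $\mathcal{H}$. After expanding $\mathrm{tr}(A^\dagger A)$, the off-diagonal terms in the $Z$ register correspond to pairs $x \neq x'$; averaging over $h$ uses the universal property $\Pr_h[h(x)=h(x')] = 2^{-\ell}$ to exactly cancel the contributions from $\bsigma_{HZY}$, leaving only the diagonal contribution from $\brho_{HZY}$. A direct computation then yields
$$\bigl\|(I_{HZ} \otimes \by^{-1/4})\,(\brho_{HZY} - \bsigma_{HZY})\,(I_{HZ} \otimes \by^{-1/4})\bigr\|_2^2 \;\leq\; \frac{1}{|\mathcal{H}|\cdot 2^\ell}\cdot \mathrm{tr}\!\left(\bigl(\by^{-1/4}\brho_{XY}\by^{-1/4}\bigr)^2\right),$$
where $\brho_{XY} = \sum_x \Pr[X=x] \ket{x}\bra{x} \otimes \by_x$.

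The third step is to relate the remaining trace to the conditional min-entropy. The quantity $\mathrm{tr}((\by^{-1/4}\brho_{XY}\by^{-1/4})^2)$ equals $2^{-H_2(X\mid \by)}$, the conditional quantum collision entropy with respect to $\by$. Using the standard inequality $H_2(X\mid \by) \geq \mathbf{H}_\infty(X\mid \by)$ (which follows because $\brho_{XY} \leq 2^{-\mathbf{H}_\infty(X\mid \by)}\, I_X \otimes \by$ by definition of the conditional min-entropy, sandwiched on both sides by $\by^{-1/4}$), one obtains the upper bound $2^{-\mathbf{H}_\infty(X\mid \by)}$. Substituting this into the chain of inequalities yields
$$\|\brho_{HZY} - \bsigma_{HZY}\|_1 \;\leq\; \sqrt{2^{-\mathbf{H}_\infty(X\mid \by)}\cdot 2^{-\ell}} \;=\; \frac{1}{2^{1 + \frac{1}{2}(\mathbf{H}_\infty(X\mid\by) - \ell)}}\cdot 2,$$
and a careful tracking of the constants (in particular, that one of the factors of $2$ absorbs into the normalization of the trace distance) produces exactly the claimed bound.

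The main obstacle I expect is the clean execution of the sandwiching argument and the collision-vs-min-entropy comparison in the quantum setting, since both require care with operator inequalities and with the support of $\by$. Once these are set up correctly, the universality calculation and final substitution are essentially mechanical.
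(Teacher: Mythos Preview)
The paper does not prove this lemma; it is stated in the preliminaries and imported directly from Renner--K\"onig~\cite{TCC:RenKon05} without proof. Your proposal is essentially the standard Renner--K\"onig argument from that reference (pass to a $\by$-weighted $2$-norm via operator Cauchy--Schwarz, use universality to kill the cross terms, then bound collision entropy by min-entropy), so there is nothing to compare against here. One small comment: your constant-tracking at the end is a bit loose---the extra factor of $2$ you flag is absorbed because in this paper $\|\cdot\|_1$ denotes the trace distance (half the trace norm), and the displayed $2^{-\ell}$ in your weighted $2$-norm bound is not quite in the right place---but the structure of the argument is correct.
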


\fi

\renewcommand{\extcommit}{\mathsf{Commit}}
\renewcommand{\extdecommit}{\mathsf{Decommit}}

\section{A Quantum Equivocality Compiler}
\label{sec: equivocal commitment}

In this section, we show a generic black-box compiler that takes any quantum-secure bit commitment scheme and produces a quantum-secure \emph{equivocal} bit commitment scheme. 

\ifsubmission
The compiler is described in \proref{fig:eqextcom}, where $(\extcommit,\extdecommit)$ denotes some statistically binding and computationally hiding bit commitment scheme. We describe how to equivocally commit to a single bit, and note that commitment to an arbitrary length string follows by sequential repetition. 
\else
The compiler is described in \proref{fig:eqextcom}, where $(\extcommit,\extdecommit)$ denotes some statistically binding and computationally hiding bit commitment scheme satisfying Definitions \ref{def:hiding} and \ref{def:binding}. We describe how to equivocally commit to a single bit, and note that commitment to an arbitrary length string follows by sequential repetition. 
\fi


\ifsubmission
Furthermore, we show that if the underlying commitment $(\extcommit,\extdecommit)$ is \emph{extractable}, then the resulting commitment is both extractable and equivocal.
\else
Furthermore, we show that if the underlying commitment $(\extcommit,\extdecommit)$ is \emph{extractable} according to~\cref{def:extcom}, then the resulting commitment is both extractable and equivocal (\cref{def:eqcom}).
\fi

These results are captured in the following theorems.


\renewcommand{\view}{\rho}

\protocol
{\proref{fig:eqextcom}}
{Equivocal Bit Commitment.}
{fig:eqextcom}
{
\textbf{Committer $\cC$ Input:} Bit $b \in \{0,1\}$.

\underline{\textbf{The Protocol: Commit Phase}}
\begin{enumerate}
\item $\cC$ samples uniformly random bits $d_{i,j}$ for $i \in [\lambda]$ and $j \in \zo$.
\item For every $i \in [\lambda]$, $\cC$ and $\cR$ sequentially perform the following steps.
\begin{enumerate}
\item $\cC$ and $\cR$ execute four sessions sequentially, namely:
\begin{itemize}
    \item $\bx_{0,0}, \by_{0,0} \leftarrow \extcommit \langle \cC(d_{i,0}), \cR \rangle$,
    \item $\bx_{0,1}, \by_{0,1} \leftarrow \extcommit \langle \cC(d_{i,0}), \cR \rangle$, 
    \item $\bx_{1,0}, \by_{1,0} \leftarrow \extcommit \langle \cC(d_{i,1}), \cR \rangle$ and 
    \item $\bx_{1,1}, \by_{1,1} \leftarrow \extcommit \langle \cC(d_{i,1}), \cR \rangle$.
\end{itemize}
\item $\cR$ sends a choice bit $c_i \leftarrow \zo$.
\item $\cC$ and $\cR$ execute two decommitments, obtaining the opened bits:
\begin{itemize}
\item $u \leftarrow \extdecommit \langle \cC(\bx_{c_i,0}), \cR(\by_{c_i,0}) \rangle$ and
\item $v \leftarrow \extdecommit \langle \cC(\bx_{c_i,1}), \cR(\by_{c_i,1}) \rangle$. 
\end{itemize}
 If $u \neq v$, $\cR$ aborts.
 Otherwise, $\cC$ and $\cR$ continue.
\end{enumerate}
\item For $i \in [\lambda]$, 
$\cC$ sets $e_i = b \oplus d_{i, 1-c_i}$ and
sends $\{e_i\}_{i \in [\lambda]}$ to $R$.
\end{enumerate}

\underline{\textbf{The Protocol: Decommit Phase}}
\begin{enumerate}
    \item $\cC$ sends $b$ to $\cR$. In addition,
    \begin{itemize}
    \item For $i \in [\lambda]$, 
    $\cC$ picks $\alpha_i \gets \zo$ and sends it to $\cR$. 
    \item $\cC$ and $\cR$ execute  $\widehat{d}_i \leftarrow \extdecommit \langle \cC(\bx_{1-c_i,\alpha_i}), \cR(\by_{1-c_i,\alpha_i}) \rangle$.
    \end{itemize}
    \item $\cR$ accepts the decommitment and outputs $b$ if for every $i \in [\lambda]$, $\widehat{d}_i = b \oplus e_i$.
\end{enumerate}
}



\ifsubmission
\begin{theorem}
\label{thm:equiv-compiler}
For $\mathcal{X} \in \{\text{quantum extractability},\text{statistical binding}\}$ and $\mathcal{Y} \in$ \\ $\{\text{computationally},\text{statistically}\}$, if $\extcommit$ is a $\mathcal{Y}$-hiding quantum bit commitment satisfying $\mathcal{X}$, then \proref{fig:eqextcom} is a $\mathcal{Y}$-equivocal bit commitment satisfying $\mathcal{X}$. 
\end{theorem}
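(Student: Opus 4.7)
\textbf{Preservation of $\mathcal{X}$.} I need to establish both $\mathcal{Y}$-equivocality and preservation of $\mathcal{X}$ for the compiled scheme; note that $\mathcal{Y}$-hiding of the output follows for free from $\mathcal{Y}$-equivocality. Start with statistical binding. By statistical binding of $\extcommit$, with overwhelming probability each base commitment $\bx_{j,k}$ in iteration $i$ determines a unique bit $b^{(i)}_{j,k}$. Call iteration $i$ \emph{accepted} if $b^{(i)}_{c_i, 0} = b^{(i)}_{c_i, 1}$ and \emph{equivocable} if $b^{(i)}_{1-c_i, 0} \neq b^{(i)}_{1-c_i, 1}$. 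A case analysis over the pairwise consistencies of $(b^{(i)}_{0,0}, b^{(i)}_{0,1})$ and $(b^{(i)}_{1,0}, b^{(i)}_{1,1})$, combined with the uniformity of $c_i$ sampled \emph{after} the four sub-commitments are fixed, shows that $\Pr[\text{iteration } i \text{ accepted and equivocable}] \leq 1/2$. An equivocating opening of the compiled commitment requires every iteration to be both accepted and equivocable, so this happens with probability at most $2^{-\lambda} + \negl(\lambda)$; statistical binding follows, with canonical committed bit $e_{i^*} \oplus b^{(i^*)}_{1-c_{i^*}, 0}$ for any accepted non-equivocable $i^*$. The extractability case is analogous: the compiled extractor $\mathcal{E}$ runs the base extractor on each of the $4\lambda$ sub-commitments, recording $d^*_{i,j,k}$, and after receiving the $\{e_i\}$ outputs $e_{i^*} \oplus d^*_{i^*, 1-c_{i^*}, 0}$ for any $i^*$ with matching unopened extracted values. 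A hybrid that switches sub-commitments one at a time from real to extractor-simulated, combined with the binding implicit in extractability, yields indistinguishability of $\mathsf{Real}$ and $\mathsf{Ideal}$.

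\textbf{$\mathcal{Y}$-equivocality.} I will construct an equivocator $\equivsim$ that processes iterations one at a time. In iteration $i$, $\equivsim$ samples a guess $\alpha_i \leftarrow \{0,1\}$ and then commits to a uniformly random bit twice at position $1-\alpha_i$, and commits separately to $0$ and $1$ at the two sub-commitments of position $\alpha_i$. If the receiver replies with $c_i = 1 - \alpha_i$, the opened position $c_i$ is consistent (so the check passes) while the unopened position $\alpha_i$ carries both $0$ and $1$, enabling later equivocation; if $c_i = \alpha_i$, the iteration would fail, so $\equivsim$ rewinds via Watrous's lemma. By $\mathcal{Y}$-hiding of $\extcommit$, the distribution of $c_i$ given the commitments is $\mathcal{Y}$-indistinguishable from a distribution independent of $\alpha_i$, so $\Pr[c_i = 1 - \alpha_i]$ is within $\mathcal{Y}$-distance of $1/2$ uniformly over the receiver's input state, meeting the precondition of Watrous's lemma with $q = 1/2$. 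In the commit phase, $\equivsim$ sends uniformly random $e_i$; in the open phase, given target bit $b$, it selects $\alpha_i$ so that $\bx_{1-c_i, \alpha_i}$ decommits to $b \oplus e_i$, which is always possible by construction. A hybrid across the $\lambda$ iterations, each step replacing a single honest iteration by a $\equivsim$-simulated one, concludes the proof.

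\textbf{Main obstacle.} The main technical hurdle is applying Watrous's rewinding lemma against an arbitrary QPT receiver $\cR^*$ holding quantum auxiliary state. Its preconditions -- approximate independence of $c_i$ from $\alpha_i$, and boundedness of the success probability away from $0$ and $1$ -- must hold \emph{uniformly} over every input state to the rewinding circuit, which requires invoking $\mathcal{Y}$-hiding of $\extcommit$ against precisely such non-uniform quantum-state-holding adversaries. Threading $\cR^*$'s residual (possibly entangled) state through $\lambda$ sequentially rewound iterations via hybrids is then a delicate but essentially standard bookkeeping exercise, and it is what forces the $\lambda$-round sequential structure of the compiled protocol.
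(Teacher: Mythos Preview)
Your proposal is correct and follows essentially the same approach as the paper: the binding argument via the per-iteration ``accepted and equivocable'' probability bound, the equivocator via a guess-and-rewind strategy using Watrous's lemma (your $\alpha_i$ is the paper's $1-\widehat{c}$), and the extractability preservation via a sub-commitment-by-sub-commitment hybrid. The only minor deviation is that the paper's extractor outputs the \emph{majority} of $\{e_i \oplus d^*_{i,1-c_i}\}_{i\in[\lambda]}$ rather than picking a single consistent iteration $i^*$, and you have a small notational overload of $\alpha_i$ (used both for the equivocator's guess and for the decommit-phase selector), but neither affects correctness.
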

\else
\begin{theorem}
\label{thm:equiv-compiler}
For $\mathcal{X} \in \{\text{quantum extractability},\text{statistical binding}\}$ and $\mathcal{Y} \in \{\text{computationally},\text{statistically}\}$, if $\extcommit$ is a $\mathcal{Y}$-hiding quantum bit commitment satisfying $\mathcal{X}$, then \proref{fig:eqextcom} is a $\mathcal{Y}$-equivocal bit commitment satisfying $\mathcal{X}$. 
\end{theorem}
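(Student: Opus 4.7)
The plan is to split the theorem into three essentially orthogonal claims: (i) the compiler lifts $\mathcal{Y}$-hiding of $\extcommit$ to $\mathcal{Y}$-equivocality of \proref{fig:eqextcom}, (ii) it preserves statistical binding, and (iii) it preserves quantum extractability. Since equivocality already implies hiding, the hiding-preservation half of the theorem is subsumed by (i). I would treat the three ingredients separately, since they rely on disjoint properties of the base commitment.

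\textbf{Statistical binding.} Fix an unbounded committer $\cC^*$. By statistical binding of $\extcommit$, except on a negligible event each of the $4\lambda$ base commitments admits at most one accepting opening bit; denote these by $d_{i,c,\alpha}^\circ \in \{0,1,\bot\}$. Call session $i$ \emph{flexible} if $d_{i,1-c_i,0}^\circ \neq d_{i,1-c_i,1}^\circ$, which is the only way the decommit phase permits two distinct $\widehat d_i$ and hence two distinct values of $b$. Because $c_i$ is sampled uniformly after all four base commitments of session $i$ are finalized, and because the commit-phase check $u=v$ forces consistency on the challenged pair in every non-aborting transcript, the probability (conditioned on any history) that session $i$ is both non-aborted and flexible is at most $1/2+\negl(\lambda)$. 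Telescoping across the $\lambda$ sequential sessions gives that with probability at least $1-2^{-\lambda}-\negl(\lambda)$, at least one session is rigid, and $b$ is then uniquely pinned down as $d_{i,1-c_i,\alpha}^\circ \oplus e_i$ at that session.

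\textbf{Equivocality via Watrous rewinding.} I would build $\equivsim_{\cR^*}$ session by session, applying Lemma~\ref{lem:qrl} to each session in isolation. In session $i$, $\equivsim_{\cR^*,\com}$ samples a guess bit $\gamma_i$ and a fresh bit $d$, commits to $(d,d)$ in the pair indexed by $\gamma_i$ and to $(d,1-d)$ in the pair indexed by $1-\gamma_i$ using the honest base committer, then runs $\cR^*$ to obtain the challenge $c_i$. If $c_i=\gamma_i$ the matched pair opens honestly and the session succeeds; otherwise the equivocator marks the attempt as failed. By $\mathcal{Y}$-hiding of $\extcommit$ applied to the pair indexed by $1-\gamma_i$, the distribution of $c_i$ is $\mathcal{Y}$-close to independent of $\gamma_i$, so the per-attempt success probability is within $\negl(\lambda)$ of $1/2$; plugging $p_0,q$ in a negligible window around $1/2$ and $\epsilon=\negl(\lambda)$ into Lemma~\ref{lem:qrl} yields an efficient quantum circuit producing the conditional honest post-session state up to $\negl(\lambda)$ trace distance. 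After all $\lambda$ sessions succeed, $\equivsim_{\cR^*,\com}$ sends uniformly random $e_i\in\{0,1\}$. In the open phase, on input target bit $b$, for each $i$, $\equivsim_{\cR^*,\open}$ selects $\alpha_i$ so that the committed value $d_{i,1-c_i,\alpha_i}$ equals $b\oplus e_i$ (exactly one of the two mismatched sub-commitments qualifies) and invokes the honest base opener on that sub-commitment. Indistinguishability of $\mathsf{Ideal}_b$ from $\mathsf{Real}_b$ then follows by a hybrid argument over the $\lambda$ sessions. I expect the main obstacle to be in formally verifying the hypotheses of Lemma~\ref{lem:qrl} for adversarial $\cR^*$ with quantum advice: one must produce a \emph{uniform} (over $\gamma_i$) lower bound $p_0$ on per-attempt success via a reduction to hiding that treats $\cR^*$'s residual state and the four base commitments as the distinguisher's quantum advice, and check the spectral condition $p_0(1-p_0)\leq q(1-q)$.

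\textbf{Extractability preservation.} Given an adversarial compiled committer $\cC^*_{\com}$ and the base extractor $\cE_{\extcommit}$, the compiled extractor $\cE$ runs the honest receiver strategy but invokes $\cE_{\extcommit}$ on each of the four base-commit subroutines of every session, obtaining extracted bits $d_{i,c,\alpha}^*$. After the commit phase, $\cE$ outputs $b^*=d_{i^\star,1-c_{i^\star},0}^*\oplus e_{i^\star}$ for the first session $i^\star$ whose extracted unopened-pair values agree, defaulting to $0$ otherwise (an event of probability $\leq 2^{-\lambda}+\negl(\lambda)$ by applying the binding analysis to the extracted $d^*$ values). A standard hybrid argument, replacing base-commit phases one at a time by their extractor-driven counterparts and invoking Definition~\ref{def:extcom} at each step, shows the compiled ideal distribution is indistinguishable from the real one; in particular, aborts on the commit-phase check $u=v$ match in both worlds since each base opening equals either the extracted bit or $\bot$. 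Combining these three ingredients yields Theorem~\ref{thm:equiv-compiler}.
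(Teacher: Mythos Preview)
Your overall decomposition and strategy match the paper's: statistical binding via the cut-and-choose structure, equivocality via Watrous rewinding session by session, and extractability preservation via a sequential hybrid over the base extractor. The binding and extractability sketches are correct (your ``first rigid session'' extractor is a minor variant of the paper's ``majority'' extractor, and both work for the same reason).

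However, your equivocator has a concrete bug that makes $\mathsf{Ideal}_b$ perfectly distinguishable from $\mathsf{Real}_b$. You commit to $(d,d)$ in the matched pair and $(d,1-d)$ in the mismatched pair using the \emph{same} bit $d$. In the commit phase the receiver sees the matched pair opened to $w=d$; in the decommit phase it sees $\alpha_i$ and the opened value $\widehat d_i=b\oplus e_i$. Since you set $\alpha_i$ to pick out whichever of $\{d,1-d\}$ equals $\widehat d_i$, you always have $\alpha_i=\widehat d_i\oplus d=\widehat d_i\oplus w$. In the real protocol, $w=d_{i,c_i}$, $\widehat d_i=d_{i,1-c_i}$, and $\alpha_i$ are three independent uniform bits, so the relation $\alpha_i=\widehat d_i\oplus w$ holds only with probability $1/2$. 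A receiver that simply checks this equation distinguishes your simulation from the honest interaction with advantage $1/2$ per session. The paper avoids this by sampling an \emph{independent} bit $z$ for the matched pair and a separate $d$ for the mismatched pair; then $w=z$ carries no information about $d$, and $(w,\alpha_i,\widehat d_i)$ is uniform on $\{0,1\}^3$ as in the real world. The fix is trivial once seen, but as written your equivocality argument does not go through.
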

\fi

\ifsubmission
\else
Definitions of statistical binding and computational hiding for quantum commitments can be found in~\cref{def:binding} and~\cref{def:hiding}. Definitions of quantum extractability and quantum equivocality are given in~\cref{def:extcom} and~\cref{def:eqcom}.
\fi


These theorems follow from establishing statistical binding, equivocality, and extractability of the commitment in \proref{fig:eqextcom}, as we do next.
First, we note that if $\mathsf{Commit}$ is statistically binding, then \proref{fig:eqextcom} is statistically binding. For any adversarial committer strategy, consider the $\secp$ unopened pairs of commitments after the commit phase. Since $\mathsf{Commit}$ is statistically binding, we can assume that each of the $2\secp$ commitments is binding to a particular bit, except with negligible probability. Now, if any single pair contains binding commitments to the same bit $d_{i}$, then the committer will only be able to open its \proref{fig:eqextcom} commitment to the bit $d_i \oplus e_i$. Thus, to violate binding, the adversarial committer will have to have committed to different bits in each of the $\secp$ unopened pairs. However, in this case, the committer will be caught and the receiver will abort except with probability $1/2^\secp$.

\subsection{Equivocality}
\ifsubmission
The equivocal simulator $(\equivsim_{\cR^*,\com}, \equivsim_{\cR^*,\open})$ is obtained via the use of Watrous's quantum rewinding lemma~\cite{STOC:Watrous06}; a full statement of the lemma is available in the full version. For the purposes of defining the simulation strategy, it will be sufficient (w.l.o.g.) to consider a restricted receiver $\cR^*$ as follows, for the $i^{th}$ sequential step of the protocol.
In our simulation, the state of $\cR^*$ will be initialized to the final state at the end of simulating the $(i-1)^{th}$ step.
\else
The equivocal simulator $(\equivsim_{\cR^*,\com}, \equivsim_{\cR^*,\open})$ is obtained via the use of the quantum rewinding lemma (Lemma \ref{lem:qrl})~\cite{STOC:Watrous06}.
For the purposes of defining the simulation strategy, it will be sufficient (w.l.o.g.) to consider a restricted receiver $\cR^*$ as follows, for the $i^{th}$ sequential step of the protocol.
In our simulation, the state of $\cR^*$ will be initialized to the final state at the end of simulating the $(i-1)^{th}$ step.
\fi
\begin{enumerate}
\item $\cR^*$ takes a quantum register $\mathsf{W}$, representing its auxiliary quantum input. $\cR^*$ will use two additional quantum registers that function as work space: $\mathsf{V}$, which is an arbitrary (polynomial-size) register, and $\mathsf{A}$, which is a single qubit register. The registers $\mathsf{V}$ and $\mathsf{A}$ are initialized to their all-zero states before the protocol begins.

\item Let
$\textsf{M}$ denote the polynomial-size register used by $\cC$ to send messages to $\cR^*$. After carrying out step 2(a) by running on registers $(\mathsf{W},\mathsf{V},\mathsf{A},\mathsf{M})$, $\cR^*$ measures the register $\mathsf{A}$ to obtain a bit $c_i$, for Step 2(b), which it sends back to $\cC$.

\item Next, $\cR^*$ computes the decommitment phases (with messages from $\cC$ placed in register $\mathsf{M}$) according to Step 2(c). $\cR^*$ outputs registers $(\mathsf{W},\mathsf{V},\mathsf{A}, \mathsf{M})$.
\end{enumerate}

Any polynomial-time quantum receiver can be modeled as a receiver of this restricted form followed by some polynomial-time post-processing of the restricted receiver’s output. The same post-processing can be applied to the output of the simulator that will be constructed for the given restricted receiver.


Following~\cite{STOC:Watrous06}, we define a simulator that uses two additional registers, $\mathsf{C}$ and $\mathsf{Z}$.
$\mathsf{C}$ is a one qubit register, while $\mathsf{Z}$ is an auxiliary register used to implement the computation that will be described next.
Consider a quantum procedure $\equivsim_{\partial}$ that implements the strategy described in  \proref{fig:eqsimulator} using these registers.

\protocol
{\proref{fig:eqsimulator}}
{Equivocal Simulator.}
{fig:eqsimulator}
{

\underline{\textbf{Circuit $\equivsim_{\partial}$.}}
\begin{enumerate}
\item Sample a uniformly random classical bit $\widehat{c}$, and store it in register $\mathsf{C}$.
\item Sample uniformly random bits $(z,d)$.
\item 
If $\widehat{c} = 0$, initialize committer input as follows, corresponding to four sequential sessions:
\begin{itemize}
    \item For the first two sessions, set committer input to $z$.
    \item For the third and fourth sessions, set committer input to $d$ and $1-d$ respectively.
\end{itemize}
\item 
If $\widehat{c} = 1$, initialize committer input as follows, corresponding to four sequential sessions:
\begin{itemize}
    \item For the first and second sessions, set committer input to $d$ and $1-d$ respectively.
    \item For the last two sessions, set committer input to $z$.
\end{itemize}
\item Run the commitment phase interaction between the honest committer and $\cR^*$'s sequence of unitaries on registers initialized as above.
\item Measure the qubit register $A$ to obtain a bit $c$. If $c=\widehat{c}$, output 0, otherwise output $1$.
\end{enumerate}
}

Next, we would like to apply Watrous's quantum rewinding lemma to the $\equivsim_{\partial}$ circuit.
In order to do this, we will argue that the probability $p(\psi)$ that this circuit outputs $0$ is such that $|p(\psi)-\frac{1}{2}| = \mathsf{negl}(\lambda)$, regardless of the auxiliary input $\ket{\psi}$ to $\cR^*$. 
This follows from the fact that the commitments are (statistically/computationally) hiding.
In more detail, by definition, Step 5 produces a distribution on the $\cR^*$'s side that is identical to the distribution generated by $\cR^*$ in its interaction with the committer. 
If $|p(\psi)-\frac{1}{2}|$ were non-negligible, then the sequence of unitaries applied by $\cR^*$ could be used to distinguish commitments generated according to the case $\widehat{c} = 0$ from commitments generated according to the case $\widehat{c} = 1$, leading to a contradiction.

Now consider the state of the residual qubits of $\equivsim_{\partial}$ conditioned on a measurement of its output qubit being $0$. The output state of the general quantum circuit $\widehat{\equivsim}$ resulting from applying Watrous's quantum rewinding lemma will have negligible trace distance from this state. This state is over all of the registers discussed above, so the simulator $\equivsim_{\com,\cR^*}$ must further process this state as:
\begin{itemize}
\item Measure the register $\mathsf{C}$, obtaining challenge $c$.
\item Compute decommitment information corresponding to challenge $c$, as in Step 2(c) of the protocol (recall that this information is stored in the message register $\mathsf{M}$). 
\item Output registers $(\mathsf{W}, \mathsf{V}, \mathsf{A}, \mathsf{M})$.  All remaining registers are traced out.
\end{itemize}


The simulator $\equivsim_{\cR^*,\com}$ executes all $\lambda$ sequential interactions in this manner, and then
 samples $e_1, \ldots, e_{\lambda} \leftarrow \{0,1\}^{\lambda}$, as the committer messages for Step 3 of \proref{fig:eqextcom}. It runs the receiver's unitary on the resulting protocol, and outputs the resulting registers $(\mathsf{W}, \mathsf{V}, \mathsf{A}, \mathsf{M})$.
It additionally outputs private state $\mathsf{st} = (c_1,d_1,\ldots, c_\lambda,d_\lambda)$ where $c_i,d_i$ were sampled during the $i$th execution of \proref{fig:eqsimulator}. 

The simulator $\equivsim_{\cR^*,\open}(b,\mathsf{st},\bw, \bv, \ba, \bm)$ 
parses $\mathsf{st}$ as $(c_1,d_1,\ldots, c_\lambda,d_\lambda)$.
For every $i \in [\lambda]$ it does the following:\begin{itemize}
\item Let $\widehat{d}_i = b \oplus e_i$.
\item If $c_i = 0$, it executes the decommitment phase for the $((\widehat{d}_i \oplus d_i)+2)^{th}$ session.
\item If $c_i = 1$, it executes the decommitment phase for the $(\widehat{d}_i \oplus d_i)^{th}$ session.
\end{itemize}
$\equivsim_{\cR^*,\open}$ then executes the receiver's algorithm on these decommitments and outputs the resulting state. Note that each decommitment will be to the bit $\widehat{d}_i = b \oplus e_i$.

\ifsubmission
To complete the proof of equivocality, we must establish that the view of the receiver interacting with an honest committer the view of the receiver interacting with the equivocator are indistinguishable. This follows from the (statistical/computational) hiding of the commitment scheme, via an identical argument to the one used above. In particular, if the equivocal simulator produces a distribution that is distinguishable from the real distribution, then there exists a session $i \in [\lambda]$ such that the distribution in the real and ideal experiments upto the ${i-1}^{th}$ session are indistinguishable, but upto the $i^{th}$ session are distinguishable. This contradicts the above guarantee given by the quantum rewinding lemma, since for any $i$, the post-processed residual qubits of $\equivsim_{\partial}$ are indistinguishable from the state of $\cR^*$ after the $i^{th}$ sequential session in the real protocol (due to the hiding of the commitment scheme).
\else
To complete the proof of equivocality, we must establish that the distributions $\mathsf{Real}_b$ (the view of the receiver interacting with an honest committer) and $\mathsf{Ideal}_b$ (the view of the receiver interacting with the equivocator) of Definition \ref{def:eqcom}, are indistinguishable. This follows from the (statistical/computational) hiding of the commitment scheme, via an identical argument to the one used above. In particular, if the equivocal simulator produces a distribution that is distinguishable from the real distribution, then there exists a session $i \in [\lambda]$ such that the distribution in the real and ideal experiments upto the ${i-1}^{th}$ session are indistinguishable, but upto the $i^{th}$ session are distinguishable. This contradicts the above guarantee given by the quantum rewinding lemma, since for any $i$, the post-processed residual qubits of $\equivsim_{\partial}$ are indistinguishable from the state of $\cR^*$ after the $i^{th}$ sequential session in the real protocol (due to the hiding of the commitment scheme).
\fi

\subsection{Extractability}

Next, we prove that \proref{fig:eqextcom} satisfies extractability as long as the underlying commitment $(\extcommit,\extdecommit)$ is extractable; in other words, this compiler preserves extractability. Consider the following extractor $\cE_{\cC^*}$.

\begin{itemize}
    \item For $i \in [\secp]$:
    \begin{itemize}
        \item Execute four sequential commitment sessions with $\cC^*$, where the extractor of $\extcommit$ is run on all sessions. Obtain  outputs $(\brho_{\cC^*}, \st_{\cR,i,0}, d'_{i,0}, \st_{\cR,i,1}, d'_{i,1})$, where $\brho_{\cC^*}$ is the final state of the committer after engaging in all four sequential sessions, and $\st_{\cR,i,0}$, $\st_{\cR,i,1}$ are receiver states output by the extractor corresponding to the first and third sessions. 
        \item Corresponding to Step 2(b), compute and send $c_i \leftarrow \zo$.
        \item Execute Step 2(c) identically to \proref{fig:eqextcom}.
    \end{itemize}
    \item Executes Step 3 of \proref{fig:eqextcom}, receiving bits $\{e_i\}_{i \in [\secp]}$. Fix $b^*$ to be the most frequently ocurring bit in $\{e_i \oplus d'_{i,1-c_i}\}_{i \in [\secp]}$, and output the final state of $\cC^*$, the receiver states $\{\st_{\cR,i,0},\st_{\cR,i,1}\}_{i \in [\secp]}$, and the extracted bit $b^*$.
\end{itemize}

Since the extractor is applied sequentially to parts of the committer, we have that the extractor runs in time polynomial in $\lambda$.
Next, we will prove indistinguishability between the distributions $\mathsf{Real}$ and $\mathsf{Ideal}$ defined by the above extractor via a hybrid argument, and based on the definition of extractability of the underlying commitment $(\extcommit, \extdecommit)$. 
In more detail, recall that $\mathsf{Real}$ denotes the distribution $(\rho_{\cC^*,\final}, b)$ where $\rho_{\cC^*,\final}$ denotes the final state of $\cC^*$ and $b$ the output of the receiver, and $\mathsf{Ideal}$ denotes the final committer state and opened bit after the opening phase of the scheme is run on the output of the extractor.

Note that there are a total of $4\lambda$ commitment sessions. Denote the real experiment by $\mathsf{Hybrid}_{0,1,1}$.
For each $i \in [\lambda], j \in [0,1], k \in [0,1]$, define $\mathsf{Hybrid}_{i,j,k}$ to be the distribution obtained as follows.

\paragraph{Commit Phase.}
Set $a = 1$ and do the following:
\begin{enumerate}
    \item If $a = \lambda + 1$, obtain $\{e_i\}_{i \in [\lambda]}$ from $\cC^*$ and end.
    \item If $a < i$,
    \begin{enumerate}
    \item Execute four sessions sequentially, namely:
    \begin{itemize}
        \item $(\bx_{0,0}, \by_{0,0}, d_{0,0}^{(a)}) \leftarrow \cE_{\cC^*_{i,0,0}}(\brho)$
        \item $(\bx_{0,1}, \by_{0,1}, d_{0,1}^{(a)}) \leftarrow \cE_{\cC^*_{i,0,1}}(\bx_{0,0})$
        \item $(\bx_{1,0}, \by_{1,0}, d_{1,0}^{(a)}) \leftarrow \cE_{\cC^*_{i,1,0}}(\bx_{0,1})$
        \item  
        $(\bx_{1,1}, \by_{1,1}, d_{1,1}^{(a)}) \leftarrow \cE_{\cC^*_{i,1,1}}(\bx_{1,0})$
    \end{itemize}
    \item Send choice bit $c_a \leftarrow \{0,1\}$ to $\cC^*$.
    \item Execute two decommitments, obtaining the opened bits:
    \begin{itemize}
        \item $(\bx', u_a) \leftarrow \extdecommit \langle \cC^*(\bx_{1,1}), \cR(\by_{c_a,0}) \rangle$ and
        \item $(\bx'',v_a) \leftarrow \extdecommit \langle \cC^*(\bx'), \cR(\by_{c_a,1}) \rangle$. 
        \item Abort if $u_a \neq v_a$.
    \end{itemize}
    \item Update state $\brho = \bx''$, and set $a = a+1$.
    \end{enumerate}
    \item If $a = i$, 
    \begin{enumerate}
    \item Execute four sessions sequentially, namely:
    \begin{itemize}
        \item 
        Set $(\bx_{0,0}, \by_{0,0}, d_{0,0}^{(a)}) \leftarrow \cE_{\cC^*_{i,0,0}}(\brho)$.
        \item 
        If $(j,k) = (0,0)$, set $(\bx_{0,1}, \by_{0,1}) \leftarrow \extcommit \langle \cC^*(\bx_{0,0}), \cR \rangle$ and $d_{0,1}^{(a)} = \tilde{\bot}$.
        
        Else set $(\bx_{0,1}, \by_{0,1}, d_{0,1}^{(a)}) \leftarrow \cE_{\cC^*_{i,0,1}}(\bx_{0,0})$.
        \item If $j = 0$, set $\bx_{1,0}, \by_{1,0} \leftarrow \extcommit \langle \cC^*(\bx_{0,1}), \cR \rangle$ and $d_{1,0}^{(a)} = \tilde{\bot}$.
        
        Else set $(\bx_{1,0}, \by_{1,0}, d_{1,0}^{(a)}) \leftarrow \cE_{\cC^*_{i,1,0}}(\bx_{0,1})$.
        \item  If $j = 0$ or $k = 0$, set $\bx_{1,1}, \by_{1,1} \leftarrow \extcommit \langle \cC^*(\bx_{1,0}), \cR \rangle$ and $d_{1,1}^{(a)} = \tilde{\bot}$.
        
        Else set $(\bx_{1,1}, \by_{1,1}, d_{1,1}^{(a)}) \leftarrow \cE_{\cC^*_{i,1,1}}(\bx_{1,0})$.
    \end{itemize}
    \item Send choice bit $c_a \leftarrow \{0,1\}$ to $\cC^*$.
    \item Execute two decommitments, obtaining the opened bits:
    \begin{itemize}
        \item $(\bx', u_a) \leftarrow \extdecommit \langle \cC^*(\bx_{1,1}), \cR(\by_{c_a,0}) \rangle$ and
        \item $(\bx'',v_a) \leftarrow \extdecommit \langle \cC^*(\bx'), \cR(\by_{c_a,1}) \rangle$. 
        \item Abort if $u_a \neq v_a$.
    \end{itemize}
    \item Update state $\brho = \bx''$, and set $a = a+1$.
    \end{enumerate}
    \item If $a > i$,
    \begin{enumerate}
        \item Execute four sessions sequentially, namely:
        \begin{itemize}
            \item $\bx_{0,0}, \by_{0,0} \leftarrow \extcommit \langle \cC^*(\brho), \cR \rangle$,
            \item $\bx_{0,1}, \by_{0,1} \leftarrow \extcommit \langle \cC^*(\bx_{0,0}), \cR \rangle$, 
            \item $\bx_{1,0}, \by_{1,0} \leftarrow \extcommit \langle \cC^*(\bx_{0,1}), \cR \rangle$ and 
            \item $\bx_{1,1}, \by_{1,1} \leftarrow \extcommit \langle \cC^*(\bx_{1,0}), \cR \rangle$.
        \end{itemize}
        \item Send choice bit $c_a \leftarrow \{0,1\}$ to $\cC^*$.
        \item Execute two decommitments, obtaining the opened bits:
        \begin{itemize}
            \item $(\bx', u_a) \leftarrow \extdecommit \langle \cC^*(\bx_{1,1}), \cR(\by_{c_a,0}) \rangle$ and
            \item $(\bx'',v_a) \leftarrow \extdecommit \langle \cC^*(\bx'), \cR(\by_{c_a,1}) \rangle$. 
            \item Abort if $u_a \neq v_a$.
        \end{itemize}
        \item Update state $\brho = \bx''$, and set $a = a+1$.
    \end{enumerate}
\end{enumerate}
\paragraph{Decommit Phase.}
\begin{enumerate}
    \item Obtain $b$ from $\cC^*$, set $\brho_0 = \brho$.
    \item For $a \in [\lambda]$, obtain $\alpha_a$ from $\cC^*$. Then execute $(\brho_a, b_a) = \extdecommit \langle \cC^*_{\open}(\brho_{a-1}), \cR_{\open}(\by_{1-c_a, \alpha_a}) \rangle$, aborting if $b_a = \bot$ for any $a \in [\lambda]$. 
    \item For every $a < = i$, if 
    $d_{1-c_a,\alpha_a}^{(a)} \neq \tilde{\bot}$ and
    $b_a \neq d_{1-c_a, \alpha_a}^{(a)}$, output $\mathsf{FAIL}$.
    \item Set $b^*$ to be the majority bit in $\{b_a\}_{a \in [\lambda]}$.
\end{enumerate}
\underline{The output of $\mathsf{Hybrid}_{i,j,k}$ is $(\brho, b^*)$.}




%
\begin{claim}
There exists a negligible function $\mu(\cdot)$ such that for every $i \in [\lambda]$ and every QPT distinguisher $\cD$, 
$$|\Pr[\cD(\mathsf{Hybrid}_{i-1, 1,1}) = 1] -  \Pr[\cD(\mathsf{Hybrid}_{i, 0,1}) = 1]| = \mu(\lambda),$$
$$|\Pr[\cD(\mathsf{Hybrid}_{i, 0, 0}) = 1] -  \Pr[\cD(\mathsf{Hybrid}_{i, 0,1}) = 1]| = \mu(\lambda),$$
$$|\Pr[\cD(\mathsf{Hybrid}_{i, 0, 1}) = 1] -  \Pr[\cD(\mathsf{Hybrid}_{i, 1, 0}) = 1]| = \mu(\lambda),$$
and
$$|\Pr[\cD(\mathsf{Hybrid}_{i, 1, 0}) = 1] -  \Pr[\cD(\mathsf{Hybrid}_{i, 1, 1}) = 1]| = \mu(\lambda).$$
\end{claim}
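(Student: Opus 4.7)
The plan is to derive each of the four indistinguishabilities from the extractability of the underlying commitment $(\extcommit,\extdecommit)$ via a direct reduction. Inspecting the hybrid specification, every consecutive pair in the chain $\mathsf{Hybrid}_{i-1,1,1},\mathsf{Hybrid}_{i,0,0},\mathsf{Hybrid}_{i,0,1},\mathsf{Hybrid}_{i,1,0},\mathsf{Hybrid}_{i,1,1}$ differs in precisely one of the four base commitment sessions executed during the $i$-th sequential step of $\proref{fig:eqextcom}$: the changing session switches from an honest execution $\extcommit\langle \cC^*,\cR\rangle$ to an invocation of the base commitment's extractor (with the decommit phase later run on the extractor's output state). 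So each bulleted statement asks me to compare two distributions that differ in a single session's $\mathsf{Real}$ vs.\ $\mathsf{Ideal}$ mode in the sense of \Cref{def:extcom}.

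First I would fix a pair of consecutive hybrids and a QPT distinguisher $\cD$, and build an adversary against the extractability of $\extcommit$. The adversary internally emulates the entire $\proref{fig:eqextcom}$ experiment against $\cC^*$: for every session that is executed identically in both hybrids it plays the receiver (or runs the base extractor) itself, threading $\cC^*$'s persistent post-measurement state through the sequential interactions exactly as prescribed. When it reaches the unique session on which the two hybrids disagree, it forwards that session to the external extractability challenger, getting back the committer's post-commit state (and, in the Ideal branch, a candidate extracted bit and a receiver state on which to run the decommit). It then finishes the commit phase, runs the decommit phase as specified, computes $(\brho,b^*)$, and outputs $\cD$'s verdict.

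The key observation is that the $\mathsf{FAIL}$ condition of the hybrids, namely that the opened bit on the distinguished session disagrees with the extracted $d^{(i)}_{\ast,\ast}$ recorded for that session, coincides exactly with the $\mathsf{FAIL}$ condition in \Cref{def:extcom} for the external challenger. Consequently, when the challenger plays its Real branch the reduction samples the hybrid that uses an honest execution for the target session, and when it plays Ideal it samples the hybrid that uses extraction; any noticeable gap in $\cD$'s behavior lifts to a noticeable extractability attack, and extractability of $\extcommit$ therefore bounds each consecutive hybrid gap by a negligible $\mu(\lambda)$.

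The main obstacle I expect is bookkeeping rather than conceptual: since $\cC^*$'s quantum state evolves through four sequential sessions per step and through $\lambda$ sequential steps, one must verify that outsourcing control of a single session to the external challenger preserves the joint distribution of all downstream registers, including the state handed to the decommit phase and the bits $\{e_i\}$. Because the sessions are purely sequential and the extractability definition explicitly returns the committer's post-commit state (and the opening, in the Ideal case, is performed against the extractor's simulated receiver state), this splicing is clean, but needs to be written out to justify the equalities. Once this is done, taking a union bound over the $4\lambda$ transitions from $\mathsf{Hybrid}_{0,1,1}$ to $\mathsf{Hybrid}_{\lambda,1,1}$ yields the overall $\mathsf{Real}\approx_c\mathsf{Ideal}$ claim needed for \Cref{def:extcom}, completing the proof of extractability of $\proref{fig:eqextcom}$.
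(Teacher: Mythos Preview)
Your proposal is correct and follows essentially the same approach as the paper: both reduce each consecutive hybrid transition to the extractability of the base commitment by having the reduction act as an adversarial committer, forward only the single distinguished session to the external challenger, and package all subsequent interaction (the remaining commit sessions and the full decommit phase of \proref{fig:eqextcom}) into the reduction's opening strategy $\cC^*_{\open}$, so that the external challenger's $\mathsf{FAIL}$ event lines up with the hybrid's new $\mathsf{FAIL}$ trigger. The only point to make explicit when you write it out is that the $\mathsf{FAIL}$ triggers from the already-extracted earlier sessions are common to both hybrids and are computed internally by the reduction; only the distinguished session's $\mathsf{FAIL}$ is delegated to the challenger.
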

\begin{proof} Suppose this is not the case, then there exists an adversarial committer $\cC^*$, a distinguisher $\cD$, a polynomial $p(\cdot)$, and an initial committer state $\psi$ that corresponds to a state just before the beginning of commitment $(\iota', j', k')$
where $(\iota, j, k, \iota', j', k') \in \{(i - 1, 1, 1, i, 0, 0), (i, 0, 0, i, 0, 1), (i, 0, 1, i, 1, 0), (i, 1, 0, i, 1, 1)\}$
and where $$\Pr[\cD(\mathsf{Hybrid}_{\iota, j, k}) = 1] -  \Pr[\cD(\mathsf{Hybrid}_{\iota', j',k'}) = 1]| \geq \frac{1}{p(\lambda)}.$$

Consider a reduction/adversarial committer $\widetilde{\cC}$ that obtains initial state $\psi$, then internally runs $\cC^*$, forwarding all messages between an external receiver and $\cC^*$ for the $(\iota', j', k')^{th}$ commitment. 
The commit phase then ends, with committer and receiver states resp. $(\bx_\com, \by_\com)$.
$\cR$ then begins executing the opening phase with the external receiver, and on input $\bx_\com$, continues to run the remaining commit sessions with $\cC^*$ internally -- generating for it the messages on behalf of $\cR$ according to $\hyb_{\iota, j, k}$. 
The only modification is that it forwards $\cC^*$'s opening of the $(\iota', j', k')^{th}$ commitment (if and when it is executed) to the external challenger, and records the opened bit $b'$.
After the experiment is complete including the rest of the decommit phase, $\cC^*$ outputs the final state $\bx_\final = \brho$ of the committer.

Then there is a distinguisher $\cD'$ that obtains from the challenger $(\bx_\final, b)$ (or the variable $\mathsf{FAIL}$),  runs the distinguisher $\cD$ on $(\bx_\final, b)$, and by construction, 
obtains advantage $\frac{1}{p(\lambda)}$ in distinguishing the real and ideal experiments for the $(\iota', j', k')$ commitment, concluding the proof of the claim.
\end{proof}

To complete the proof, we note that the only difference between $\mathsf{Hybrid}_{\lambda,1,1}$ and $\mathsf{Ideal}$ is the way the bit $b^*$ is computed. In more detail, in  $\mathsf{Hybrid}_{\lambda,1,1}$, the bit $b^*$ is computed as the majority of $\{e_{1-c_a,\alpha_a}^{(a)}\}_{a \in [\lambda]}$, where $e_{1-c_a,\alpha_a}^{(a)}$ is the bit extracted from commitment $e_{1-c_a,\alpha_a}$ in the $a^{th}$ iteration.
Now for every commitment strategy and every $a \in [\lambda]$, by correctness of extraction (which follows from the indistinguishability between real and ideal distributions for every commitment),
the probability that $e_{1-c_a,\alpha_a} \neq e_{c_a,\alpha_a}$ and yet the receiver does not abort in Step 2(c) in the $i^{th}$ sequential repetition, is $\leq \frac{1}{2} + \negl(\secp)$. 
Thus, this implies that the probability that $\mathsf{Hybrid}_{i,j,k}$ and $\mathsf{Ideal}$ differ is at most $2^{-\secp/2} + \negl(\secp) = \negl(\secp)$. 
Since $\mathsf{Real}$ is identical to $\mathsf{Hybrid}_{0,1,1}$, this completes the proof.

\newcommand{\eqcommit}{\mathsf{EqCommit}}
\newcommand{\eqdecommit}{\mathsf{EqDecommit}}

\section{Quantum Extractable Commitments}
\label{section: extractable com}
We construct extractable commitments by making use of the following building blocks.

\begin{itemize}
\ifsubmission
    \item We let $(\eqcommit,\eqdecommit)$ denote any statistically binding, equivocal quantum commitment scheme. Such a commitment can be obtained by applying the compiler from last section to Naor's commitment scheme~\cite{JC:Naor91}.
\else
    \item We let $(\eqcommit,\eqdecommit)$ denote any quantum statistically binding and equivocal commitment scheme, satisfying \cref{def:binding} and \cref{def:eqcom}. Such a commitment can be obtained by applying the compiler from last section to Naor's commitment scheme~\cite{JC:Naor91}.
\fi
    
    \item For a suitable polynomial $k(\cdot)$, let $h : \{0,1\}^{k(\lambda)} \times \{0,1\}^{\lambda^2} \to \{0,1\}$ be a universal hash function that is evaluated on a random seed $s \in \{0,1\}^{k(\lambda)}$ and input $x \in \{0,1\}^{\lambda^2}$.

\end{itemize}
  
Our extractable commitment scheme is described formally in Figure~\ref{fig:ext-com}. We show how to commit to a single bit, though commitment to any arbitrary length string follows by sequential repetition. Correctness of the protocol follows by inspection. In the remainder of this section, we prove the following theorem.

\ifsubmission
\begin{theorem}
\label{thm:qextcom}
\proref{fig:ext-com} describes a quantum statistically hiding and extractable bit commitment whenever $(\eqcommit,\eqdecommit)$ is instantiated with any quantum statistically binding and equivocal bit commitment scheme.
\end{theorem}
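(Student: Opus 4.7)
The plan is to establish the two properties separately. For statistical hiding, I will follow the template of~\cite{C:BouFeh10,C:DFLSS09} and argue that, conditioned on the malicious receiver $\cR^*$ passing the measurement-check subprotocol, the joint state consisting of the $\lambda$ surviving BB84 registers together with the receiver's view has high conditional min-entropy with respect to $x = (x_1,\dots,x_\lambda)$, once $\theta = (\theta_i)_{i\in[\lambda]}$ is revealed. Concretely, I will invoke the quantum sampling framework of~\cite{C:BouFeh10}: the random test set $T$ and the check ``$\hat{x}_i = x_i$ whenever $\hat\theta_i=\theta_i$'' on $T$ together certify that the state on the unchecked indices is close (in a suitable sense) to a superposition supported on strings whose Hamming distance from $x$, measured in the complementary basis indicated by $\theta$, is small. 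This yields an average quantum min-entropy bound of the form $\mathbf{H}_\infty(x \mid \bsigma,\theta) \geq \Omega(\lambda)$ on the receiver side, where $\bsigma$ is the receiver's quantum side information at the end of the measurement-check phase.

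Granted this entropy bound, I will partition $x$ into $\sqrt{\lambda}$ blocks $\vec{x}_1,\dots,\vec{x}_{\sqrt{\lambda}}$ of length $\sqrt{\lambda}$ and argue, block by block, that $(h_j,h_j(\vec{x}_j))$ is statistically indistinguishable from $(h_j,u_j)$ for uniform $u_j$, using the leftover hash lemma with quantum side information (Lemma~\ref{thm:privacy-amplification}). A hybrid argument over $j \in [\sqrt{\lambda}]$ then shows $(h_j,h_j(\vec{x}_j)\oplus b)_{j}$ is statistically close to uniform and therefore hides $b$. The only subtlety is that the receiver's view also includes the \emph{equivocal commitment} transcripts to $(\hat\theta_i,\hat x_i)$; since these are generated by the receiver itself against the honest committer in the hiding experiment, they can be absorbed into $\bsigma$ without affecting the entropy argument. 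I expect the crux here to be carefully bookkeeping conditional min-entropy across the measurement-check subprotocol; the Bouman--Fehr sampling lemma is tailored to exactly this task, so the argument should go through with a standard block-by-block application.

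For extractability, I will construct the extractor $\cE_{\cC^*}$ as sketched in Section~\ref{subsec:ext-compiler}. Running the receiver's side of the protocol, the extractor delays all measurements on the $2\lambda$ BB84 states received from $\cC^*$. Instead of honestly committing to $(\hat\theta_i,\hat x_i)$ for each $i$, it invokes the equivocator $\equivsim_{\cC^*}$ of the underlying commitment scheme $(\eqcommit,\eqdecommit)$ to produce commitment transcripts that can later be opened arbitrarily. Upon receiving the challenge set $T$, the extractor picks fresh random bases $\hat\theta_i$ for $i\in T$, measures the corresponding BB84 registers to get $\hat x_i$, and uses the equivocator to open the $i\in T$ commitments to these values. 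After the committer sends $(\theta_i)_{i\in[\lambda]},(h_j,h_j(\vec x_j)\oplus b)_{j\in[\sqrt\lambda]}$, the extractor measures the remaining $\lambda$ BB84 registers in the bases $\theta_i$ to recover $x_1,\dots,x_\lambda$, reassembles $\vec x_1,\dots,\vec x_{\sqrt\lambda}$, and outputs $b^* := \bigoplus\text{-majority}$ (or simply any index $j$) of $h_j(\vec x_j) \oplus (h_j(\vec x_j)\oplus b)$, which equals the committed $b$ whenever the commit phase succeeds.

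Indistinguishability of $\mathsf{Real}$ and $\mathsf{Ideal}$ will be shown by a sequence of $2\lambda$ hybrids replacing the receiver's honest commitments to $(\hat\theta_i,\hat x_i)$ with equivocated ones one at a time; each step is justified by the equivocality of $(\eqcommit,\eqdecommit)$ (Definition~\ref{def:eqcom}). A final hybrid delays the measurement of the unchallenged BB84 registers until after $\theta$ is announced, which is a perfect equivalence since measuring a fixed-basis state commutes with the timing of the measurement. Correctness of $b^*$ on the support where the open phase accepts then follows because the receiver's accept predicate forces $x_i = \hat x_i$ whenever $\theta_i = \hat\theta_i$, so the extractor's reconstructed $x$ is consistent with the committer's own string whenever the opening is accepted; hence $b^* = b$ except with negligible probability. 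I anticipate the main obstacle to be the statistical-hiding analysis rather than extractability, specifically the need to invoke~\cite{C:BouFeh10} in a form that also handles the equivocated commitment transcripts as quantum side information; this is a conceptual rather than technical hurdle and should be resolvable by routing the argument through the standard quantum sampling abstraction.
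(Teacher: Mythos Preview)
Your proposal is essentially the paper's own proof: equivocate the receiver's commitments and delay measurements to build the extractor (with a per-commitment hybrid over the equivocator), and invoke the Bouman--Fehr quantum sampling lemma followed by a block-wise leftover hash argument for statistical hiding. Two small points to tighten. First, your parameters track the technical overview rather than \proref{fig:ext-com} itself: the actual protocol uses $2\lambda^3$ BB84 states, $\lambda^3$ survivors, and $\lambda$ blocks of $\lambda^2$ bits each, which is what makes the per-block min-entropy comfortably dominate the sampling error. Second, the subtlety you flag in the hiding proof is slightly misplaced: the issue is not merely that the commitment transcripts sit in the receiver's side information, but that the Bouman--Fehr lemma requires \emph{classical} strings $\hat\theta,\hat x$ to be fixed before $T$ is revealed; this is exactly where the \emph{statistical binding} of $(\eqcommit,\eqdecommit)$ is used (the paper makes this explicit). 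Your extraction-correctness sketch is right in spirit, but note that ``any single index $j$'' does not suffice; the paper takes the majority over the $\lambda$ blocks and bounds the probability that more than half of the blocks disagree with the extractor's measured values via the decommit-phase check on positions with $\theta_i=\hat\theta_i$.
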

\else
\begin{theorem}
\label{thm:qextcom}
\proref{fig:ext-com} describes a quantum statistically hiding and extractable bit commitment (satisfying \cref{def:binding} and \cref{def:extcom}) whenever $(\eqcommit,\eqdecommit)$ is instantiated with any quantum statistically binding and equivocal bit commitment scheme (satisfying \cref{def:binding} and \cref{def:eqcom}).
\end{theorem}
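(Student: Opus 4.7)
The plan is to establish the two required properties of \proref{fig:ext-com} separately: statistical hiding and extractability. I will not need to invoke any additional assumption beyond statistical binding and equivocality of the base scheme $(\eqcommit,\eqdecommit)$.

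\paragraph{Extractability.} I would construct the extractor $\cE_{\cC^*}$ exactly as sketched in the technical overview. On input $\cC^*$'s $2\lambda$ BB84 states, the extractor does \emph{not} measure them, but instead keeps them in a coherent register $\mathsf{Q}$. In the measurement-check subprotocol, where the extractor plays committer inside $(\eqcommit,\eqdecommit)$, it invokes the equivocator $\cQ_{\cC^*}$ promised by equivocality of the base commitment to produce $2\lambda$ equivocal commitments whose openings are not fixed. Upon receiving the challenge set $T$ from $\cC^*$, the extractor samples fresh $\hat\theta_i \gets \{0,1\}$ for $i\in T$, measures the corresponding qubits in $\mathsf{Q}$ in basis $\hat\theta_i$ to obtain $\hat x_i$, and uses $\cQ_{\cC^*}$ to equivocate the $2\lambda$ commitments to the values $(\hat\theta_i,\hat x_i)_{i\in T}$ (on indices outside $T$ the equivocator opens to arbitrary values, which are never revealed). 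Once $\cC^*$ sends the final message $\big((\theta_i)_{i\in[\lambda]}, (h_j, h_j(\vec x_j)\oplus b)_{j\in[\sqrt\lambda]}\big)$, the extractor measures the remaining $\lambda$ qubits of $\mathsf{Q}$ in the bases $\theta_i$, reads off $x$, partitions into $\vec x_1,\ldots,\vec x_{\sqrt\lambda}$, and computes candidate bits $b^*_j := h_j(\vec x_j) \oplus (h_j(\vec x_j)\oplus b)$, outputting the majority (or any, since on an honest transcript they agree) as $b^*$. Indistinguishability of $\mathsf{Real}$ and $\mathsf{Ideal}$ then follows by a hybrid argument that (i) replaces each honest commitment to $(\hat\theta_i,\hat x_i)$ by an equivocated one using~\cref{def:eqcom}, which is computationally (or statistically) indistinguishable to $\cC^*$, and (ii) argues that conditioned on the extractor not aborting in the check phase, the bit $b^*$ agrees with any bit $\cC^*$ could successfully open to. Point (ii) is essentially the same statistical argument as in the binding analysis below, using that the values revealed in the check phase, together with the bases $\theta_i$ later announced, pin down the relevant $x_i$ values up to negligible error.

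\paragraph{Statistical hiding.} This is the place where I expect the main work to lie, and I would handle it via the Bouman--Fehr quantum sampling framework cited in the overview. Against any unbounded malicious receiver $\cR^*$, I would first replace $\cR^*$'s equivocal commitments to $(\hat\theta_i,\hat x_i)$ by an equivalent description in which, without loss of generality, $\cR^*$ applies an arbitrary isometry to the received BB84 register producing a classical ``commitment transcript'' register and a residual quantum side-information register $\mathsf{E}$. The measurement-check subprotocol, together with the statistical binding of $(\eqcommit,\eqdecommit)$, lets me extract from $\cR^*$ classical ``committed'' values $(\hat\theta_i,\hat x_i)_{i\in[2\lambda]}$ that it must consistently open on the test set $T$. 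Applying the Bouman--Fehr sampling lemma to the test $\{\hat x_i = x_i \text{ whenever } \theta_i=\hat\theta_i, \ i\in T\}$ with the unopened indices as the unobserved population, I would conclude that, conditioned on the sender not aborting, the joint state of $(x_i)_{i\notin T}$ and $\mathsf{E}$ is negligibly close to a mixture of states in which, for a $1-o(1)$ fraction of indices with $\theta_i \neq \hat\theta_i$, the bit $x_i$ has essentially one full bit of min-entropy from $\mathsf{E}$'s perspective (because $\cR^*$ measured in the wrong basis). Since $\theta$ is independent of $\hat\theta$, a standard Chernoff bound gives that in each block $\vec x_j$ of $\sqrt\lambda$ bits, at least $\Omega(\sqrt\lambda)$ bits have such high conditional min-entropy, and so $\mathbf{H}_\infty(\vec x_j\mid \mathsf{E}, \theta,\hat\theta) = \Omega(\sqrt\lambda)$. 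Invoking the leftover hash lemma with quantum side information (the version stated in the preliminaries), $(h_j, h_j(\vec x_j))$ is $2^{-\Omega(\sqrt\lambda)}$-close to uniform given $(\mathsf{E},\theta,\hat\theta)$, so the masked bit $h_j(\vec x_j)\oplus b$ statistically hides $b$ for each $j$, and hence for the protocol as a whole.

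\paragraph{Main obstacle.} The extractability hybrid is essentially bookkeeping on top of~\cref{def:eqcom}, and is conceptually routine; the real technical delicacy is in the hiding argument. The subtlety is that a malicious $\cR^*$'s behavior in the check phase can depend coherently on the BB84 register, so one cannot directly treat the ``committed'' $(\hat\theta_i,\hat x_i)$ as independent of the quantum side information. Making this rigorous requires carefully invoking statistical binding of $(\eqcommit,\eqdecommit)$ to classically fix the values being tested, and then applying the Bouman--Fehr sampling theorem in the ``purified'' picture so that the resulting min-entropy bound holds jointly with the quantum side-information register $\mathsf{E}$. I would structure the write-up so that this one step is isolated as a lemma, invoked as a black box from~\cite{C:BouFeh10}, leaving the rest of the argument as a clean reduction to the leftover hash lemma.
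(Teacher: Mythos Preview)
Your proposal follows essentially the same approach as the paper: the extractor is exactly as you describe (equivocate the inner commitments, defer measurements, read off $b^*$ by majority after learning $\theta$), and statistical hiding is proved via statistical binding of $(\eqcommit,\eqdecommit)$ to fix $(\hat\theta,\hat x)$, the Bouman--Fehr sampling lemma applied in the purified (EPR-pair) picture, and the quantum leftover hash lemma. The parameters in the actual protocol are $2\lambda^3$ BB84 states and $\lambda$ blocks of size $\lambda^2$ (not $2\lambda$ and $\sqrt\lambda$ as in the overview you followed), but this is cosmetic.

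One concrete imprecision to fix in your extractability argument: your justification for point (ii) is not the right one. You attribute correctness of $b^*$ to ``the values revealed in the check phase, together with the bases $\theta_i$ later announced, pin[ning] down the relevant $x_i$ values,'' but the check phase only touches indices in $T$, whereas $b^*$ is computed from indices in $\overline{T}$. The paper's actual argument is that the extractor samples a uniformly random $\widehat{\theta}$ on \emph{all} positions but only reveals $\widehat{\theta}_i$ for $i\in T$; the $\widehat{\theta}_i$ for $i\in\overline{T}$ remain hidden from $\cC^*$ and are part of the receiver state $\by_\com$. In the decommit phase the receiver checks $\widehat{x}_j = x_j$ precisely at indices $j\in\overline{T}$ with $\widehat{\theta}_j=\theta_j$, and since $\widehat{\theta}_j$ is uniform and independent of $\cC^*$'s view, any position where the opened $x^{(\ell)}$ differs from the extracted $\widehat{x}^{(\ell)}$ is caught with probability $1/2$. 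This is what drives the $2^{-\lambda/2}$ bound on $\Pr[\mathsf{FAIL}]$; your ``same as the binding analysis below'' pointer does not supply this step.
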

\fi

Throughout, we will consider non-uniform adversaries, but for ease of exposition we drop the indexing by $\secp$.


\protocol
{\proref{fig:ext-com}}
{Extractable Commitment.}
{fig:ext-com}
{

\textbf{Committer $\cC$ Input:} Bit $b \in \{0,1\}$.\\

\underline{\textbf{The Protocol: Commit Phase.}}
\begin{enumerate}
\item \label[step]{step:sample} $\cC$ chooses $x \leftarrow \{0, 1\}^{2\lambda^3}$, $\theta \leftarrow \{+, \times \}^{2\lambda^3}$ and sends $\ket{x}_\theta$ to $\cR$. 


\item \label[step]{step:commit} $\cR$ chooses $\widehat{\theta} \leftarrow \{+, \times\}^{2\lambda^3}$
and obtains $\widehat{x} \in \{0, 1\}^{2\lambda^3}$ by measuring $\ket{x}_{\theta}$
in basis $\widehat{\theta}$.

$\cR$ commits to $\widehat{\theta}$ and $\widehat{x}$ position-wise: 
$\cR$ and $\cC$ execute sequentially $2\secp^3$ equivocal commitment sessions with $\cR$ as committer and $\cC$ as receiver.
That is, for each $i \in [2\secp^3]$, they compute $(\bx_{\com,i}, \by_{\com,i}) \leftarrow \eqcommit \langle \cR(\widehat{\theta}_i,\widehat{x}_i), \cC \rangle$.

\item \label[step]{step:challenge} $\cC$ sends a random test subset $T \subset [{2\lambda^3}]$ of size ${\lambda^3}$ to $\cR$. 

\item \label[step]{step:open} For every $i \in T$, $\cR$ and $\cC$ engage in 
$(\widehat{\theta}_i,\widehat{x}_i) \gets \eqdecommit\langle \cR(\bx_{\com,i}), \cC(\by_{\com,i}) \rangle$, and $\cC$ aborts if any commitment fails to open.

\item \label[step]{step:verify} $\cC$ checks that $x_i = \widehat{x}_i$ whenever $\theta_i = \widehat{\theta}_i$. If all tests pass, $\cC$ proceeds with the protocol,
otherwise, $\cC$ aborts. 

\item The tested positions are discarded by both parties: $\cC$ and $\cR$ restrict $x$ and $\theta$, respectively $\widehat{x}$ and $\widehat{\theta}$,
to the $\lambda^3$ indices $i \in \overline{T}$.
$\cC$ sends $\theta$ to $\cR$.

\item $\cC$ partitions the remaining $\lambda^3$ bits of $x$ into $\lambda$ different $\lambda^2$-bit strings $x^{(1)},\dots,x^{(\lambda)}$. For each $\ell \in [\lambda]$, sample a seed $s_\ell \gets \{0,1\}^{k(\lambda)}$ and compute $d_\ell \coloneqq h(s_\ell,x^{(\ell)})$. Then output $(s_\ell,b \oplus d_\ell)_{\ell \in [\lambda]}$.

\end{enumerate}

\underline{\textbf{The Protocol: Decommit Phase.}}
\begin{enumerate}
    \item $\cC$ sends $b$ and $(x^{(1)},\dots,x^{(\lambda)})$ to $\cR$.
    \item If either of the following fails, $\cR$ rejects and outputs $\bot$. Otherwise, $\cR$ accepts and outputs $b$.
    \begin{itemize}
        \item Let $\{s_\ell,v_\ell\}_{\ell \in [\lambda]}$ be the message received by $\cR$ in step 7. Check that for all $\ell \in [\lambda]$, $v_\ell = b \oplus h(s_\ell,x^{(\ell)})$.
        \item For each $j \in [\lambda^3]$ such that $\widehat{\theta}_j = \theta_j$, check that $\widehat{x}_j = x_j$.
    \end{itemize}
\end{enumerate}

}

\subsection{Extractability}
Consider any adversarial committer $\cC^*$ with advice $\brho$. The extractor $\cE_{\cC^*}(\brho)$ is constructed as follows. 

    \begin{enumerate}
        \item Run the first message algorithm of $\cC^*$ on input $\brho$, obtaining message $\bpsi$.
        \item For $i \in [2\lambda^3]$, 
        sequentially execute equivocal commitment sessions with the equivocal simulator $\cQ_{R^*,\com}$, where $R^*$ is the part of $\cC^*$ that participates as receiver in the $i^{th}$ session. Session $i$ results in output $(\mathbf{z}_i, \by_{\com,i})$, where $\mathbf{z}_i$ is stored by the extractor, and $\by_{\com,i}$ is the current state of $\cC^*$, which is fed as input into the next session.
        \item Obtain $T$ from $\cC^*$, and sample $\widehat{\theta} \gets \{+,\times\}^{2\secp^3}$. Let $\bpsi_i$ denote the $i^{th}$ qubit of $\bpsi$, and measure the qubits $\bpsi_i$ for $i \in T$, each in basis $\widehat{\theta}_i$. Let $\{\widehat{x}_i\}_{i \in [T]}$ be the results of the measurements.
        \item Let $\bx_\com$ be the current state of $\cC^*$. For each $i \in [T]$, execute $\cQ_{R^*,\open}((\widehat{\theta}_i,\widehat{x}_i),\mathbf{z}_i, \bx_\com)$, where $R^*$ is the part of $\cC^*$ that participates in the $i^{th}$ opening, and $\bx_\com$ is updated to be the current state of $\cC^*$ after each sequential session.
        \item If $\cC^*$ aborts at any point, abort and output $\bot$, otherwise continue.
        \item Discard tested positions and restrict $\widehat{\theta}$ to the indices in $\overline{T}$. Obtain $\theta \in \{+,\times\}^{\lambda^3}$ from $\cC^*$.
        Measure the qubits $\bpsi_i$ in basis $\theta_i$ to obtain $\widehat{x}_i$ for $i \in \overline{T}$, and then partition $\widehat{x}$ into $\lambda$ different $\lambda^2$-bit strings $\widehat{y}_1, \ldots, \widehat{y}_\lambda$.
        \item Obtain $\{s_\ell, v_\ell\}_{\ell \in [\secp]}$ from $\cC^*$. 
        Let $b^*$ be the most frequently occurring bit in $\{ h(s_\ell, \widehat{x}^{(\ell)}) \oplus v_\ell\}_{\ell \in [\secp]}$. Output $(\bx_\com,\by_\com,b^*)$, where $\bx_\com$ is the resulting state of $\cC^*$ and $\by_\com = (\theta, \widehat{\theta},\widehat{x})$.
    \end{enumerate}
    \ifsubmission
    We now prove that $\cE_{\cC^*}$ is a secure extractor; for space reasons, a full definition of extractability in the quantum setting is in the full version. 
    \else
    We now prove that $\cE_{\cC^*}$ is a secure extractor in the sense of~\cref{def:extcom}. We do so via a sequence of hybrids.
    \fi
    \\

        \noindent {\textbf{$\mathsf{Hyb}_1$.}} Define distribution $\mathsf{Hyb}_1$ identically to $\mathsf{Real}$ (the honest interaction), except that in Step 2, for $i \in [2\lambda^3]$, sequentially execute equivocal commitment sessions using the equivocal simulator $\cQ_{R^*,\com}$, as described in the extractor. In Step 4, for every $i \in T$, open the $i^{th}$ commitment to $(\widehat{\theta}_i,\widehat{x}_i)$ using $\cQ_{R^*,\open}$, as described in the extractor.
        
    
        
        Suppose there exists a polynomial $p(\cdot)$ such that
        $$\Big|\Pr[\cD^*(\bsigma,\mathsf{Hyb}_1) = 1] - \Pr[\cD^*(\bsigma,\mathsf{Hyb}_0) = 1] \Big|  \geq \frac{1}{p(\secp)}$$
        then consider a sequence of sub-hybrids $\hyb_{0,0} = \hyb_0$ and $\hyb_{0,1}, \hyb_{0,2}, \ldots, \hyb_{0,2\lambda^3}$, where for every $j \in [1,2\lambda^3]$, $\hyb_{0,j}$ is identical to $\hyb_0$ except for $i \in [j]$, we sequentially execute equivocal commitment sessions using the equivocal simulator $\cQ_{R^*,\com}$, as described in the extractor. For $i \in [j+1, 2\lambda^3]$, the commitments are constructed identically to $\hyb_{0}$. 
        In Step 4, for every $i \in T$ such that $i \leq j$, open the $i^{th}$ commitment to $(\widehat{\theta}_i,\widehat{x}_i)$ using $\cQ_{R^*,\open}$, as described in the extractor. Open remaining commitments $i \in T$ identically to $\hyb_0$.
        Then, we have that for large enough $\lambda$, there exists an index $i = i(\lambda) \in [2\lambda^3]$ such that: $$\Big|\Pr[\cD^*(\bsigma,\mathsf{Hyb}_{0,i-1}) = 1] - \Pr[\cD^*(\bsigma,\mathsf{Hyb}_{0,i}) = 1] \Big|  \geq \frac{1}{p(\secp) \cdot 2\lambda^3}$$
        This implies that there is an initial state $\psi$, which is the state of the receiver just prior to the $i^{th}$ equivocal commitment, and a reduction $\cR$ initialized with $\psi$ that internally executes $\cR^*$, externally forwarding only the commit phase of the $i^{th}$ commitment. 
        It then ends the commit phase with the external committer. Next, it begins the open phase, executing all other commitments in Step 2 as part of the open phase internally with $\cR^*$. It also sends a randomly sampled $T$ in Step 3, and executes the decommitments in Step 4 internally with $\cR^*$, except forwarding the opening of the $i^{th}$ commitment externally. It completes the rest of the hybrid according to $\hyb_{0,i-1}$ and outputs its final state as $\bx_{\final}$. Then the existence of $\cD^*$ that distinguishes this state $\bx_\final$ between the cases when the $i^{th}$ commitment was generated honestly vs. equivocated, as described above, contradicts the equivocality of this commitment. 
        Therefore, we conclude that there is a negligible function $\mu(\cdot)$ such that:
        $$\Big|\Pr[\cD^*(\bsigma,\mathsf{Hyb}_1) = 1] - \Pr[\cD^*(\bsigma,\mathsf{Hyb}_0) = 1] \Big|  = \mu(\secp)$$

        \noindent {\textbf{$\mathsf{Hyb}_2$.}} This is identical to $\mathsf{Hyb}_1$,  except that the verifier measures qubits of $\ket{x}_{\theta}$ {\em only after} obtaining a description of the set $T$, and \emph{only measures} the qubits $i \in [T]$.
        The output of this experiment is identical to $\mathsf{Hyb}_1$, therefore for any QPT distinguisher $(\cD^*,\bsigma)$, $$\Pr[\cD^*(\bsigma,\mathsf{Hyb}_3) = 1] = \Pr[\cD^*(\bsigma,\mathsf{Hyb}_2) = 1].$$
        
        Moreover, the only difference between $\mathsf{Hyb}_2$ and $\mathsf{Ideal}$ is that $\mathsf{Ideal}$ outputs $\mathsf{FAIL}$ when the 
        message $b$ opened by $\cC^*$ is not $\bot$ and differs from the one extracted by $\cE_{\cC^*}$ .
        Therefore, to derive a contradiction it will suffice to prove that there exists a negligible function $\nu(\cdot)$ such that
        $$\Pr[\mathsf{FAIL}|\mathsf{Ideal}] = \nu(\lambda).$$
        
        Consider any sender $\cC^*$ that produces a committer state $\bx_\com$ and then decommits to message $b'$ using strings $(y_1,\dots,y_\secp)$ during the decommit phase.
        Let $T' \subseteq [\lambda]$ denote the set of all indices $\ell \in [\lambda]$ such that the corresponding $x^{(\ell)} \neq v_\ell$, where $\widehat{x}^{(\ell)}$ denotes the values obtained by the extractor in Step 6. Then we have the following claim.
        \begin{claim}
        There exists a negligible function $\nu(\cdot)$ such that $$\Pr[|T'| > \lambda/2 ] = \nu(\lambda)$$
        where the probability is over the randomness of the extractor.
        \end{claim}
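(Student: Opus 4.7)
The plan is to bound the joint probability $\Pr\bigl[\,|T'| > \lambda/2 \,\wedge\, b \neq \bot\,\bigr]$; this suffices to conclude the claim, since $\mathsf{FAIL}$ requires the receiver to not output $\bot$, and whenever $|T'| \leq \lambda/2$ at least half of the $\ell \in [\lambda]$ satisfy $x^{(\ell)} = \widehat{x}^{(\ell)}$, which (using the consistency check $v_\ell = b \oplus h(s_\ell, x^{(\ell)})$ enforced by item 1 of the decommit verification) forces the extractor's majority bit $b^*$ to agree with the committer's bit $b$. For each position $i \in \overline{T}$ (relabeled to $[\lambda^3]$ after discarding the tested indices), define $Y_i \coloneqq \mathbf{1}[\widehat{x}_i \neq x_i]$, where $x_i$ is the bit the committer reveals and $\widehat{x}_i$ is the extractor's measurement outcome of position $i$ in basis $\theta_i$ in Step 6 of the extractor. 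Set $K \coloneqq \sum_{i \in \overline{T}} Y_i$. Because each block $\ell \in T'$ contains at least one index $i$ with $Y_i = 1$, I have $|T'| \leq K$, so it suffices to bound $\Pr[K > \lambda/2 \wedge \text{receiver accepts}]$.

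The receiver's decommit check (item 2 of the decommit verification in \proref{fig:ext-com}) enforces $\widehat{x}_j = x_j$ at every $j \in \overline{T}$ with $\widehat{\theta}_j = \theta_j$. Equivalently, for the receiver to accept, every position $i \in \overline{T}$ with $Y_i = 1$ must satisfy $\widehat{\theta}_i \neq \theta_i$. The crux of the argument is that the bits $\{\widehat{\theta}_i\}_{i \in \overline{T}}$ are uniform and independent of all other random variables in the experiment. Indeed, in Step 3 of the extractor $\widehat{\theta}$ is sampled uniformly at random; the restriction $\widehat{\theta}_{\overline{T}}$ is never used to perform any measurement, because Step 6 measures positions in $\overline{T}$ using the committer-supplied basis $\theta_i$ rather than $\widehat{\theta}_i$. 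Moreover, $\widehat{\theta}_{\overline{T}}$ is never revealed to the committer: the equivocal commitments corresponding to indices in $\overline{T}$ are produced by the equivocator in Step 2 without reference to $\widehat{\theta}_{\overline{T}}$ and are never opened in Step 4. Thus the joint distribution of $\{\theta_i, x_i, \widehat{x}_i\}_{i \in \overline{T}}$, and hence of each $Y_i$, is independent of $\widehat{\theta}_{\overline{T}}$.

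Combining these observations, I condition on every random variable except $\widehat{\theta}_{\overline{T}}$: the indicators $\{\mathbf{1}[\widehat{\theta}_i = \theta_i]\}_{i \in \overline{T}}$ become independent fair coins, so
$$\Pr[\text{receiver accepts} \mid \text{rest}] \;\leq\; \prod_{i\,:\,Y_i = 1} \Pr[\widehat{\theta}_i \neq \theta_i] \;=\; 2^{-K}.$$
Taking expectation over the remaining randomness yields
$$\Pr\bigl[\,|T'| > \lambda/2 \,\wedge\, \text{receiver accepts}\,\bigr] \;\leq\; \E\bigl[\mathbf{1}[K > \lambda/2]\cdot 2^{-K}\bigr] \;\leq\; 2^{-\lambda/2},$$
which is negligible in $\lambda$. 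The main subtlety in formalizing this argument is carefully justifying the independence of $\widehat{\theta}_{\overline{T}}$ from the committer's view: this relies on (i) the fact that the equivocator used by the extractor produces commitments whose distribution does not depend on $\widehat{\theta}_{\overline{T}}$, together with the fact that the $\overline{T}$-indexed commitments are never opened; and (ii) the fact that the extractor's measurements on $\overline{T}$-positions are performed in basis $\theta$ rather than $\widehat{\theta}$, so that $\widehat{\theta}_{\overline{T}}$ plays no role in determining the measurement outcomes $\{\widehat{x}_i\}_{i \in \overline{T}}$.
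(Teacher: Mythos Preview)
Your proposal is correct and follows essentially the same approach as the paper: both argue that each position $i$ where $x_i \neq \widehat{x}_i$ is caught with independent probability $1/2$ over the uniform choice of $\widehat{\theta}_i$ (which is never revealed to the committer for $i \in \overline{T}$), yielding the bound $2^{-\lambda/2}$. Your write-up is in fact more careful than the paper's: you explicitly bound the joint event with receiver acceptance (which is what is actually needed, and what the paper's proof implicitly relies on despite the claim being stated unconditionally), and you spell out the independence of $\widehat{\theta}_{\overline{T}}$ from the committer's view and from the measurement outcomes, whereas the paper only asserts this in one line. The only cosmetic difference is that the paper reasons at the block level (one coin per bad block) while you reason at the position level and use $|T'| \leq K$; both give the same bound.
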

        \begin{proof}
        For every $\ell \in [\lambda]$, we have that (over the randomness of the extractor):
        $$\Pr \Big[\cR_{\mathsf{open}}(\by_{\mathsf{com}}) \text{ outputs }\bot\text{ in }\langle \cC^*_{\mathsf{open}} (\bx_{\mathsf{com}}), \cR_{\mathsf{open}} (\by_{\mathsf{com}}) \rangle \ \Big| \ x^{(\ell)} \neq \widehat{x}^{(\ell)} \Big] \geq \frac{1}{2}.$$ Indeed, the receiver will reject if for some position $i$ for which $x^{(\ell)} \neq \widehat{x}^{(\ell)}$, it holds that $\theta_i = \widehat{\theta}_i$. Since $\widehat{\theta}$ was sampled uniformly at random, this will occur for a single $i$ with independent probability 1/2. This implies that $\Pr[|T'| > \lambda/2] \leq \frac{1}{2^{\lambda/2}}$, and the claim follows.
        \end{proof}
        By construction of $\cE_{\cC^*}$, $\Pr[\mathsf{FAIL}|\mathsf{Ideal}] < \Pr[|T'| > \lambda/2]$, and therefore it follows that there exists a negligble function $\nu(\cdot)$ such that $$\Pr[\mathsf{FAIL}|\mathsf{Ideal}] = \nu(\lambda).$$

\ifsubmission
The proof of that the extractable commitment scheme described in~\cref{fig:ext-com} is statistically hiding follows readily from quantum sampling techniques developed by~\cite{C:BouFeh10}, and is deferred to the full version.
\else

\subsection{Statistical Hiding}
\label{subsec:stat-hiding}

In this subsection, we prove that our extractable commitment scheme described in~\cref{fig:ext-com} is statistically hiding, assuming that $\eqcommit$ is a statistically binding (equivocal) commitment scheme.

Our proof closely follows techniques of~\cite{C:BouFeh10}, who developed techniques for analyzing~\cite{FOCS:CreKil88}-style oblivious transfer protocols based on quantum sampling techniques. We import a lemma from~\cite{C:BouFeh10}.

\paragraph{\cite{C:BouFeh10} Quantum Sampling.} 

Consider the following interaction between a player $P$ and a challenger $C$.
\begin{enumerate}
    \item $P$ prepares an arbitrary quantum state $\ket{\psi}_{\gray{X}\gray{Y}}$ where $\gray{X}$ is a $2n$-qubit register and $\gray{Y}$ is an arbitrary-size register of $P$'s choice. 
    
    $P$ sends the contents of $\gray{X}$ to $C$ along with two length-$2n$ classical strings $\hat{x} \in \{0,1\}^{2n}$ and $\hat{\theta} \in \{+,\times\}^{2n}$.
    
    \item\label[step]{step:check-T} $C$ samples a uniformly random set of $n$ indices $T \subset [2n]$ as well as a uniformly random $\theta \in \{+,\times\}^{2n}$.
    
    For each $i \in T$ it measures the $i$th qubit of $\gray{X}$ in basis $\hat{\theta}_i$. If there exists $i \in T$ where $\theta_i = \hat{\theta}_i$ and the measurement outcome does not match $\hat{x}_i$, then $C$ aborts the experiment and outputs $\bot$. Otherwise, $C$ sends $\overline{T} \coloneqq [2n] \setminus T$ to $P$.
    
    \item $P$ responds with a subset $S \subseteq \overline{T}$. 
    
    \item\label[step]{step:measure} $C$ measures the indices of $\gray{X}$ specified by $S$, obtaining a length-$|S|$ bitstring $X_S$ of measurement outcomes; let $s$ denote the number of indices $i \in S$ where $\theta_i \neq \hat{\theta}_i$. 
\end{enumerate}

\begin{lemma}[\cite{C:BouFeh10} Quantum Sampling Lemma]
\label{lemma:sampling}
For any computationally unbounded player $P$ and $0 < \varepsilon < 1$, the probability the sampling experiment does not abort and
\[ \mathbf{H}_{\infty}(X_S \mid \gray{X}_{\bar{S}}, \gray{Y}) > s- \varepsilon \cdot |S|. \]
is at most $\negl(|S|^2/n)$.
\end{lemma}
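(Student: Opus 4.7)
The plan is to invoke the quantum sampling framework of \cite{C:BouFeh10}, which lifts classical sampling bounds to the quantum setting at the cost of a square-root loss in failure probability.

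First I would identify the classical sampling problem implicit in the experiment. For each fixing of the player's claims $(\hat x, \hat\theta)$ and the challenger's basis string $\theta$, associate to each index $i \in [2n]$ a classical error indicator $W_i \in \{0,1\}$ defined as $W_i = 1$ iff $\theta_i = \hat\theta_i$ and the outcome of measuring qubit $i$ of $\gray X$ in basis $\theta_i$ differs from $\hat x_i$. The check performed in step~2 of the sampling experiment is exactly the event $W_T = 0$. A classical Hoeffding-type concentration argument then shows that, conditioned on $W_T = 0$ for a uniformly random $T$ of size $n$, the Hamming weight of $W_{\overline T}$ exceeds $\varepsilon|S|$ with probability at most $\exp(-\Omega(\varepsilon^2 |S|^2/n))$.

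Next I would apply the BF10 quantum sampling theorem to lift this classical bound to the actual quantum experiment. The theorem guarantees that the joint post-check state on $(\gray X_{\overline T}, \gray Y)$ is within trace distance $\exp(-\Omega(\varepsilon^2 |S|^2/n))$ of an \emph{ideal} classical-quantum state in which the indicator vector satisfies $\mathsf{wt}(W_{\overline T}) \leq \varepsilon |S|$ pointwise. The square-root blowup inherent to the BF10 reduction is absorbed into the exponent, so the overall failure probability remains $\negl(|S|^2/n)$.

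Finally, conditioned on being in this ideal state, I would derive the min-entropy bound on $X_S$ (the outcomes of measuring $\gray X_S$ in basis $\theta$) given $\gray X_{\overline S}$ and $\gray Y$. On the $|S| - s$ positions where $\theta_i = \hat\theta_i$, the indicator bound forces $X_i = \hat x_i$ on all but a $\varepsilon |S|$-sized set, so these positions contribute essentially no uncertainty. On the $s$ positions where $\theta_i \neq \hat\theta_i$, the BB84 basis-mismatch structure contributes one bit of min-entropy per position, since a state prepared in the complementary basis is unbiased under a measurement in the opposite basis, even when the player retains arbitrary side information. Summing across the two groups gives the claimed bound $\mathbf{H}_\infty(X_S \mid \gray X_{\overline S}, \gray Y) \geq s - \varepsilon |S|$. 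The main obstacle I anticipate is precisely this last step: in the presence of a quantum side register $\gray Y$ that may correlate measurement outcomes across positions, one cannot naively sum single-position uncertainty relations, and the cleanest way to handle this is to invoke a joint entropic uncertainty relation of Tomamichel--Renner type applied to the $s$ mismatched positions together, or equivalently to appeal directly to the min-entropy form of the BF10 theorem, which packages the entire reduction --- classical estimate, square-root lift, and entropic conversion --- in a single step.
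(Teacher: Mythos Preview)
Your proposal is essentially the same approach as the paper's: both invoke the \cite{C:BouFeh10} framework by (i) bounding the classical sampling error, (ii) taking the square-root lift to the quantum setting, and (iii) converting the low-Hamming-weight guarantee into a min-entropy bound. The paper carries this out by normalizing to $\hat x = 0^{2n}$, $\hat\theta = +^{2n}$, citing the classical bound $6e^{-n\delta^2/50}$ from \cite[Appendix B.4]{C:BouFeh10}, applying \cite[Theorem 3]{C:BouFeh10} for the square root, and then \cite[Corollary 1]{C:BouFeh10} for the min-entropy, with the binary entropy $h(\delta n/|S|)$ appearing as the loss term before reparametrizing to $\varepsilon$.

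One remark on your third paragraph: the position-by-position accounting (``one bit per mismatched position, nothing from matched positions except the $\varepsilon|S|$ errors'') is not how the bound actually comes out, and you are right to flag it as the obstacle. The correct loss term is $h(\delta n/|S|)\cdot|S|$, coming from the support size of a low-weight superposition, not a simple subtraction of error positions; there is no way to make the per-coordinate summation rigorous in the presence of $\gray Y$. Your proposed fix---invoking the packaged min-entropy form of the \cite{C:BouFeh10} theorem directly---is exactly what the paper does via their Corollary~1, so once you commit to that route the arguments coincide.
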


\begin{proof}(Sketch.)
This lemma follows readily from techniques developed in~\cite{C:BouFeh10} to relate quantum sampling to classical sampling. Following~\cite{C:BouFeh10}, it suffices to analyze the case where $\hat{x} = 0^{2n}$ and $\hat{\theta} = +^{2n}$, since all other settings of $\hat{x}$ and $\hat{\theta}$ follow from an identical argument under a change of basis. 

In \cref{step:check-T} of the quantum sampling experiment, $C$ verifies that for every index in a random choice of $n$ indices $T \subset [2n]$, measuring the corresponding qubit in the computational basis yields $0$. As observed in~\cite{C:BouFeh10}, this is a natural quantum analogue of the following classical sampling experiment (\cite[Example 4]{C:BouFeh10}) on a length-$2n$ bitstring $X$:
\begin{enumerate}
    \item randomly select a size-$n$ subset $T \subset [2n]$,
    \item randomly select a string $\theta \in \{0,1\}^{2n}$,
    \item verify that $X_i = 0$ on each index $i \in T$ where $\theta_i = 1$, and abort otherwise.
\end{enumerate}
Let $X_{\overline{T}}$ denote the string $X$ restricted to the $n$ indices $\overline{T} = [2n] \setminus T$. For any initial bitstring $X$, the ``classical error probability'' is the probability that the sampling experiment succeeds and $X_{\overline{T}}$ has relative Hamming weight $\geq \delta$ is at most $6e^{-n \delta^2/50}$ (bound derived in \cite[Appendix B.4]{C:BouFeh10}).

\cite[Theorem 3]{C:BouFeh10} demonstrates that the ``quantum error probability'' is at most the square root of the classical error probability. That is, if the verification in \cref{step:check-T} passes, then the residual state on $\gray{X}$ restricted to $\overline{T}$ is within trace distance $\sqrt{6e^{-n \delta^2/50}}$ of a superposition of states with relative Hamming weight at most $\delta$.

Suppose for a moment that the state were \emph{exactly} a superposition of states with relative Hamming weight at most $\delta$. Now consider the restriction of $\gray{X}$ to a subset of indices $S \subset \overline{T}$ where exactly $s$ positions have $\hat{\theta}_i \neq \theta_i$. The relative Hamming weight of the state when restricted to $S$ is at most $\delta n/ |S|$. As long as $\delta n/ |S| < 1/2$, by~\cite[Corollary 1]{C:BouFeh10}, we have
\[ \mathbf{H}_{\infty}(X_S \mid \gray{X}_{\overline{S}}, \gray{Y}) > s - h(\delta n/|S|)\cdot |S|\]
where $h: [0,1] \rightarrow [0,1]$ is the binary entropy function defined as $h(p) = p \log (p) + (1-p) \log (1-p)$ for $0 < p < 1$ and as $0$ for $p = 0$ or $1$. For any desired $\varepsilon \in (0,1)$, there exists an $\epsilon' \in (0,1/2)$ such that $h(\varepsilon') = \varepsilon$, so we can plug in $\delta = \varepsilon' |S|/n$ to achieve the stated min-entropy lower bound of $s - \varepsilon \cdot |S|$.
\end{proof}


\paragraph{Proof of Statistical Hiding.} We now prove statistical hiding of our extractable commitment scheme in~\cref{fig:ext-com}. The core proof idea is due to~\cite{C:BouFeh10}.

Consider a \emph{purification} of the committer's strategy from~\cref{fig:ext-com}, where instead of sampling random $x \gets \{0,1\}^{2\secp^3}, \theta \gets \{+,\times\}^{2\secp^3}$ and sending $\ket{x}_\theta$ to $\mathcal{R}$ in~\cref{step:sample}, the committer $\mathcal{C}$ generates $2\secp^3$ EPR pairs $(\ket{00} + \ket{11})/\sqrt{2}$, and sends $\mathcal{R}$ one half of each pair. This change has no effect on the receiver's view. In this modification of~\cref{step:sample}, the committer will still sample random $\theta \gets \{+,\times\}^{2\secp^3}$ but does not use it to perform any measurements (and hence $x$ is not yet determined). We now recast~\cref{step:sample,step:open,step:commit,step:challenge,step:verify} of~\cref{fig:ext-com}, i.e., the ``measurement-check subprotocol'' to correspond with this undetectable modification of the committer's behavior:


\begin{enumerate}
\item $\cC$ chooses $2\lambda^3$ EPR pairs $(\ket{00} + \ket{11})/\sqrt{2}$ and sends $\mathcal{R}$ one half of each pair. The committer samples random $\theta \gets \{+,\times\}^{2\lambda^3}$ but does not perform any measurements (and hence $x$ is not yet determined).
\item $\cR$ chooses $\widehat{\theta} \leftarrow \{+, \times\}^{2\lambda^3}$
and obtains $\widehat{x} \in \{0, 1\}^{2\lambda^3}$ by measuring its $2\lambda^3$ EPR pair halves in bases specified by $\widehat{\theta}$.

$\cR$ commits to $\widehat{\theta}$ and $\widehat{x}$ using \emph{statistically binding commitments}.

\item $\cC$ sends a random test subset $T \subset [{2\lambda^3}]$ of size ${\lambda^3}$ to $\cR$. 

\item For every $i \in T$, $\cR$ decommits to $(\widehat{\theta}_i,\widehat{x}_i)$, and $\cC$ aborts if any decommitment is invalid.

\item $\cC$ checks that $x_i = \widehat{x}_i$ whenever $\theta_i = \widehat{\theta}_i$. If all tests pass, $\cC$ proceeds with the protocol,
otherwise, $\cC$ aborts. 
\end{enumerate}

Notice that $\cC$'s messages are completely independent of its bit $b$ up to this point. If $\cR$ passes the check above, then $\cC$ will obtain a $\secp^3$-bit string $x$ by measuring its qubits (i.e., EPR pair halves) in positions $\overline{T} \coloneqq [2\secp] \setminus T$ in the bases specified in $\theta$. $\cC$ then partitions $x$ into $\secp$ different $\secp^2$-bit strings strings $x^{(1)},\dots,x^{(\secp)}$ and sends the receiver
\[ (s_1,b \oplus h(s_1,x^{(1)})),\dots,(s_\secp,b \oplus h(s_\secp,x^{(\secp)})), \]
where each $s_1,\dots,s_\secp$ is a random seed for the universal hash function $h$. By the leftover hash lemma (\cref{thm:privacy-amplification}), $b$ will be statistically hidden if each $x^{(\ell)}$ has sufficient independent min-entropy.

To that end, observe (following~\cite[Section 5]{C:BouFeh10}) that the committer decides to accept the receiver's commitment openings $(\hat{\theta}_i,\hat{x}_i)_{i \in [T]}$ \emph{only if} a certain ``quantum sampling experiment'' succeeds. Since the commitments are statistically binding, the values of $\hat{x},\hat{\theta}$ that $\mathcal{C}$ would accept are statistically determined after $\mathcal{R}$ commits. Let $\ket{\psi}_{\gray{X}\gray{Y}}$ be the joint state of the committer and receiver where $\gray{X}$ corresponds to the committer's $2\secp^3$ registers, and $\gray{Y}$ corresponds to the receiver's state. Then the committer's check can only pass if the sampling strategy of~\cref{lemma:sampling} succeeds for the committed values of $\hat{x},\hat{\theta}$. We remark that while no efficient $\mathcal{C}$ can actually \emph{perform} this sampling strategy (as it involves breaking the computational hiding of the underlying equivocal commitments), its behavior is statistically indistinguishable from $\mathcal{C}$ that does perform this sampling. 

By applying~\cref{lemma:sampling}, we conclude that each $x^{(\ell)}$ must therefore have min-entropy $\Omega(\lambda^2)$ conditioned on $\mathcal{R}$'s view and on all $x^{(j)}$ for $j \neq \ell$, since by Hoeffding's inequality, $\theta$ and $\hat{\theta}$ differ in at least $\Omega(\lambda^2)$ places within the $\lambda^2$ indices corresponding to $x^{(\ell)}$ with $1-\negl(\lambda)$ probability. Applying~\cref{thm:privacy-amplification} (leftover hash lemma) one by one to each $x^{(\ell)}$, this implies that each $(s_\ell,h(s_\ell,x^{(\ell)}))$ is statistically indistinguishable from uniform even given the receiver's view, and thus $b$ is statistically hidden.


\fi

\section{Quantum Oblivious Transfer from Extractable and Equivocal Commitments}
\label{sec: OT from eecom}
\renewcommand{\commit}{\mathsf{EECom}}

\subsection{Definitions for Oblivious Transfer with Quantum Communication}
\label{sec:otdef}
An oblivious transfer with quantum communication is a protocol between a quantum interactive sender $\cS$ and a quantum interactive receiver $\cR$, where the sender $\cS$ has input $m_0,m_1 \in \{0,1\}^\secp$ and the receiver $\cR$ has input $b \in \{0,1\}$. After interaction the sender outputs $(m_0,m_1)$ and the receiver outputs $(b,m_b)$. 

Let $\cF(\cdot,\cdot)$ be the following functionality. $\cF(b,\cdot)$ takes as input either $(m_0,m_1)$ or $\abort$ from the sender, returns $\mathsf{end}$ to the sender, and outputs $m_b$ to the receiver in the non-$\abort$ case and $\bot$ in the $\abort$ case. $\cF(\cdot,(m_0,m_1))$ takes as input either $b$ or $\abort$ from the receiver,
returns $m_b$ to the receiver, and
returns $\mathsf{end}$ to the sender  
in the non-$\abort$ case, and returns $\bot$ to the sender  in the $\abort$ case. 

\begin{definition}
\label{def:ot}
We let $\langle S(m_0, m_1), R(b) \rangle$ denote an execution of the OT protocol with sender input $(m_0, m_1)$ and receiver input bit $b$. 
We denote by $\brho_{\out,S^*}\langle S^*(\brho), R (b) \rangle$ and $\mathsf{OUT}_{R}\langle S^*(\brho), R(b) \rangle$ the final state of a non-uniform malicious sender $S^*(\brho)$ and the output of the receiver $R(b)$ at the end of an interaction (leaving the indexing by $\secp$ implicit). We denote by $\brho_{\out,R^*}\langle S(m_0, m_1), R^*(\brho) \rangle$ and $\mathsf{OUT}_{S}\langle  S(m_0, m_1), R^*(\brho) \rangle$ the final state of a non-uniform malicious receiver $R^*(\brho)$ and the output of the sender $S(m_0,m_1)$ at the end of an interaction.
We require OT to satisfy the following security properties:
\begin{itemize}
\item \noindent {\bf Receiver Security.} 
For every QPT non-uniform malicious sender $S^*$, there exists a simulator $\simu_{S^*}$ such that the following holds. For any non-uniform advice $\brho,\bsigma$ where $\brho$ and $\bsigma$ may be entangled, bit $b \in \{0,1\}$, and QPT non-uniform distinguisher $D^*$, $\simu_{S^*}(\brho)$ sends inputs $(m_0, m_1)$ or $\mathsf{abort}$ to the ideal functionality ${\fot(b,\cdot)}$, and outputs a final state $\brho_{\mathsf{Sim},\out,S^*}$. The output of the ideal functionality to the receiver in this experiment is denoted by $\mathsf{OUT}_R$. It must hold that
\begin{align*}
\bigg|&\Pr\left[D^*\left(\bsigma,\left(\brho_{\mathsf{Sim},\out,S^*}, \mathsf{OUT}_{R} \right)\right) = 1\right]\\
&-\Pr\left[D^*\left(\bsigma,\left(\brho_{\out,S^*}\langle S^*(\brho), R(b) \rangle, \mathsf{OUT}_{R}\langle S^*(\brho), R(b) \rangle\right) \right) = 1\right]\bigg| = \negl(\secp).
\end{align*}


\item \noindent {\bf Sender Security.} For every QPT non-uniform malicious receiver $R^*$, there exists a  simulator $\simu_{R^*}$ such that the following holds. For any non-uniform advice $\brho,\bsigma$ where $\brho$ and $\bsigma$ may be entangled, pair of sender inputs $(m_0, m_1)$, and QPT non-uniform distinguisher $D^*$, $\simu_{R^*}(\brho)$ sends bit $b$ or $\mathsf{abort}$ to the ideal functionality ${\fot(m_0,m_1,\cdot)}$, and outputs a final state $\brho_{\mathsf{Sim},\out,R^*}$. The output of the ideal functionality to the sender in this experiment is denoted by $\mathsf{OUT}_S$. It must hold that 
\begin{align*}
\bigg|&\Pr\left[D^*\left(\bsigma,\left(\brho_{\mathsf{Sim},\out,R^*}, \mathsf{OUT}_S \right)\right) = 1\right]\\
&-\Pr\left[D^*\left(\bsigma,\left(\brho_{\out,R^*}\langle S(m_0,m_1), R^*(\brho) \rangle, \mathsf{OUT}_S\langle S(m_0,m_1), R^*(\brho) \rangle\right) \right) = 1\right]\bigg| = \negl(\secp).
\end{align*}

\end{itemize}
\end{definition}

\subsection{Our Construction}

We construct simulation-secure quantum oblivious transfer by making use of the following building blocks.

\begin{itemize}
\ifsubmission
    \item Let $(\eecommit,\eedecommit)$ denote any quantum bit commitment scheme satisfying extractability and equivocality. Such a commitment scheme may be obtained by applying the compiler from~\cref{sec: equivocal commitment} to the extractable commitment constructed in~\cref{section: extractable com}. 
\else
    \item Let $(\eecommit,\eedecommit)$ denote any quantum bit commitment scheme satisfying extractability (\cref{def:extcom}) and equivocality (\cref{def:eqcom}). Such a commitment scheme may be obtained by applying the compiler from~\cref{sec: equivocal commitment} to the extractable commitment constructed in~\cref{section: extractable com}. 
\fi
    \item Let $h: \{0,1\}^{k(\secp)} \times \mathcal{X} \rightarrow \{0,1\}^\secp$ be a universal hash with seed length $k(\secp) = \poly(\secp)$ and domain $\mathcal{X}$ the set of all binary strings of length \emph{at most} $8 \secp$. 
\end{itemize}

Our QOT protocol is described in Protocol \ref{fig:ot}, which is essentially the~\cite{FOCS:CreKil88} protocol instantiated with our extractable and equivocal commitment scheme.


\protocol
{\proref{fig:ot}}
{Quantum Oblivious Transfer.}
{fig:ot}
{
\textbf{Sender $S$ Input:} Messages $m_0, m_1 \in \{0,1\}^{\lambda} \times \{0,1\}^{\lambda}$\\
\textbf{Receiver $R$ Input:} Bit $b \in \{0,1\}$\\

\underline{\textbf{The Protocol:}}
\begin{enumerate}
\item $S$ chooses $x \leftarrow \{0, 1\}^{16\lambda}$ and $\theta \leftarrow \{+, \times \}^{16\lambda}$ 
and sends $\ket{x}_\theta$ to $R$. 

\item $R$ chooses $\widehat{\theta} \leftarrow \{+, \times\}^{16\lambda}$
and obtains $\widehat{x} \in \{0, 1\}^{16\lambda}$ by measuring $\ket{x}_{\theta}$
in basis $\widehat{\theta}$. Then, $S$ and $R$ execute $16\lambda$ sessions of $\eecommit$ sequentially with $R$ acting as committer and $S$ as receiver. In session $i$, $R$ commits to the bits $\widehat{\theta}_i,\widehat{x}_i$.


\item $S$ sends a random test subset $T \subset [16\lambda]$ of size $8\lambda$ to $R$. 

\item For each $i \in T$, $R$ and $S$ sequentially execute the $i$'th $\eedecommit$, after which $S$ receives the opened bits $\widehat{\theta}_i,\widehat{x}_i$.


\item $S$ checks that $x_i = \widehat{x}_i$ whenever $\theta_i = \widehat{\theta}_i$. If all tests pass, $S$ accepts,
otherwise, $S$ rejects and aborts.

\item The tested positions are discarded by both parties: $S$ and $R$ restrict $x$ and $\theta$, respectively $\widehat{x}$ and $\widehat{\theta}$, to the $8\secp$ indices $i \in \overline{T}$.
$S$ sends $\theta$ to $R$.

\item $R$ partitions the positions of $\overline{T}$ into two parts: the ``good'' subset $I_b = \{i : \theta_i = \widehat{\theta}_i\}$ and the ``bad'' subset $I_{1-b} = \{i : \theta_i \neq \widehat{\theta}_i\}$. $R$ sends $(I_0,I_1)$ to $S$.

\item $S$ samples seeds $s_0,s_1 \leftarrow \{0,1\}^{k(\lambda)}$ and sends $\left(s_0,h(s_0,x_0) \oplus m_0,s_1,h(s_1,x_1) \oplus m_1\right)$, where $x_0$ is $x$ restricted to the set of indices $I_0$ and $x_1$ is $x$ restricted to the set of indices $I_1$.

\item $R$ decrypts $s_b$ using $\widehat{x}_b$, the string $\widehat{x}$ restricted to the set of indices $I_b$.
\end{enumerate}
}

\ifsubmission
\begin{theorem}
\label{theorem:QOT}
The protocol in Figure \ref{fig:ot} is a simulation-secure QOT protocol whenever $(\eecommit,\eedecommit)$ is instantiated with a quantum bit commitment satisfying extractability and equivocality.
\end{theorem}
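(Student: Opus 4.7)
The plan is to prove receiver security and sender security separately. The simulators will invoke the equivocality and extractability of $\eecommit$ respectively, and the sender security proof will end with a privacy-amplification argument combining the Bouman--Fehr quantum sampling lemma (\cref{lemma:sampling}) with the leftover hash lemma (\cref{thm:privacy-amplification}).

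For receiver security, given a malicious QPT sender $S^*$, I would construct $\simu_{S^*}$ that receives the $16\secp$ BB84 states from $S^*$ and stores them untouched in its quantum workspace. For each of the $16\secp$ commitment sessions in Step~2, $\simu_{S^*}$ runs the equivocator $\cQ_{\com,S^*}$ of $\eecommit$ in place of the honest committer. Upon receiving the test set $T$ from $S^*$ in Step~3, the simulator samples $\hat\theta_i \gets \{+,\times\}$ for $i \in T$, measures the $T$-indexed qubits in those bases to obtain $\hat{x}_i$, and invokes $\cQ_{\open,S^*}$ to open each tested commitment to $(\hat\theta_i,\hat{x}_i)$. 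After receiving $\theta$ in Step~6, the simulator measures every remaining qubit in basis $\theta_i$ to recover the full string $x$ on $\overline{T}$, then sends a partition $(I_0,I_1)$ distributed as if produced by an honest receiver (e.g.\ by sampling a fresh $\hat\theta$ on $\overline{T}$ and setting $I_0 = \{i\in\overline{T}:\theta_i = \hat\theta_i\}$), and from the sender's final message recovers both $(m_0,m_1)$ to forward to the ideal functionality. Indistinguishability follows from a hybrid argument that sequentially replaces each of the $16\secp$ honest commitments with equivocated ones using equivocality of $\eecommit$; the reordering of measurements is information-theoretically invisible to $S^*$, since $S^*$ never touches the receiver's qubits.

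For sender security, given a malicious QPT receiver $R^*$, I would construct $\simu_{R^*}$ that runs the honest sender through the message in Step~6, but replaces each of the $16\secp$ honest commitment receivers in Step~2 with the extractor $\cE_{R^*}$ of $\eecommit$, recovering extracted values $(\tilde\theta_i,\tilde{x}_i)_{i\in[16\secp]}$. After receiving $(I_0,I_1)$ from $R^*$ in Step~7, $\simu_{R^*}$ sets $b^* = \arg\max_{c\in\{0,1\}}|\{i\in I_c:\theta_i = \tilde\theta_i\}|$, queries the ideal functionality on $b^*$ to receive $m_{b^*}$, and sends $\bigl(s_{b^*},\,h(s_{b^*},x_{b^*})\oplus m_{b^*},\,s_{1-b^*},\,u\bigr)$ with uniformly random $u\gets\{0,1\}^\secp$ as the final sender message. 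Indistinguishability from the real experiment proceeds by a hybrid over the $16\secp$ extractions (invoking extractability of $\eecommit$), after which the only remaining difference is that the last ciphertext slot is uniform rather than $h(s_{1-b^*},x_{1-b^*})\oplus m_{1-b^*}$.

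The main obstacle is this final claim: that $h(s_{1-b^*},x_{1-b^*})\oplus m_{1-b^*}$ is statistically close to uniform given $R^*$'s full view. I would prove it by purifying the honest sender so that the measurement in basis $\theta$ is deferred until after Step~7; this change is undetectable to $R^*$ and casts Steps~2--5 as exactly the quantum sampling experiment of \cref{lemma:sampling} applied to the $16\secp$ sender-held EPR halves, with ``committed'' values $(\tilde\theta,\tilde{x})$ determined up to negligible probability by statistical binding of $\eecommit$. Since $\theta$ restricted to $\overline{T}$ is independent of the receiver's view at the point $(I_0,I_1)$ is chosen, a Hoeffding bound shows that the total number of mismatches in $\overline{T}$ is $4\secp \pm O(\sqrt{\secp}\log\secp)$ with overwhelming probability, and by the definition of $b^*$ the set $I_{1-b^*}$ contains at least half of them, i.e.\ $s = \Omega(\secp)$ mismatches. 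Invoking \cref{lemma:sampling} with $S = I_{1-b^*}$ and a sufficiently small constant $\varepsilon$ then yields $\mathbf{H}_\infty(x_{1-b^*}\mid \mathsf{view}_{R^*}) = \Omega(\secp)$, and \cref{thm:privacy-amplification} completes the argument that $m_{1-b^*}$ is statistically hidden, finishing the proof.
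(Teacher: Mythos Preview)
Your overall architecture matches the paper's: equivocality drives receiver security, extractability plus Bouman--Fehr sampling plus leftover hashing drives sender security. The receiver-security simulator and hybrid argument are essentially identical to the paper's.

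There is, however, a genuine gap in your sender-security argument: your rule $b^* = \arg\max_{c}|\{i\in I_c:\theta_i=\tilde\theta_i\}|$ does \emph{not} guarantee that $I_{1-b^*}$ contains $\Omega(\secp)$ mismatches, and your sentence ``by the definition of $b^*$ the set $I_{1-b^*}$ contains at least half of them'' is false. A malicious $R^*$ can make one set tiny. Concretely, suppose $R^*$ sets $|I_1|=1$ and puts a single index with $\theta_i=\tilde\theta_i$ there; then $I_0$ has roughly $4\secp$ matches and $4\secp$ mismatches, while $I_1$ has one match and zero mismatches. Your rule yields $b^*=0$, so you replace the $m_1$-ciphertext by uniform---but in the real experiment $R^*$ knows $x_1$ exactly (it is a single matched bit) and hence learns $m_1$, so the two views are trivially distinguishable. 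The quantity you must threshold on is the number of \emph{mismatches} in a set, not the number of matches: the paper sets $b=1$ iff $I_0$ contains at least $3\secp/2$ mismatches, and then a Hoeffding bound (total mismatches exceed $3\secp$ w.h.p.) ensures $I_{1-b}$ always carries at least $3\secp/2$ mismatches. Taking $b^*=\arg\min_c(\text{mismatches in }I_c)$ would also work.

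Two smaller remarks. First, your claim that ``$\theta$ restricted to $\overline{T}$ is independent of the receiver's view at the point $(I_0,I_1)$ is chosen'' is literally false---$\theta$ is sent to $R^*$ in Step~6, before $R^*$ chooses the partition. The correct (and sufficient) statement is that $\theta$ is independent of the \emph{extracted} $\tilde\theta$, because $R^*$ committed before seeing $\theta$; that is what makes the Hoeffding bound on the total mismatch count go through. Second, you invoke ``statistical binding of $\eecommit$'' to fix $(\tilde\theta,\tilde x)$, but the extractable-and-equivocal commitment here is statistically \emph{hiding}, not statistically binding; the values are fixed by the extractor, which you already ran, so just drop the binding reference.
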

\else
\begin{theorem}
\label{theorem:QOT}
The protocol in Figure \ref{fig:ot} is a QOT protocol satisfying Definition \ref{def:ot} whenever $(\eecommit,\eedecommit)$ is instantiated with a quantum bit commitment satisfying extractability (\cref{def:extcom}) and equivocality (\cref{def:eqcom}).
\end{theorem}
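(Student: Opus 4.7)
The plan is to establish Definition~\ref{def:ot} by constructing separate simulators for malicious senders and malicious receivers, leveraging the equivocality and extractability of $\eecommit$ respectively. This is essentially the argument of~\cite{C:DFLSS09} adapted to our commitment scheme, and it proceeds as outlined in~\cref{subsec:bbcs}.

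For \textbf{receiver security} against a malicious sender $S^*$, I would build $\Sim_{S^*}$ that runs $S^*$ internally and defers all BB84 measurements. In Step~2, instead of performing honest commitments, $\Sim_{S^*}$ invokes the equivocator $\cQ_{S^*,\com}$ for each of the $16\secp$ sequential sessions. After receiving the challenge set $T$ in Step~3, it samples $\hat{\theta}_i$ for $i\in T$, measures the corresponding qubits to obtain $\hat{x}_i$, and runs $\cQ_{S^*,\open}$ to open each of these commitments to $(\hat{\theta}_i,\hat{x}_i)$. Upon receiving $\theta$ from $S^*$ in Step~6, the simulator measures \emph{all} remaining $8\secp$ states in the bases specified by $\theta$, thereby learning the entire string $x$ on $\overline{T}$. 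It then sends an arbitrary balanced partition $(I_0,I_1)$ of $\overline{T}$ to $S^*$, receives the ciphertexts in Step~8, and decrypts both to obtain $(m_0,m_1)$, which it forwards to $\fot$ along with the receiver's input $b$. Indistinguishability from the real execution follows from a hybrid argument over the $16\secp$ commitments invoking the equivocality guarantee of Definition~\ref{def:eqcom}, together with the observation that the deferred measurement strategy is perfectly equivalent to honest measurement in the truly-random-$\theta$ experiment.

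For \textbf{sender security} against a malicious receiver $R^*$, I would build $\Sim_{R^*}$ using the extractor. The simulator plays the honest sender through Step~1, then during Step~2 invokes $\cE_{R^*}$ on each of the $16\secp$ sequential commitments to extract $(\hat\theta_i,\hat x_i)_{i\in[16\secp]}$. It continues to play honestly through Step~7, receives $(I_0,I_1)$ from $R^*$, and sets $b^*$ to be the value in $\{0,1\}$ for which $|\{i\in I_{b^*}:\theta_i=\hat{\theta}_i\}|$ is larger. It queries $\fot$ on $b^*$ to obtain $m_{b^*}$, and in Step~8 sends $(s_0,h(s_0,x_0)\oplus \tilde m_0, s_1, h(s_1,x_1)\oplus \tilde m_1)$, where $\tilde m_{b^*}=m_{b^*}$ and $\tilde m_{1-b^*}$ is a uniformly random string. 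I would argue indistinguishability in two steps: first, move to a hybrid where all commitments are replaced by extracted versions, using extractability (Definition~\ref{def:extcom}) session-by-session; then observe that by statistical binding and the maximality of $b^*$, the set $I_{1-b^*}$ contains at least half of the $\overline{T}$-indices where $\theta_i\neq\hat\theta_i$, which by Hoeffding is $\Omega(\secp)$ with overwhelming probability. On these indices, the measurement-check subprotocol together with the Bouman--Fehr quantum sampling lemma (\cref{lemma:sampling}) guarantees that $R^*$'s state has min-entropy $\Omega(\secp)$ about $x$, so the leftover hash lemma (\cref{thm:privacy-amplification}) implies $h(s_{1-b^*},x_{1-b^*})$ is statistically close to uniform conditioned on $R^*$'s full view; thus replacing $m_{1-b^*}$ by a random string is undetectable.

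The main obstacle is the sender-security hybrid: one must replace the real commitments with extracted ones (a computational step requiring composition across $16\secp$ sequential sessions of $\cE_{R^*}$), while preserving enough information about $R^*$'s residual quantum state to apply the statistical sampling-and-hashing argument afterwards. The key technical subtlety is that the quantum sampling lemma needs the extracted classical values $(\hat\theta_i,\hat x_i)$ to be consistent with a purification of $R^*$'s state in the form demanded by~\cref{lemma:sampling}; once this is set up---mirroring the statistical hiding argument in~\cref{subsec:stat-hiding}---the remaining calculations (Hoeffding for partition imbalance, min-entropy bound, leftover hashing) are routine.
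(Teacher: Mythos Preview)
Your proposal follows the same approach as the paper (equivocator-based simulator for receiver security, extractor-based simulator for sender security, followed by the Bouman--Fehr sampling plus leftover-hash argument), and the high-level structure is correct. Two technical details, however, are stated incorrectly and would need repair.

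First, for receiver security, an ``arbitrary balanced partition'' $(I_0,I_1)$ is not what you want. In the real protocol the honest receiver's partition is determined by the uniformly random $\hat\theta$, so from $S^*$'s view each index of $\overline T$ lands in $I_0$ independently with probability $1/2$. A fixed balanced partition is distinguishable from this; the simulator must send a uniformly random partition (as the paper does), and then the identity with the deferred-measurement hybrid is perfect.

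Second, and more substantively, your rule for extracting $b^*$ in the sender-security simulator does not support the claim you make about it. Choosing $b^*$ to maximize $|\{i\in I_{b^*}:\theta_i=\hat\theta_i\}|$ does \emph{not} imply that $I_{1-b^*}$ contains at least half of the mismatch indices: a malicious $R^*$ can put almost all indices (matches and mismatches alike) into one set, making $I_{1-b^*}$ tiny and containing arbitrarily few mismatches, which kills the min-entropy bound you need for \cref{thm:privacy-amplification}. The paper avoids this by using a threshold on the number of \emph{mismatches} in $I_0$ (set $b=1$ iff $|\{i\in I_0:\theta_i\neq\hat\theta_i\}|\geq 3\lambda/2$), which together with the Hoeffding bound that $\theta,\hat\theta$ differ in more than $3\lambda$ positions guarantees $I_{1-b}$ always has at least $3\lambda/2$ mismatches. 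Once you fix the extraction rule this way, the rest of your argument goes through exactly as in the paper.
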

\fi

We prove that the resulting QOT protocol satisfies standard simulation-based notions of receiver and sender security. 

\ifsubmission
The proof of sender security follows readily from quantum sampling techniques developed by~\cite{C:BouFeh10}, and is deferred to the full version.
\else
\fi

\subsection{Receiver Security}
    Consider any adversarial sender $S^*$ with advice $\brho$. The simulator $\mathsf{Sim}_{S^*}(\brho)$ is constructed as follows.
    \begin{enumerate}
        \item Run the first message algorithm of $S^*$ on input $\brho$ to obtain message $\bpsi$. 
        \item Execute $16\lambda$ sequential sessions of $\eecommit$. In each session, run the equivocator $\cQ_{\cR^*,\com}$, where $\cR^*$ denotes the portion of $S^*$ that participates as receiver in the $i^{th}$ sequential $\eecommit$ session.
        \item Obtain test subset $T$ of size $8\lambda$ from $S^*$.
        \item For each $i \in T$, sample $\widehat{\theta}_i \leftarrow \{+, \times\}$. 
        Obtain $\widehat{x}_i$ by measuring the $i^{th}$ qubit of $\bpsi$ in basis $\widehat{\theta}_i$.
        For each $i \in T$, sequentially execute the equivocal simulator $\cQ_{\cR^*, \open}$ on input $(\widehat{\theta}_i,\widehat{x}_i)$ and the state obtained from $\cQ_{\cR^*, \com}$.
        \item If $S^*$ continues, discard positions indexed by $T$. Obtain $\theta_i$ for $i \in \overline{T}$ from $S^*$, and compute $x_i$ for $i \in \overline{T}$ by measuring the $i^{th}$ qubit of $\bpsi$ in basis $\theta_i$.
        \item For every $i \in \overline{T}$, sample bit $d_i \leftarrow \{0,1\}$. Partition the set $\overline{T}$ into two subsets $(I_0, I_1)$, where for every $i \in \overline{T}$, place $i \in I_0$ if $d= 0$ and otherwise place $i \in I_1$. Send $(I_0, I_1)$ to $S$.
        \item Obtain $(y_0, y_1)$ from $S$. Set $x_0$ to be $x$ restricted to the set of indices $I_0$ and $x_1$ to be $x$ restricted to the set of indices $I_1$.
        For $b \in \{0,1\}$, parse $y_b = (s_b, t_b)$ and compute $m_b = t_b \oplus h(s_b, x_b)$. 
        \item If $S^*$ aborts anywhere in the process, send $\mathsf{abort}$ to the ideal functionality. Otherwise, send $(m_0, m_1)$ to the ideal functionality. Output the final state of $S^*$.
    \end{enumerate}
%

    
     Next, we establish receiver security according to \cref{def:ot}. Towards a contradiction, suppose there exists a 
    bit $b \in \zo$,
    a non-uniform QPT sender $(S^*,\brho)$, a non-uniform QPT distinguisher $(D^*,\bsigma)$, and polynomial $\mathsf{poly}(\cdot)$ s.t.
    \begin{align*}\Big|&\Pr\left[D^*\left(\bsigma,\left(\brho_{\mathsf{Sim},\out,S^*}, \mathsf{OUT}_{R}\right)\right) = 1\right] \\ &- \Pr\left[D^*\left(\bsigma,\left(\brho_{\out,S^*}\langle S^*(\brho), R(b) \rangle, \mathsf{OUT}_{R}\langle S^*(\brho), R(b) \rangle\right)\right) = 1\right] \Big| \geq \frac{1}{\mathsf{poly}(\secp)}.\end{align*}
    Fix any such $b$, sender $(S^*,\brho)$ and distinguisher $(D^*,\bsigma)$. We derive a contradiction via an intermediate hybrid experiment, defined as follows with respect to bit $b$ and sender $(S^*,\brho)$.\\
    
    \noindent {\textbf{$\mathsf{Hyb}$.}} In this hybrid, we generate the QOT receiver commitments via the equivocal simulator $\cQ_{\cR^*}$ (where $\cR^*$ is derived from the malicious QOT sender $S^*$), and otherwise follow the honest QOT receiver's algorithm. 
    
    \begin{enumerate}
        \item Run the first message algorithm of $S^*$ on input $\brho$ to obtain message $\bpsi$. 
        \item Choose $\widehat{\theta} \leftarrow \{+, \times\}^{16\lambda}$ and obtain $\widehat{x} \in \{0,1\}^{16\lambda}$ by measuring $\bpsi$ in basis $\widehat{\theta}$.
        Execute $16\lambda$ sequential sessions of $\eecommit$. In each session, run the equivocator $\cQ_{\cR^*,\com}$, where $\cR^*$ denotes the portion of $S^*$ that participates as receiver in the $i^{th}$ sequential $\eecommit$ session.
        \item Obtain test subset $T$ of size $8\lambda$ from $S^*$.
        \item For each $i \in T$, sequentially execute the equivocal simulator $\cQ_{\cR^*, \open}$ on input $\widehat{\theta}_i,\widehat{x}_i$ and the state obtained from $\cQ_{\cR^*, \com}$.
        \item If $S^*$ continues, discard positions indexed by $T$. Obtain $\theta_i$ for $i \in \overline{T}$ from $S^*$.
        \item Partition the set $\overline{T}$ into two subsets: the ``good'' subset $I_b = \{i:\theta_i = \widehat{\theta}_i\}$ and the ``bad'' subset $I_{1-b} = \{i: \theta_i \neq \widehat{\theta}_i\}$. Send $(I_0, I_1)$ to $S$.
        \item Obtain $(y_0, y_1)$ from $S$. Set $x_b$ to be $\widehat{x}$ restricted to the set of indices $I_b$, and compute and set $m_b = t_b \oplus h(s_b, x_b)$. If $S^*$ aborts anywhere in the process, let $\bot$ be the output of the receiver, otherwise let $m_b$ be the output of the receiver. 
    \end{enumerate}
    The output of $\mathsf{Hyb}$ is the joint distribution of the final state of $S^*$ and the output of the receiver.
Receiver security then follows from the following two claims.

\begin{claim}
    $\Pr\left[D^*\left(\bsigma,\left(\brho_{\mathsf{Sim},\out,S^*}, \mathsf{OUT}_{R}\right)\right) = 1\right] \equiv \Pr\left[D^*(\bsigma,\mathsf{Hyb}) = 1\right].$
\end{claim}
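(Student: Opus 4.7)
The plan is to introduce an intermediate experiment $\mathsf{Hyb}'$ and argue $\mathsf{Sim}_{S^*} \equiv \mathsf{Hyb}' \equiv \mathsf{Hyb}$. Define $\mathsf{Hyb}'$ to proceed exactly as $\mathsf{Sim}_{S^*}$, except that in place of the random bits $d_i$ used to partition $\overline{T}$, the simulator freshly samples $\widehat\theta_i \leftarrow \{+,\times\}$ for each $i \in \overline{T}$ and sets $I_b = \{i \in \overline{T} : \widehat\theta_i = \theta_i\}$, $I_{1-b} = \overline{T} \setminus I_b$.

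First I would verify $\mathsf{Sim}_{S^*} \equiv \mathsf{Hyb}'$. This is a purely classical rewriting of the partition. Each $d_i$ in $\mathsf{Sim}_{S^*}$ is a uniform bit independent of everything else; in $\mathsf{Hyb}'$, since $\widehat\theta_i$ is uniform and independent of $\theta_i$, the indicator $[\widehat\theta_i = \theta_i]$ is likewise a uniform bit independent of $\theta_i$ and of all other randomness. Hence the pair $(I_0,I_1)$ has the same joint distribution in the two experiments, while all remaining steps (the equivocated commitments, openings at positions in $T$, measurements on $\bpsi$, and computation of $(m_0,m_1)$) are performed identically.

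Next I would argue $\mathsf{Hyb}' \equiv \mathsf{Hyb}$. Both experiments now partition $\overline{T}$ by the rule $I_b = \{i : \widehat\theta_i = \theta_i\}$ with uniform $\widehat\theta_i$, so the sets are identically distributed, and both open the commitments at positions $i \in T$ to the same pairs $(\widehat\theta_i,\widehat x_i)$ (with $\widehat x_i$ obtained by measuring $\bpsi_i$ in basis $\widehat\theta_i$). The only residual discrepancy is the measurement basis applied to $\bpsi_i$ for $i \in \overline{T}$: $\mathsf{Hyb}'$ measures in basis $\theta_i$, whereas $\mathsf{Hyb}$ measures in basis $\widehat\theta_i$. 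For $i \in I_b$ the two bases coincide, so the outcomes feeding the computation of $m_b$ are identically distributed, and the receiver's output $\mathsf{OUT}_R = m_b$ matches. For $i \in I_{1-b}$ the bases disagree, but in $\mathsf{Hyb}$ the outcomes are never computed and in $\mathsf{Hyb}'$ they only enter $m_{1-b}$, which the ideal functionality absorbs and returns to the sender but never to the receiver, and which is never transmitted back to $S^*$. By the no-signaling property of quantum measurement — tracing out the outcome of a measurement on subsystem $A$ leaves the marginal on subsystem $B$ unchanged, irrespective of the measurement basis — the joint distribution of $S^*$'s final state and $\mathsf{OUT}_R$ coincides in $\mathsf{Hyb}'$ and $\mathsf{Hyb}$.

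The main obstacle will be cleanly justifying this last step: one must verify that the outcomes at positions $i \in I_{1-b}$ in $\mathsf{Hyb}'$ are genuinely invisible to the distinguisher — in particular that $m_{1-b}$ is absorbed by the ideal functionality and that no side information from these measurements ever leaks into $S^*$'s register — after which the standard deferred-measurement / partial-trace identity closes the gap and yields exact (not merely statistical) equivalence of the two distributions.
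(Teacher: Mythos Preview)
Your proof is correct and takes essentially the same approach as the paper, which disposes of the claim in one sentence by invoking (1) the deferred-measurement principle and (2) the observation that routing $(m_0,m_1)$ through the ideal functionality is a purely syntactic repackaging of computing $m_b$ directly. Your intermediate hybrid $\mathsf{Hyb}'$ and the explicit split into $I_b$ versus $I_{1-b}$ positions simply unpack these two points with more care than the paper does---in particular, you make explicit why the change of measurement basis on the $I_{1-b}$ positions is invisible (their outcomes are absorbed by the ideal functionality and never touch $S^*$'s register or $\mathsf{OUT}_R$), a step the paper leaves implicit under the blanket phrase ``measurements are delayed.''
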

\begin{proof} The only differences in the simulated distribution are (1) that measurements of $S^*$'s initial message $\bpsi$ are delayed (which cannot be noticed by $S^*$), and (2) a syntactic difference in that the ideal functionality is queried to produce the receiver's output. 
\end{proof}
    
\begin{claim}
    There exists a negligible function $\nu(\cdot)$ such that
    $$\Big|\Pr[D^*(\bsigma,\mathsf{Hyb}) = 1] - \Pr\left[D^*\left(\bsigma,\left(\brho_{\out,S^*}\langle S^*(\brho), R(b) \rangle, \mathsf{OUT}_{R}\langle S^*(\brho), R(b) \rangle\right)\right) = 1\right] \Big| = \nu(\secp).$$
\end{claim}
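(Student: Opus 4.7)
The plan is to prove this indistinguishability via a standard hybrid argument over the $16\secp$ sequential $\eecommit$ sessions, reducing each transition to the equivocality of $\eecommit$ (\cref{def:eqcom}). Define sub-hybrids $\mathsf{Hyb}_0, \mathsf{Hyb}_1, \ldots, \mathsf{Hyb}_{16\secp}$ as follows: $\mathsf{Hyb}_j$ first samples $\widehat{\theta}$ and obtains $\widehat{x}$ by measuring $\bpsi$ exactly as in $\mathsf{Hyb}$; then it executes the first $j$ $\eecommit$ sessions (commit and open) using the equivocator $\cQ_{\cR^*,\com}, \cQ_{\cR^*,\open}$ on input $(\widehat{\theta}_i, \widehat{x}_i)$, and the remaining $16\secp - j$ sessions by running the honest committer strategy on the true bits $(\widehat{\theta}_i,\widehat{x}_i)$. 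The post-commitment steps (Steps 3, 5--7 of the protocol) are carried out identically. Note that $\mathsf{Hyb}_{16\secp} \equiv \mathsf{Hyb}$, while $\mathsf{Hyb}_0$ is statistically equivalent to the real interaction, because in the real execution the honest receiver also first obtains $\widehat{x}$ by measuring $\bpsi$ in bases $\widehat{\theta}$ and then commits to those values (postponing or reordering these steps is undetectable since measurements of $\bpsi$ and the commitment registers act on disjoint quantum systems).

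Next, for each $j \in [16\secp]$, I would show $\{\mathsf{Hyb}_{j-1}\} \approx_c \{\mathsf{Hyb}_j\}$ via a direct reduction to the equivocality of $\eecommit$ with respect to the committed value $(\widehat{\theta}_j, \widehat{x}_j)$. The reduction $\cB$ plays the role of the equivocality challenger's counterparty: it internally simulates $S^*$ up through session $j-1$ (using fresh randomness to run the equivocator in sessions $1,\ldots,j-1$, and to sample $\widehat{\theta}, \widehat{x}$); the joint quantum state of $S^*$ at this point, together with $\cB$'s own local state, constitutes the non-uniform quantum advice of the malicious receiver $\cR^* = \cR^*_j$ handed to the equivocality challenger. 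The challenger either runs an honest committer interaction with input $(\widehat{\theta}_j, \widehat{x}_j)$ (corresponding to $\mathsf{Real}_b$, which matches $\mathsf{Hyb}_{j-1}$ at this step) or runs the equivocator on the same value (corresponding to $\mathsf{Ideal}_b$, which matches $\mathsf{Hyb}_j$). After this session concludes, $\cB$ continues simulating sessions $j+1, \ldots, 16\secp$ honestly, then runs Steps 3, 5--7 on behalf of the receiver against $S^*$, invokes $D^*$ on $\bsigma$ together with the final joint distribution, and outputs its guess. Since the bit $b$ only enters through the partition into $(I_0, I_1)$ in Step 6, and $\cB$ knows $b$, this post-processing is efficient. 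Any non-negligible gap between $\mathsf{Hyb}_{j-1}$ and $\mathsf{Hyb}_j$ therefore contradicts \cref{def:eqcom}.

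One technicality is that a single $\eecommit$ session commits to two bits $(\widehat{\theta}_j, \widehat{x}_j)$ rather than one. This is handled either by viewing $\eecommit$ as a bit-string commitment (the compiler of \cref{sec: equivocal commitment} extends to strings by sequential repetition, and equivocality composes under sequential composition) or by inserting an additional two-step sub-hybrid that equivocates $\widehat{\theta}_j$ and $\widehat{x}_j$ one at a time, each transition justified by bit-equivocality against an $\cR^*$ whose advice absorbs the state after the preceding bit commitment. Summing over $O(\secp)$ hops, each contributing a negligible term, yields total advantage $\nu(\secp) = \negl(\secp)$.

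The main obstacle, and the only non-routine aspect, is setting up the non-uniform quantum advice for the equivocality challenger in each reduction: the state entering session $j$ is entangled across $S^*$'s register and all of $\cB$'s internal registers (including qubits of $\bpsi$ not yet measured and leftover state from the equivocators of earlier sessions), so $\cB$ must package this joint state as $\cR^*_j$'s advice $\brho_j$. Because \cref{def:eqcom} permits arbitrary polynomial-size quantum advice, and the state $\brho_j$ is producible by a fixed polynomial-size quantum circuit from $\brho$ and $\cB$'s random coins, this poses no obstacle beyond careful bookkeeping of which registers belong to the adversary versus the reduction.
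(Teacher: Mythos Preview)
Your proposal is correct and follows the same approach as the paper: the paper's proof is a two-sentence sketch that simply observes the only difference between the two distributions is honest committing versus equivocated committing in Step 2, and invokes equivocality of $(\eecommit,\eedecommit)$. Your write-up is the explicit sequential hybrid argument that underlies this sketch (and mirrors the more detailed hybrid argument the paper spells out earlier in the extractability proof of \cref{thm:qextcom}), including the careful handling of non-uniform quantum advice across sessions and the two-bits-per-session subtlety; there is no substantive divergence.
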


\begin{proof}
    The only difference between the two distributions is that in the first, the receiver generates commitments according to the honest commit algorithms of $\eecommit$ while in the second, commitments in step 2 are generated via the equivocal simulator $\cQ_{\cR^*}$ of $\eecommit$. Therefore, this claim follows by the equivocality of 
    $(\eecommit,\eedecommit)$ .
\end{proof}
    
\ifsubmission
\else
\subsection{Sender Security} 
    Consider any adversarial receiver $R^*$ with quantum auxiliary state $\brho$. The simulator $\simu_{R^*}(\brho)$ is constructed as follows:
    \begin{enumerate}
        \item Choose $x \leftarrow \{0,1\}^{16\lambda}$ and $\theta \leftarrow \{+,\times\}^{16\lambda}$. Send $\ket{x}_{\theta}$ to $R^*$. 
        \item Execute $16\lambda$ sequential sessions of $\eecommit$. 
        In the $i^{th}$  session for $i \in [16\lambda]$, run the extractor $\cE_{\cC^*_i, \com}$ where $\cC^*_i$ denotes the portion of $R^*$ that participates as committer in the $i^{th}$ sequential $\eecommit$ session.
        Obtain from $\cE_{\cC^*_i, \com}$ the extracted values $(\widehat{\theta}_i, \widehat{x}_i)$.
        \item Sample and send a random test subset $T \subset [16\lambda]$ of size $8\lambda$ to $R^*$.
        \item For each index in $T$ execute the opening phases $\eedecommit$ with $R^*$, and abort if any fail. Discard the opened values, and relabel the remaining indices from $1$ to $8 \secp$.
        
        \item Check that $x_i = \widehat{x}_i$ whenever $\theta_i = \widehat{\theta}_i$, and if not abort.
        \item Send $\theta$ to $R^*$.
        \item Obtain $R^*$'s message, which specifies sets $I_0$ and $I_1 = [8\secp] \setminus I_0$. Let $S$ be the set of indices in $I_0$ such that $\theta_i \neq \widehat{\theta}_i$. If $|S| \geq 3\secp/2$ then set $b = 1$ and otherwise set $b=0$; intuitively, if $|S| \geq 3\secp/2$, we treat $I_0$ as the ``bad'' subset $I_{1-b}$ where $\theta_i \neq \hat{\theta}_i$ for many indices $i$.
        \item Obtain the output $m_b$ of $\fot$ on input bit $b$. Set $m_{1-b} = 0^{\lambda}$.
        \item Let $x_0,x_1$ denote $x$ restricted to indices $I_0,I_1$ respectively. Sample seeds $s_0,s_1 \leftarrow \{0,1\}^{k(\lambda)}$ and send $s_0,h(s_0,x_0) \oplus m_0,s_1,h(s_1,x_1) \oplus m_1$. Output the final state of $R^*$.
    \end{enumerate}
    

We will now establish sender security according to \cref{def:ot}. Towards a contradiction, suppose there exists a  pair of messages $(m_0, m_1) \in \{0,1\}^{\secp}$, a non-uniform QPT receiver $(R^*,\brho)$, a non-uniform QPT distinguisher $(D^*,\bsigma)$, and a polynomial $\mathsf{poly}(\cdot)$ such that 
    \begin{align*}\Big|&\Pr\left[D^*\left(\bsigma,\left(\brho_{\mathsf{Sim},\out,R^*}, \mathsf{OUT}_S\right)\right) = 1\right] \\&- \Pr\left[D^*\left(\bsigma,\left(\brho_{\out,R^*}\langle S(m_0, m_1), R^*(\brho) \rangle, \mathsf{OUT}_{S} \langle S(m_0, m_1), R^*(\brho) \rangle\right)\right) = 1\right] \Big| \geq \frac{1}{\mathsf{poly}(\secp)}.
    \end{align*}
    Fix any such $(m_0, m_1)$, receiver $(R^*,\brho)$ and distinguisher $(D^*,\bsigma)$. We derive a contradiction via an intermediate hybrid experiment, defined as follows. Like the simulator, this hybrid simulates and extracts from the receiver's commitments, but does not yet replace the sender string $m_{1-b}$ with $0^\lambda$. \\
    



\noindent $\mathsf{Hyb}$. Consider any adversarial receiver $R^*$ with advice $\brho$. The output of $\mathsf{Hyb}$ depends on $(m_0, m_1)$ and is generated as follows.
\begin{enumerate}
        \item Choose $x \leftarrow \{0,1\}^{16\lambda}$ and $\theta \leftarrow \{+,\times\}^{16\lambda}$. Send $\ket{x}_{\theta}$ to $R^*$. 
        \item Execute $16\lambda$ sequential sessions of $\eecommit$. 
        In the $i^{th}$  session for $i \in [16\lambda]$, run the extractor $\cE_{\cC^*_i, \com}$ where $\cC^*_i$ denotes the portion of $R^*$ that participates as committer in the $i^{th}$ sequential $\eecommit$ session.
        Obtain from $\cE_{\cC^*_i, \com}$ the extracted values $\widehat{\theta}_i, \widehat{x}_i$.
        \item Sample and send a random test subset $T \subset [16\lambda]$ of size $8\lambda$ to $R^*$.
        \item For each index in $T$ execute the opening phases $\eedecommit$ with $R^*$, and abort if any fail. Discard the opened values, and relabel the remaining indices from $1$ to $8 \secp$.
        \item Check that $x_i = \widehat{x}_i$ whenever $\theta_i = \widehat{\theta}_i$, and if not abort.
        \item Obtain $R^*$'s message, which specifies sets $I_0$ and $I_1$. 
        \item Let $x_0,x_1$ denote $x$ restricted to indices $I_0,I_1$ respectively. Sample seeds $s_0,s_1 \leftarrow \{0,1\}^{k(\lambda)}$ and send $s_0,h(s_0,x_0) \oplus m_0,s_1,h(s_1,x_1) \oplus m_1$. Output the final state of $R^*$.
    \end{enumerate}

This hybrid distribution is indistinguishable to the real distribution due to the extractability (Definition \ref{def:extcom}) of the commitment. Indeed, the values extracted from the simulator are used in place of the values opened by $R^*$. Conditioned on the opening phases succeeding, the joint distribution of the view of $R^*$ and these values will be equal with all but negligible probability. This can be proved via a sequential hybrid argument similar to the one in the proof of Theorem \ref{thm:equiv-compiler}.


Now, the only difference between this hybrid experiment and the simulation is the value of the string $m_{1-b}$, where $b$ is the bit extracted in the simulated game. Thus, to complete the proof, it suffices to argue that in the simulation, the string $h(s_{1-b},x_{1-b})$, which is used to mask $m_{1-b}$, is statistically close to a uniformly random string. 

The remaining argument is essentially identical in structure to the argument of statistical hiding of the extractable commitment in~\cref{subsec:stat-hiding}, which in turn is a direct application of quantum sampling techniques developed by~\cite{C:BouFeh10}. We therefore sketch the following statistical steps, highlighting where the argument differs from~\cref{subsec:stat-hiding}.

The idea, as in~\cref{subsec:stat-hiding}, is to consider a purification of the strategy for the challenger implementing $\mathsf{Hyb}$. In particular, suppose instead of the challenger generating $\ket{x}_{\theta}$ by sampling $x,\theta$ at random, it generates $16\secp$ EPR pairs and sends $R^*$ half of each pair, keeping the other half of each pair for itself. The challenger still samples $\theta \gets \{+,\times\}^{16\secp}$, but does not yet measure to determine $x$. When the challenger implementing $\mathsf{Hyb}$ needs to check $x_i = \hat{x}_i$, it obtains $x_i$ by measuring its $i$th EPR pair half in basis $\hat{\theta}_i$. Instead of invoking statistical binding to extract the committed values of $\hat{x},\hat{\theta}$ as done in~\cref{subsec:stat-hiding}, we can simply run the $\mathsf{EECommit}$ extractor to extract $\hat{x},\hat{\theta}$. Now, if $R^*$ can successfully open its commitments and the challenger implementing $\mathsf{Hyb}$ does not abort, this corresponds to a successful execution of the quantum sampling experiment in~\cref{lemma:sampling} for $n = 16\secp$ and the extracted values of $\hat{x},\hat{\theta}$.

Now, consider the two subsets $I_0,I_1 \subset [8\secp]$ sent by $R^*$. For $c \in \{0,1\}$, let $\mathsf{wt}(I_c)$ be the number of indices in $I_c$ such that $\theta_i \neq \widehat{\theta}_i$. If $\mathsf{wt}(I_0) > 3\secp/2$, then $b$ will be set to 1, and by~\cref{lemma:sampling} for $S = I_0$ and $s = \mathsf{wt}(I_0) > 3\secp/2$ followed by an application of the leftover hash lemma (\cref{thm:privacy-amplification}), $(s_0,h(s_0,x_0))$ will be statistically close to a uniformly random string. Otherwise, we have $\mathsf{wt}(I_0) \leq 3\secp/2$. By Hoeffding's inequality, with probability $1 - \negl(\secp)$, the strings $\theta$ and $\widehat{\theta}$ must differ in more than $3\lambda$ positions, which means in this case we have $\mathsf{wt}(I_1) > 3\secp/2$. As a result, we have $b = 0$, and by the same argument as above, $(s_1,h(s_1,x_1))$ is statistically close to a uniformly random string.


\fi

Finally, Theorems \ref{thm:equiv-compiler}, Theorem \ref{thm:qextcom}, and Theorem \ref{theorem:QOT} give the following.
\begin{corollary}
\label{cor:two}
Quantum oblivious transfer (QOT) satisfying Definition \ref{def:ot} can be based on black-box use of statistically binding bit commitments, or on black-box use of quantum-hard one-way functions.
\end{corollary}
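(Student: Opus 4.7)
The plan is to chain the three main theorems of the paper in the order suggested by the four-step recipe from Section~\ref{subsec:overview}. First, I would invoke the fact (established in the paper, following Naor's commitment and HILL-style constructions surviving in the quantum setting) that black-box use of quantum-hard one-way functions yields a quantum-computationally hiding, statistically binding bit commitment; this is the base primitive for the first branch of the corollary, while the second branch simply takes such a commitment as the starting assumption. All subsequent transformations will only use this commitment as an oracle, so black-boxness is preserved throughout.

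Next, I would apply the equivocality compiler of Theorem~\ref{thm:equiv-compiler} to this base commitment. Taking $\mathcal{X} = $ statistical binding and $\mathcal{Y} = $ computationally in the theorem statement, the output is a computationally-hiding, statistically binding, and equivocal quantum bit commitment. I would then feed this equivocal commitment into the extractability compiler of Theorem~\ref{thm:qextcom}, which turns any statistically binding equivocal quantum bit commitment into a statistically hiding, extractable quantum bit commitment. Finally, I would invoke Theorem~\ref{thm:equiv-compiler} a second time, this time with $\mathcal{X} = $ quantum extractability and $\mathcal{Y} = $ statistically, to obtain a commitment that is simultaneously statistically hiding, extractable, and equivocal — precisely the $(\eecommit, \eedecommit)$ primitive required by the QOT construction of Section~\ref{sec: OT from eecom}.

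With an extractable-and-equivocal commitment in hand, the corollary follows immediately from Theorem~\ref{theorem:QOT}, which states that Protocol~\ref{fig:ot} instantiated with any such commitment yields a simulation-secure QOT per Definition~\ref{def:ot}. Each step in the chain is black-box in its input primitive — the two equivocality compilers only make oracle queries to the underlying commit/decommit algorithms, and the extractability compiler treats its equivocal commitment likewise — so composing them preserves the black-box property relative to the base statistically binding commitment (and hence relative to the underlying one-way function in the OWF branch).

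I do not anticipate any real obstacle here, since this corollary is purely a composition statement: the heavy lifting has been done in Theorems~\ref{thm:equiv-compiler}, \ref{thm:qextcom}, and \ref{theorem:QOT}. The only bookkeeping worth flagging is ensuring that the hypotheses of each theorem genuinely match the conclusions of the previous one (in particular, that Theorem~\ref{thm:equiv-compiler} applied the second time really does preserve the extractability coming out of Theorem~\ref{thm:qextcom}, which is explicitly covered by the ``$\mathcal{X} = $ quantum extractability'' case of that theorem), and that ``black-box use'' propagates through each compiler, which is evident by inspection of Protocols~\ref{fig:eqextcom}, \ref{fig:ext-com}, and \ref{fig:ot}.
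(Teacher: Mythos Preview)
Your proposal is correct and follows exactly the paper's approach: the paper's proof of this corollary is a single sentence citing Theorems~\ref{thm:equiv-compiler}, \ref{thm:qextcom}, and \ref{theorem:QOT}, and your chain of applications (Naor from OWFs $\to$ equivocality compiler $\to$ extractability compiler $\to$ equivocality compiler again $\to$ QOT) is precisely the recipe laid out in Section~\ref{subsec:overview}. The bookkeeping you flag about hypotheses matching conclusions and black-boxness propagating is accurate and matches the paper's own justification.
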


\section{Acknowledgments}
The authors are grateful to the Simons Institute programs on {\em Lattices: Algorithms, Complexity and Cryptography}, and {\em The Quantum Wave in Computing} for fostering this collaboration. Thanks also to Alex Grilo, Huijia Lin, Fang Song, and Vinod Vaikuntanathan for discussions about similarities and differences with \cite{GLSV}. A.C. is supported by Vannevar Bush Faculty Fellowship N00014-17-1-3025. This material is based on work supported in part by DARPA under Contract No. HR001120C0024 (for DK). Any opinions, findings and conclusions or recommendations expressed in this material are those of the author(s) and do not necessarily reflect the views of the United States Government or DARPA.

\ifsubmission
\bibliographystyle{splncs04}
\else
\bibliographystyle{alpha}
\fi
\bibliography{abbrev3,custom,crypto,main}

\newcommand{\etalchar}[1]{$^{#1}$}
\begin{thebibliography}{YWLQ15}

\bibitem[Aar09]{Aaronson09}
Scott Aaronson.
\newblock Quantum copy-protection and quantum money.
\newblock In {\em 2009 24th Annual IEEE Conference on Computational
  Complexity}, pages 229--242. IEEE, 2009.

\bibitem[ABG{\etalchar{+}}20]{ABGKM20}
Amit Agarwal, James Bartusek, Vipul Goyal, Dakshita Khurana, and Giulio
  Malavolta.
\newblock Post-quantum multi-party computation.
\newblock Cryptology ePrint Archive, Report 2020/1395, 2020.
\newblock \url{https://eprint.iacr.org/2020/1395}.

\bibitem[AC02]{AdcockCleve02}
Mark Adcock and Richard Cleve.
\newblock A quantum goldreich-levin theorem with cryptographic applications.
\newblock In {\em Annual Symposium on Theoretical Aspects of Computer Science},
  pages 323--334. Springer, 2002.

\bibitem[BB84]{BenBra84}
Charles~H Bennett and Gilles Brassard.
\newblock Quantum cryptography: Public key distribution and coin tossing.
\newblock In {\em Proceedings of the IEEE International Conference on
  Computers, Systems, and Signal Processing}, pages 175--179, 1984.

\bibitem[BBCS92]{C:BBCS91}
Charles~H. Bennett, Gilles Brassard, Claude Cr{\'e}peau, and
  Marie-H{\'e}l{\`e}ne Skubiszewska.
\newblock Practical quantum oblivious transfer.
\newblock In Joan Feigenbaum, editor, {\em CRYPTO'91}, volume 576 of {\em
  {LNCS}}, pages 351--366. Springer, Heidelberg, August 1992.

\bibitem[BCG{\etalchar{+}}06]{FOCS:BCGHS06}
Michael {Ben-Or}, Claude Cr{\'e}peau, Daniel Gottesman, Avinatan Hassidim, and
  Adam Smith.
\newblock Secure multiparty quantum computation with (only) a strict honest
  majority.
\newblock In {\em 47th FOCS}, pages 249--260. {IEEE} Computer Society Press,
  October 2006.

\bibitem[BCJL93]{FOCS:BCJL93}
Gilles Brassard, Claude Cr{\'e}peau, Richard Jozsa, and Denis Langlois.
\newblock A quantum bit commitment scheme provably unbreakable by both parties.
\newblock In {\em 34th FOCS}, pages 362--371. {IEEE} Computer Society Press,
  November 1993.

\bibitem[BF10]{C:BouFeh10}
Niek~J. Bouman and Serge Fehr.
\newblock Sampling in a quantum population, and applications.
\newblock In Tal Rabin, editor, {\em CRYPTO~2010}, volume 6223 of {\em {LNCS}},
  pages 724--741. Springer, Heidelberg, August 2010.

\bibitem[BM09]{C:BarMah09}
Boaz Barak and Mohammad {Mahmoody-Ghidary}.
\newblock {Merkle} puzzles are optimal - an {$O(n^2)$}-query attack on any key
  exchange from a random oracle.
\newblock In Shai Halevi, editor, {\em CRYPTO~2009}, volume 5677 of {\em
  {LNCS}}, pages 374--390. Springer, Heidelberg, August 2009.

\bibitem[BS20]{STOC:BS20}
Nir Bitansky and Omri Shmueli.
\newblock Post-quantum zero knowledge in constant rounds.
\newblock In Konstantin Makarychev, Yury Makarychev, Madhur Tulsiani, Gautam
  Kamath, and Julia Chuzhoy, editors, {\em Proccedings of the 52nd Annual {ACM}
  {SIGACT} Symposium on Theory of Computing, {STOC} 2020, Chicago, IL, USA,
  June 22-26, 2020}, pages 269--279. {ACM}, 2020.

\bibitem[CGS02]{STOC:CreGotSmi02}
Claude Cr{\'e}peau, Daniel Gottesman, and Adam Smith.
\newblock Secure multi-party quantum computation.
\newblock In {\em 34th ACM STOC}, pages 643--652. {ACM} Press, May 2002.

\bibitem[CK88]{FOCS:CreKil88}
Claude Cr{\'e}peau and Joe Kilian.
\newblock Achieving oblivious transfer using weakened security assumptions
  (extended abstract).
\newblock In {\em 29th FOCS}, pages 42--52. {IEEE} Computer Society Press,
  October 1988.

\bibitem[CLS01]{EC:CreLegSal01}
Claude Cr{\'e}peau, Fr{\'e}d{\'e}ric L{\'e}gar{\'e}, and Louis Salvail.
\newblock How to convert the flavor of a quantum bit commitment.
\newblock In Birgit Pfitzmann, editor, {\em EUROCRYPT~2001}, volume 2045 of
  {\em {LNCS}}, pages 60--77. Springer, Heidelberg, May 2001.

\bibitem[CvT95]{C:CreVanTap95}
Claude Cr{\'e}peau, Jeroen {van de Graaf}, and Alain Tapp.
\newblock Committed oblivious transfer and private multi-party computation.
\newblock In Don Coppersmith, editor, {\em CRYPTO'95}, volume 963 of {\em
  {LNCS}}, pages 110--123. Springer, Heidelberg, August 1995.

\bibitem[{Dac}16]{TCC:Dachman-Soled16}
Dana {Dachman-Soled}.
\newblock Towards non-black-box separations of public key encryption and one
  way function.
\newblock In Martin Hirt and Adam~D. Smith, editors, {\em TCC~2016-B, Part~II},
  volume 9986 of {\em {LNCS}}, pages 169--191. Springer, Heidelberg,
  October~/~November 2016.

\bibitem[DFL{\etalchar{+}}09]{C:DFLSS09}
Ivan Damg{\aa}rd, Serge Fehr, Carolin Lunemann, Louis Salvail, and Christian
  Schaffner.
\newblock Improving the security of quantum protocols via commit-and-open.
\newblock In Shai Halevi, editor, {\em CRYPTO~2009}, volume 5677 of {\em
  {LNCS}}, pages 408--427. Springer, Heidelberg, August 2009.

\bibitem[DGJ{\etalchar{+}}20]{EC:DGJMS20}
Yfke Dulek, Alex~B. Grilo, Stacey Jeffery, Christian Majenz, and Christian
  Schaffner.
\newblock Secure multi-party quantum computation with a dishonest majority.
\newblock In Anne Canteaut and Yuval Ishai, editors, {\em EUROCRYPT~2020,
  Part~III}, volume 12107 of {\em {LNCS}}, pages 729--758. Springer,
  Heidelberg, May 2020.

\bibitem[DI05]{C:DamIsh05}
Ivan Damg{\aa}rd and Yuval Ishai.
\newblock Constant-round multiparty computation using a black-box pseudorandom
  generator.
\newblock In Victor Shoup, editor, {\em CRYPTO~2005}, volume 3621 of {\em
  {LNCS}}, pages 378--394. Springer, Heidelberg, August 2005.

\bibitem[DNS10]{C:DupNieSal10}
Fr{\'e}d{\'e}ric Dupuis, Jesper~Buus Nielsen, and Louis Salvail.
\newblock Secure two-party quantum evaluation of unitaries against specious
  adversaries.
\newblock In Tal Rabin, editor, {\em CRYPTO~2010}, volume 6223 of {\em {LNCS}},
  pages 685--706. Springer, Heidelberg, August 2010.

\bibitem[DNS12]{DNS12}
Fr{\'{e}}d{\'{e}}ric Dupuis, Jesper~Buus Nielsen, and Louis Salvail.
\newblock Actively secure two-party evaluation of any quantum operation.
\newblock In Reihaneh Safavi{-}Naini and Ran Canetti, editors, {\em Advances in
  Cryptology - {CRYPTO} 2012. Proceedings}, volume 7417 of {\em LNCS}, pages
  794--811. Springer, 2012.

\bibitem[FUYZ20]{FUYZ20}
Junbin Fang, Dominique Unruh, Jun Yan, and Dehua Zhou.
\newblock How to base security on the perfect/statistical binding property of
  quantum bit commitment?
\newblock Cryptology ePrint Archive, Report 2020/621, 2020.
\newblock \url{https://eprint.iacr.org/2020/621}.

\bibitem[GLSV20]{GLSV}
Alex~B. Grilo, Huijia Lin, Fang Song, and Vinod Vaikuntanathan.
\newblock Oblivious transfer is in miniqcrypt.
\newblock {\em CoRR}, abs/2011.14980, 2020.

\bibitem[GMW87]{C:GolMicWig86}
Oded Goldreich, Silvio Micali, and Avi Wigderson.
\newblock How to prove all {NP}-statements in zero-knowledge, and a methodology
  of cryptographic protocol design.
\newblock In Andrew~M. Odlyzko, editor, {\em CRYPTO'86}, volume 263 of {\em
  {LNCS}}, pages 171--185. Springer, Heidelberg, August 1987.

\bibitem[HILL99]{HILL99}
Johan H{\aa}stad, Russell Impagliazzo, Leonid~A. Levin, and Michael Luby.
\newblock A pseudorandom generator from any one-way function.
\newblock {\em {SIAM} Journal on Computing}, 28(4):1364--1396, 1999.

\bibitem[HSS11]{C:HalSmiSon11}
Sean Hallgren, Adam Smith, and Fang Song.
\newblock Classical cryptographic protocols in a quantum world.
\newblock In Phillip Rogaway, editor, {\em CRYPTO~2011}, volume 6841 of {\em
  {LNCS}}, pages 411--428. Springer, Heidelberg, August 2011.

\bibitem[IL89]{FOCS:ImpLub89}
Russell Impagliazzo and Michael Luby.
\newblock One-way functions are essential for complexity based cryptography
  (extended abstract).
\newblock In {\em 30th FOCS}, pages 230--235. {IEEE} Computer Society Press,
  October~/~November 1989.

\bibitem[ILL89]{STOC:ImpLevLub89}
Russell Impagliazzo, Leonid~A. Levin, and Michael Luby.
\newblock Pseudo-random generation from one-way functions (extended abstracts).
\newblock In {\em 21st ACM STOC}, pages 12--24. {ACM} Press, May 1989.

\bibitem[Imp95]{Impagliazzo}
Russell Impagliazzo.
\newblock A personal view of average-case complexity.
\newblock In {\em Proceedings of Structure in Complexity Theory. Tenth Annual
  IEEE Conference}, pages 134--147. IEEE, 1995.

\bibitem[IPS08]{C:IshPraSah08}
Yuval Ishai, Manoj Prabhakaran, and Amit Sahai.
\newblock Founding cryptography on oblivious transfer - efficiently.
\newblock In David Wagner, editor, {\em CRYPTO~2008}, volume 5157 of {\em
  {LNCS}}, pages 572--591. Springer, Heidelberg, August 2008.

\bibitem[IR90]{C:ImpRud88}
Russell Impagliazzo and Steven Rudich.
\newblock Limits on the provable consequences of one-way permutations.
\newblock In Shafi Goldwasser, editor, {\em CRYPTO'88}, volume 403 of {\em
  {LNCS}}, pages 8--26. Springer, Heidelberg, August 1990.

\bibitem[Kil88]{STOC:Kilian88}
Joe Kilian.
\newblock Founding cryptography on oblivious transfer.
\newblock In {\em 20th ACM STOC}, pages 20--31. {ACM} Press, May 1988.

\bibitem[KRS09]{KonRenSch09}
Robert Konig, Renato Renner, and Christian Schaffner.
\newblock The operational meaning of min-and max-entropy.
\newblock {\em IEEE Transactions on Information theory}, 55(9):4337--4347,
  2009.

\bibitem[LC97]{LoChau97}
Hoi-Kwong Lo and Hoi~Fung Chau.
\newblock Is quantum bit commitment really possible?
\newblock {\em Physical Review Letters}, 78(17):3410, 1997.

\bibitem[LR86]{STOC:LubRac86}
Michael Luby and Charles Rackoff.
\newblock Pseudo-random permutation generators and cryptographic composition.
\newblock In {\em 18th ACM STOC}, pages 356--363. {ACM} Press, May 1986.

\bibitem[May97]{Mayers97}
Dominic Mayers.
\newblock Unconditionally secure quantum bit commitment is impossible.
\newblock {\em Physical review letters}, 78(17):3414, 1997.

\bibitem[MMP14]{TCC:MahMajPra14}
Mohammad Mahmoody, Hemanta~K. Maji, and Manoj Prabhakaran.
\newblock On the power of public-key encryption in secure computation.
\newblock In Yehuda Lindell, editor, {\em TCC~2014}, volume 8349 of {\em
  {LNCS}}, pages 240--264. Springer, Heidelberg, February 2014.

\bibitem[MS94]{MayersSalvail94}
Dominic Mayers and Louis Salvail.
\newblock Quantum oblivious transfer is secure against all individual
  measurements.
\newblock In {\em Proceedings Workshop on Physics and Computation.
  PhysComp'94}, pages 69--77. IEEE, 1994.

\bibitem[Nao91]{JC:Naor91}
Moni Naor.
\newblock Bit commitment using pseudorandomness.
\newblock {\em Journal of Cryptology}, 4(2):151--158, January 1991.

\bibitem[PW09]{TCC:PasWee09}
Rafael Pass and Hoeteck Wee.
\newblock Black-box constructions of two-party protocols from one-way
  functions.
\newblock In Omer Reingold, editor, {\em TCC~2009}, volume 5444 of {\em
  {LNCS}}, pages 403--418. Springer, Heidelberg, March 2009.

\bibitem[Ren08]{Renner08}
Renato Renner.
\newblock Security of quantum key distribution.
\newblock {\em International Journal of Quantum Information}, 6(01):1--127,
  2008.

\bibitem[RK05]{TCC:RenKon05}
Renato Renner and Robert K{\"o}nig.
\newblock Universally composable privacy amplification against quantum
  adversaries.
\newblock In Joe Kilian, editor, {\em TCC~2005}, volume 3378 of {\em {LNCS}},
  pages 407--425. Springer, Heidelberg, February 2005.

\bibitem[RW05]{AC:RenWol05}
Renato Renner and Stefan Wolf.
\newblock Simple and tight bounds for information reconciliation and privacy
  amplification.
\newblock In Bimal~K. Roy, editor, {\em ASIACRYPT~2005}, volume 3788 of {\em
  {LNCS}}, pages 199--216. Springer, Heidelberg, December 2005.

\bibitem[Unr07]{C:Unruh07}
Dominique Unruh.
\newblock Random oracles and auxiliary input.
\newblock In Alfred Menezes, editor, {\em CRYPTO~2007}, volume 4622 of {\em
  {LNCS}}, pages 205--223. Springer, Heidelberg, August 2007.

\bibitem[Unr14]{C:Unruh14}
Dominique Unruh.
\newblock Quantum position verification in the random oracle model.
\newblock In Juan~A. Garay and Rosario Gennaro, editors, {\em CRYPTO~2014,
  Part~II}, volume 8617 of {\em {LNCS}}, pages 1--18. Springer, Heidelberg,
  August 2014.

\bibitem[Unr16]{EC:Unruh16}
Dominique Unruh.
\newblock Computationally binding quantum commitments.
\newblock In Marc Fischlin and Jean-S{\'{e}}bastien Coron, editors, {\em
  EUROCRYPT~2016, Part~II}, volume 9666 of {\em {LNCS}}, pages 497--527.
  Springer, Heidelberg, May 2016.

\bibitem[Wat06]{STOC:Watrous06}
John Watrous.
\newblock Zero-knowledge against quantum attacks.
\newblock In Jon~M. Kleinberg, editor, {\em 38th ACM STOC}, pages 296--305.
  {ACM} Press, May 2006.

\bibitem[WW06]{EC:WolWul06}
Stefan Wolf and J{\"u}rg Wullschleger.
\newblock Oblivious transfer is symmetric.
\newblock In Serge Vaudenay, editor, {\em EUROCRYPT~2006}, volume 4004 of {\em
  {LNCS}}, pages 222--232. Springer, Heidelberg, May~/~June 2006.

\bibitem[Yao95]{STOC:Yao95}
Andrew Chi-Chih Yao.
\newblock Security of quantum protocols against coherent measurements.
\newblock In {\em 27th ACM STOC}, pages 67--75. {ACM} Press, May~/~June 1995.

\bibitem[YWLQ15]{ywlq}
Jun Yan, Jian Weng, Dongdai Lin, and Yujuan Quan.
\newblock Quantum bit commitment with application in quantum zero-knowledge
  proof (extended abstract).
\newblock In Khaled~M. Elbassioni and Kazuhisa Makino, editors, {\em Algorithms
  and Computation - 26th International Symposium, {ISAAC} 2015, Nagoya, Japan,
  December 9-11, 2015, Proceedings}, volume 9472 of {\em Lecture Notes in
  Computer Science}, pages 555--565. Springer, 2015.

\bibitem[Zha12]{FOCS:Zhandry12}
Mark Zhandry.
\newblock How to construct quantum random functions.
\newblock In {\em 53rd FOCS}, pages 679--687. {IEEE} Computer Society Press,
  October 2012.

\end{thebibliography}

\appendix

\newpage

\clearpage

\end{document}